\documentclass[a4paper, 11pt]{article}
\usepackage[affil-it]{authblk}
\usepackage{ifpdf}
\usepackage{amsthm,amsfonts,amssymb,amsmath,euscript,comment,mathtools}
\usepackage{tikz}
\usetikzlibrary{decorations.pathmorphing}
\usetikzlibrary{arrows.meta}
\usepackage{mathabx}
\usepackage[utf8]{inputenc}
\usepackage[T1]{fontenc}
\usepackage{graphicx}
\usepackage{color}
\usepackage{textcomp}
\usepackage{bbm}
\usepackage{slashed}
\usepackage[colorlinks=true,linkcolor=red,citecolor=blue]{hyperref}
\usepackage{mathrsfs}
\usepackage{dsfont}
\usepackage{geometry}
\usepackage[all]{xy}
\usepackage[dvipsnames]{pstricks}
\usepackage{epsfig}
\usepackage[nottoc]{tocbibind}
\allowdisplaybreaks[3]
\geometry{hmargin=2.3cm, vmargin=3cm}
\ifpdf
\pdfinfo{}
\renewcommand{\c}{\cdot}
\fi

\usepackage{bm}

\DeclarePairedDelimiter{\norm}{\lVert}{\rVert}

\def\bF{\,^{(\mathbf{F})} \hspace{-2.2pt}\b}
\def\bbF{\,^{(\mathbf{F})} \hspace{-2.2pt}\bb}
\def\rhoF{\,^{(\mathbf{F})} \hspace{-2.2pt}\rho}

\renewcommand{\a}{{\alpha}}

\renewcommand{\b}{{\beta}}

\newcommand{\ga}{\gamma}
\newcommand{\Ga}{\Gamma}
\newcommand{\de}{\delta}
\newcommand{\De}{\Delta}
\newcommand{\ep}{\epsilon}

\newcommand{\la}{\lambda}
\newcommand{\La}{\Lambda}

\newcommand{\Si}{\Sigma}
\newcommand{\om}{\omega}

\newcommand{\vphi}{\varphi}

\renewcommand{\th}{\theta}

\newcommand{\ze}{\zeta}

\newcommand{\nab}{\nabla}
\newcommand{\vsi}{{\varsigma}}
\newcommand{\Up}{\Upsilon}

\newcommand{\ombc}{\widecheck{\omb}}
\newcommand{\Gac}{\widecheck{\Ga}}

\newcommand{\rhoc}{\widecheck{\rho}}
\newcommand{\rhoFc}{\widecheck{\rhoF}}

\newcommand{\Ombc}{\widecheck{\underline{\Omega}}}

\renewcommand{\ombc}{\underline{\widecheck{\omega}}}

\newcommand{\trchc}{\widecheck{\tr\chi}}
\newcommand{\trchbc}{\widecheck{\tr\chib}}

\newcommand{\pr}{{\partial}}
\newcommand{\les}{\lesssim}
\renewcommand{\c}{\cdot}

\newcommand{\trch}{{\tr\chi}}
\newcommand{\chih}{{\widehat \chi}}
\newcommand{\chib}{{\underline \chi}}
\newcommand{\chibh}{{\underline{\chih}}}

\newcommand{\etab}{{\underline \eta}}
\newcommand{\omb}{{\underline{\om}}}
\newcommand{\bb}{{\underline{\b}}}
\renewcommand{\aa}{\protect\underline{\a}}
\newcommand{\xib}{{\underline \xi}}

\newcommand{\ug}{\overset{\circ}{u}}
\newcommand{\sg}{\overset{\circ}{s}}
\newcommand{\rg}{\overset{\circ}{r}}
\newcommand{\mg}{\overset{\circ}{m}}

\newcommand{\ovr}{\rg}

\newcommand{\epg}{\overset{\circ}{\ep}}
\newcommand{\lapzero}{\overset{\circ}{\De}}

\newcommand{\ovg}{\overset{\circ}{g}}

\newcommand{\F}{\mathbf{F}}

\renewcommand{\div}{\sdiv}

\newcommand{\lap}{\De}

\newcommand{\rhod}{\,\dual\hspace{-2pt}\rho}

\newcommand{\Sg}{\overset{\circ}{S}}

\newcommand{\RRR}{\mathcal{R}}
\newcommand{\Mext}{\ext}

\newcommand{\Cb}{\underline{C}}
\newcommand{\Cbp}{\Cb^{(p)}} 
\newcommand{\Mp}{M^{(p)}}
\newcommand{\CbpS}{\Cb^{(\S, p)}} 
\newcommand{\MpS}{M^{(\S, p)}}

\newcommand{\mudot}{\dot{\mu}}
\newcommand{\ext}{{^{(ext)}\MM}}

\newcommand{\ovla}{\overset{\circ}{\la}}

\newcommand{\Gag}{\Ga_g}
\newcommand{\Gab}{\Ga_b}

\newcommand{\D}{\mathbf{D}}

\newcommand{\dkb}{\slashed{\dk}}

\renewcommand{\c}{\cdot}

\DeclareMathOperator{\sdiv}{div}

\newcommand{\fb}{\underline{f}}

\newcommand{\SSS}{{\mathbb{S}}}

\newcommand{\hk}{\mathfrak{h}}

\newcommand{\dk}{\mathfrak{d}}

\newcommand{\ds}{\displaystyle}
\newcommand{\JpS}{{J^{(p, \S)}}} 

\newcommand{\atr}{{}^{(a)}\tr}
\newcommand{\atrch}{\atr\chi}
\newcommand{\atrchb}{\atr\chib}
\newcommand{\muc}{\widecheck{\mu}}

\newcommand{\ov}{\overline}

\newtheorem{thm}{Theorem}[section]

\newtheorem{theorem}[thm]{Theorem}

\newtheorem{remark}[thm]{Remark}
\newtheorem{lemma}[thm]{Lemma}
\newtheorem{proposition}[thm]{Proposition}
\newtheorem{conj}[thm]{Conjecture}

\newtheorem{definition}[thm]{Definition}

\newcommand{\II}{\mathcal{I}}
\newcommand{\MM}{\mathcal{M}}

\DeclareMathOperator{\tr}{tr}

\renewcommand{\th}{\theta}
\renewcommand{\a}{\alpha}
\renewcommand{\b}{\beta}
\newcommand{\g}{{\bf g}}

\newcommand{\ovS}{\Sg}

\renewcommand{\aa}{\underline{\a}}
\renewcommand{\S}{{\mathbf{S}}}

\DeclareMathOperator{\lot}{l.o.t.}
\newcommand{\Jp}{J^{(p)}}
\newcommand{\Mint}{\,{}^{(int)}\mathcal{M}}

\newcommand{\Sitop}{\,^{(top)}\Si}

\DeclareMathOperator{\err}{Err}
\newcommand{\Omb}{\underline{\Omega}}

\numberwithin{equation}{section}

\newcommand{\TT}{{\mathcal T}}

\newcommand{\UU}{{\mathcal U}}

\newcommand{\W}{{\bf W}}

\newcommand{\RR}{{\mathcal R}}

\newcommand{\dual}{{^*}}

 \newcommand{\trchb}{\tr\chib}

\newcommand{\Lab}{\underline{\La}}
\newcommand{\dg}{\overset{\circ}{\delta}}

\newcommand{\dds}{\slashed{d}^*}
\newcommand{\ddd}{\slashed{d}}

\newcommand{\Kc}{\widecheck{K}}

\DeclareMathOperator{\curl}{curl}














\def\ovJ{\overset{\,\,\circ}{J}}

\def\Jpov{\ovJ\,^{(p)}}
\def\ovu{\overset{\circ}{ u}}
\def\ovs{\overset{\circ}{ s}}
\def\ovb{\overset{\circ}{ b}}
\def\ovm{\overset{\circ}{ m}}
\def\ovQ{\overset{\circ}{ Q}}
\def\undB{\underline{B}}

\def\Cbdot{\dot{\Cb}}
\def\Mdot{\dot{M}}
\def\Cbpdot{{\dot{\Cb}^{(p)}}}
\def\Mpdot{{\dot{M}^{(p)}}}
\def\divzero{{\overset{\circ}{ \div}}}

\def\lapzero{{\overset{\circ}{ \lap}}}

\def\SS{{\mathcal S}}

\makeatother
\allowdisplaybreaks

\usepackage[normalem]{ulem}

\usepackage[sortcites,
backend=biber,
date=year,
style=numeric-comp,
doi=false,
isbn=false,
url=false,
maxcitenames=4,
maxbibnames=4,
eprint=true]{biblatex}

\DeclareNameAlias{author}{family-given}
\AtEveryBibitem{
  \clearfield{month}
  \clearfield{url}
  \clearfield{urlyear}
  \clearfield{urlmonth} 
  \ifentrytype{online}{
    \clearfield{year}}
  {
    \clearfield{eprint}}
}
\renewbibmacro*{volume+number+eid}{%
  \printfield{volume}%
  \setunit*{\addnbspace}
  \printfield{number}%
  \setunit{\addcomma\space}%
  \printfield{eid}}
\DeclareFieldFormat[article]{volume}{\textbf{#1}}
\DeclareFieldFormat[article]{number}{\mkbibparens{#1}}
\DeclareFieldFormat*{title}{\textit{#1}}
\DeclareFieldFormat{journaltitle}{#1\isdot}
\renewbibmacro{in:}{}
\addbibresource{KerrNewman.bib}

\makeatother
\allowdisplaybreaks

\hypersetup{colorlinks=true, linkcolor=black, urlcolor=blue, urlbordercolor={0 1 1}}

\author[1]{Allen Juntao Fang\thanks{allen.juntao.fang@uni-muenster.de}}
\author[2]{Elena Giorgi\thanks{elena.giorgi@columbia.edu}}
\author[3]{Jingbo Wan\thanks{jingbo.wan@sorbonne-universite.fr}}

\affil[1]{\small Mathematics M\"unster, Universit\"at M\"unster }
\affil[2]{\small Department of Mathematics, Columbia University}
\affil[3]{\small Laboratoire Jacques-Louis Lions de Sorbonne Universit\'e}

\begin{document}
\title{Mass-Centered GCM Framework in Perturbations of Kerr(-Newman)}

\maketitle

\begin{abstract}
  The nonlinear stability problem for black hole solutions of the Einstein equations critically depends on choosing an appropriate geometric gauge. In the vacuum setting, the use of Generally Covariant Modulated (GCM) spheres and hypersurfaces has played a central role in the proof of stability for slowly rotating Kerr spacetime \cite{klainermanKerrStabilitySmall2023}. In this work, we develop an alternative GCM framework, that we call \textit{mass-centered}, designed to overcome the breakdown of the standard GCM construction in the charged case, where electromagnetic–gravitational coupling destroys the exceptional behavior of the $\ell=1$ mode of the center-of-mass quantity used in the vacuum analysis. This construction is aimed at the nonlinear stability of Reissner–Nordström and Kerr–Newman spacetimes.

  Our approach replaces transport-based control of the center-of-mass quantity with a sphere-wise vanishing condition on a renormalized $\ell=1$ mode, yielding mass-centered GCM hypersurfaces with modified gauge constraints. The resulting elliptic–transport system remains determined once an $\ell=1$ basis is fixed via effective uniformization and provides an alternative construction in vacuum in the uncharged limit. 
\end{abstract}

\tableofcontents

\section{Introduction}

In this paper we develop a modification and extension of the General Covariant Modulated (GCM) framework, a key ingredient in the proof of the nonlinear stability of the Kerr family for $|a|\ll M$ \cite{klainermanKerrStabilitySmall2023, giorgiWaveEquationsEstimates2024}, introduced in polarized symmetry in \cite{klainermanGlobalNonlinearStability2020} and in full generality in \cite{klainermanConstructionGCMSpheres2022, klainermanEffectiveResultsUniformization2022, shenConstructionGCMHypersurfaces2023}. Our version adapts the framework to perturbations of the charged Kerr–Newman black hole, provides an alternative proof in the Kerr case, and is designed to remain applicable to solutions arising from a broad class of matter models.

\subsection{The nonlinear stability of black hole spacetimes}

The nonlinear stability problem for black hole spacetimes asks whether solutions to the Einstein equations that start as small perturbations of a stationary black hole remain globally regular in the exterior region and asymptotically settle down to another member of the same family. More precisely, one seeks to prove that for initial data sufficiently close to a given black hole solution, the maximal globally hyperbolic development admits a complete null infinity, remains close to a suitable stationary metric, and converges to it at late times.

In the uncharged case, corresponding to the vacuum Einstein equations with zero cosmological constant, the relevant stationary solutions are the Schwarzschild and Kerr families. The stability of Schwarzschild was established through a sequence of works building on earlier linear and nonlinear analyses \cite{dafermosLinearStabilitySchwarzschild2019, klainermanGlobalNonlinearStability2020, dafermosNonlinearStabilitySchwarzschild2021}, while the stability of Kerr, which has been the subject of decades of investigation from its mode stability onward, has recently been established for slowly rotating spacetimes \cite{klainermanKerrStabilitySmall2023,giorgiWaveEquationsEstimates2024}.

\begin{theorem}[\cite{klainermanConstructionGCMSpheres2022, klainermanEffectiveResultsUniformization2022, shenConstructionGCMHypersurfaces2023, klainermanKerrStabilitySmall2023, giorgiWaveEquationsEstimates2024}]\label{thm:KS-intro} Vacuum initial data sets, sufficiently close to Kerr initial data for $|a| \ll M$, have a maximal development with a complete future null infinity and with domain of outer communication which approaches (globally) a nearby Kerr solution.
\end{theorem}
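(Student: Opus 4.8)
The plan is to run a global continuity (bootstrap) argument on a dynamically constructed foliation of the exterior of the perturbed black hole. I would split the domain of outer communication into an asymptotically flat outgoing region $\ext$, foliated by a family of outgoing null cones capped by GCM hypersurfaces, together with an interior region $\Mint$ foliated by ingoing cones reaching the future event horizon. On this foliation I would record all the geometric quantities — the Ricci coefficients $\Ga$ and curvature components $\R$ relative to adapted null frames — and posit bootstrap assumptions asserting that each decays in a time function $\tau$ with the weights predicted by the linear theory, the overall size being controlled by a small parameter $\epsilon$ measuring the distance of the initial data to Kerr. The argument closes once these assumptions are recovered with a strictly better constant.

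The heart of the decay analysis is the pair of gauge-invariant curvature components $\a$ and $\aa$ of spin $\pm 2$, which satisfy the Teukolsky equation. Via a Chandrasekhar-type transformation I would map these to the quantities $\qk$ and its conjugate $\qkb$, which satisfy a Regge--Wheeler-type wave equation with a real potential and only weakly coupled lower-order terms. For these I would establish, uniformly in $|a| \ll M$, a combined energy--Morawetz estimate that handles the trapped null geodesics and the (small) superradiant frequency regime, and then propagate decay through an $r^p$-weighted hierarchy in the spirit of Dafermos--Rodnianski. This is precisely the content I would assume from \cite{giorgiWaveEquationsEstimates2024}, yielding $\tau$-decay for $\qk$, and hence for $\a$ and $\aa$.

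From the decay of these invariant quantities I would recover estimates for the full hierarchy of Ricci coefficients and the remaining curvature components by integrating the null structure equations and the Bianchi identities along the leaves of the foliation. Here the GCM construction of \cite{klainermanConstructionGCMSpheres2022, klainermanEffectiveResultsUniformization2022, shenConstructionGCMHypersurfaces2023} is indispensable: these transport equations are only well posed once their $\ell \le 1$ modes are prescribed, and the GCM conditions fix those modes sphere-by-sphere while the residual free constants are identified with the evolving mass $m$ and angular momentum $a$. Tracking $(m,a)$ along the foliation and showing they converge to limits $(M_\infty, a_\infty)$ then singles out the final Kerr member, while the $r^p$-estimates furnish completeness of null infinity.

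The main obstacle is the interface between these steps, namely the gauge. The gauge-invariant decay controls only the spin $\pm 2$ part of the perturbation; the $\ell \le 1$ modes — encoding the residual diffeomorphism freedom together with the mass and angular-momentum parameters — are invisible to $\a$ and $\aa$ and must be pinned down by the GCM normalization rather than by a decay estimate. Constructing GCM spheres and hypersurfaces whose defining elliptic--transport system remains solvable in the nonlinear regime, and whose free parameters can be modulated so that the gauge tracks the true final Kerr, is the delicate point. It is exactly this mechanism — the exceptional behavior of the $\ell=1$ center-of-mass mode exploited in vacuum — whose breakdown under electromagnetic--gravitational coupling motivates the mass-centered reformulation developed in the present paper.
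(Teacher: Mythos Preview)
The paper does not give its own proof of this statement: Theorem \ref{thm:KS-intro} is stated with attribution to \cite{klainermanConstructionGCMSpheres2022, klainermanEffectiveResultsUniformization2022, shenConstructionGCMHypersurfaces2023, klainermanKerrStabilitySmall2023, giorgiWaveEquationsEstimates2024} and is invoked only as background for the mass-centered construction that follows. Your sketch is a faithful high-level summary of the strategy in those references and is consistent with the informal account the paper itself provides in its introduction (the continuity argument on GCM admissible spacetimes, the role of the Regge--Wheeler quantity $\qk$ obtained from $\a,\aa$, the recovery of Ricci and curvature coefficients from the GCM conditions on $\Si_*$, and the identification of the final Kerr parameters), so there is nothing to compare against and no gap to flag beyond the obvious fact that this is an outline rather than a proof.
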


The proof of nonlinear stability for slowly rotating Kerr-de Sitter spacetimes with positive cosmological constant was obtained in \cite{hintzGlobalNonLinearStability2018}.

A fundamental difficulty in any nonlinear stability proof is the choice of gauge. The Einstein equations are invariant under diffeomorphism, and any analysis must fix this freedom in a way that is both geometrically canonical and analytically robust. Geometrically, a good gauge choice should single out a preferred foliation and frame that reflect the physical geometry of the spacetime, eliminating residual coordinate ambiguities. Analytically, the gauge must be compatible with the hyperbolic and elliptic structures of the equations, allowing sharp decay estimates to propagate through the hierarchy of quantities. In practice, the success or failure of a stability proof often hinges on whether one can construct such a gauge and control its evolution globally in time. The central idea which resolves this difficulty in \cite{klainermanGlobalNonlinearStability2020, klainermanKerrStabilitySmall2023} is the introduction and construction of GCM spheres and hypersurfaces on which specific geometric quantities take Schwarzschildian values, see Section \ref{sec:GCM-intro} for a review of the GCM framework of \cite{klainermanConstructionGCMSpheres2022, klainermanEffectiveResultsUniformization2022, shenConstructionGCMHypersurfaces2023}.

In the charged case, governed by the Einstein–Maxwell system, the stationary black holes are given by the Reissner–Nordström and Kerr–Newman families. The coupling between gravitational and electromagnetic fields introduces new structural and analytic challenges, both in the linear theory and in the nonlinear problem. Nevertheless, the stability statement is expected to hold in close analogy to the vacuum case: small perturbations of a charged black hole should remain globally regular in the exterior and decay to a nearby member of the same family.

\begin{conj}[Stability of Kerr-Newman conjecture]
  Theorem \ref{thm:KS-intro} respectively holds for electrovacuum initial data close to Kerr-Newman initial data.
\end{conj}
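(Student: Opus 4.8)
The plan is to follow the global architecture of the vacuum proof of Theorem \ref{thm:KS-intro} from \cite{klainermanKerrStabilitySmall2023, giorgiWaveEquationsEstimates2024}, replacing the vacuum GCM gauge by the mass-centered GCM hypersurfaces constructed in this paper and upgrading every geometric transport and elliptic argument to the coupled Einstein--Maxwell setting. First I would set up a spacetime $\MM$ foliated into an external region terminating at a complete null infinity, an internal/near-horizon region, and a top region, with a double-null-type frame in each. The gauge in the external region is fixed by the mass-centered GCM hypersurfaces, whose defining sphere-wise vanishing condition on the renormalized $\ell=1$ mode is precisely what survives the electromagnetic--gravitational coupling that destroys the vacuum center-of-mass transport argument; one then threads these spheres into a hypersurface via effective uniformization to fix the $\ell=1$ basis. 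A large bootstrap on weighted decay norms for the Ricci coefficients $\Gag,\Gab$, the curvature $\R$, and the electromagnetic components $\F$ would be posited on $\MM$.

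The analytic heart is the decay theory for the linearized coupled system. Unlike in vacuum, the extreme curvature components $\a,\aa$ do not satisfy decoupled Teukolsky equations; instead they couple to the extreme Maxwell components through the charge, producing a $2\times 2$ spin-$(\pm2,\pm1)$ Teukolsky--Maxwell system. I would apply Chandrasekhar-type transformations to bring this to a coupled Regge--Wheeler-type system, then prove Morawetz, redshift, and $r^p$-hierarchy estimates for the system as a whole --- treating the two spins together so that the coupling terms are absorbed rather than fought. For $a\neq 0$ the trapping degeneracy and the (weak, for $|a|\ll M$) superradiance must be handled simultaneously with the coupling, either by the physical-space commutator vector fields of the vacuum Kerr analysis or by a microlocal separation, now applied to the combined energy.

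With decay for the radiation quantities in hand, the next step is to recover the full solution: integrate the null-transport and associated elliptic Hodge equations for the remaining Ricci coefficients and Maxwell potentials along the two null directions, using the mass-centered GCM conditions --- rather than vacuum transport --- to fix the constants of integration for the $\ell\le 1$ modes, where the charged modifications of the gauge constraints enter. One then closes the bootstrap by showing that the nonlinear error terms, now including Maxwell-stress contributions, improve the assumed decay, and finally extracts limiting parameters $(M_\infty,Q_\infty,a_\infty)$ together with convergence to a nearby Kerr--Newman solution.

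The main obstacle I expect is the decay step for the coupled system in the rotating case: the charge obstructs the clean Chandrasekhar decoupling available in Schwarzschild and Reissner--Nordström, and for $a\neq 0$ this must coexist with trapping and the loss of symmetry, so establishing a uniform, nonlinearly stable energy--Morawetz estimate for the joint spin-$2$/spin-$1$ system --- while simultaneously feeding the degenerate $\ell=1$ behavior through the mass-centered GCM gauge --- is where the genuinely new difficulties concentrate.
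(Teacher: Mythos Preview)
The statement you are attempting to prove is a \emph{conjecture}, not a theorem: the paper explicitly labels it as such and provides no proof. There is therefore no ``paper's own proof'' to compare your proposal against. The paper's actual contribution is much more modest in scope: it constructs the mass-centered GCM spheres and hypersurfaces (Theorems \ref{thm:GCM-spheres-J}, \ref{thm:intrinsic-GCM}, \ref{thm:GCMH}) as one geometric ingredient intended for a future attack on the conjecture, and the companion paper \cite{fangEinsteinMaxwellEquationsMassCentered2025} is announced as solving the Einstein--Maxwell equations on such a hypersurface. Neither the decay theory for the coupled Teukolsky--Maxwell system on Kerr--Newman, nor the closure of any bootstrap, is carried out here.

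Your proposal is not a proof but a plausible research program outline. Several of the steps you list as if they were routine adaptations are in fact open problems: the Morawetz/$r^p$ estimates for the coupled spin-$2$/spin-$1$ system on rotating charged backgrounds, the treatment of trapping and superradiance in the presence of electromagnetic coupling, and the nonlinear closure are all substantial unresolved analytic questions --- as you yourself acknowledge in your final paragraph. A proof proposal that defers its ``analytic heart'' to estimates that do not yet exist in the literature is not a proof. The honest assessment is that you have correctly identified the architecture and the role of the mass-centered GCM gauge, but the conjecture remains open and your outline does not close it.
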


The present work is part of a program aimed at establishing the nonlinear stability for Reissner–Nordström and Kerr–Newman spacetimes. Our contribution here addresses a key obstacle specific to the charged setting—controlling the evolution of the center-of-mass quantity—by introducing a modification of the gauge-fixing GCM procedure of \cite{klainermanConstructionGCMSpheres2022, klainermanEffectiveResultsUniformization2022, shenConstructionGCMHypersurfaces2023} which is adapted to the Einstein–Maxwell equations.

\subsection{The GCM framework \texorpdfstring{in \cite{klainermanConstructionGCMSpheres2022, klainermanEffectiveResultsUniformization2022, shenConstructionGCMHypersurfaces2023}}{of Klainerman, Szeftel, Shen}}\label{sec:GCM-intro}

The proof of nonlinear stability of Kerr in
\cite{klainermanKerrStabilitySmall2023} relies on a continuity
argument, in which the maximal development of perturbed data is
obtained as the limit of a sequence of finite General Covariant
Modulated (GCM) spacetimes, as represented in Figure
\ref{fig:myfigure} below, whose future boundaries consist of the union
$\mathcal{A} \cup \Sitop \cup \Si_*$, where
$\mathcal{A}, \Sitop, \Si_*$ are spacelike. The spacetime $\MM$ also
contains a timelike hypersurface $\TT$ which divides $\MM$ into an
exterior region called $\Mext$ and an interior region called $\Mint$.

\begin{figure}[ht]
  \centering
  \includegraphics[width=0.65\textwidth]{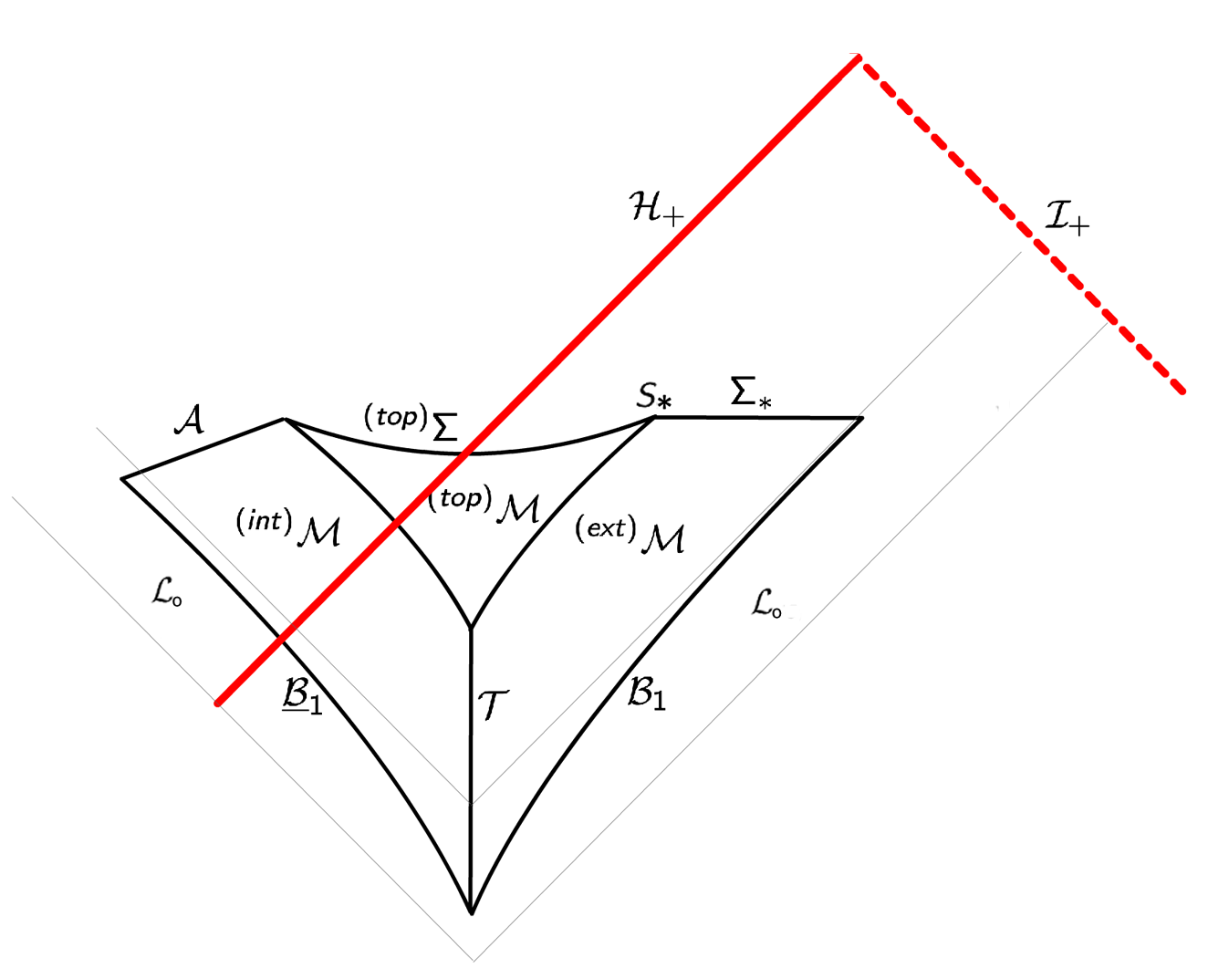} 
  \caption{The GCM admissible spacetime of \cite{klainermanKerrStabilitySmall2023}.}
  \label{fig:myfigure}
\end{figure}

The purpose of the GCM construction is to fix the gauge in the future, thereby removing the large diffeomorphism freedom inherent to the Einstein equations. Within the nonlinear stability program for Kerr spacetime \cite{klainermanKerrStabilitySmall2023,giorgiWaveEquationsEstimates2024}, the GCM framework provides a canonical geometric setting that allows for consistent propagation of geometric quantities and controlled modulation of asymptotic parameters. Since the decay properties of Ricci and curvature components in depend heavily on the choice of the boundary $\Si_*$, the central idea in \cite{klainermanConstructionGCMSpheres2022, klainermanEffectiveResultsUniformization2022, shenConstructionGCMHypersurfaces2023} is the introduction and construction of GCM spheres and hypersurfaces there on which specific values take Schwarzschildian values. In \cite{klainermanKerrStabilitySmall2023}, the final spacetime is constructed as the limit of a continuous
family of finite GCM admissible spacetimes, where at each stage one assumes that all Ricci
and curvature coefficients of a fixed GCM admissible spacetime verify precise
bootstrap assumptions. By making use of the GCM properties of $\Si_*$ and
the smallness of the initial conditions one then shows that all the bounds of the Ricci and
curvature coefficients depend only on the size of the initial data and thus improve the bootstrap assumptions. This allows for to extension of the spacetime to
a larger one in which the bootstrap assumptions are still valid. Note that even though Theorem \ref{thm:KS-intro} holds for slowly rotating Kerr, the GCM framework is independent of the smallness of angular momentum. 

\subsubsection{GCM spheres}

A \emph{GCM sphere} $\S$ is defined as a 2-sphere equipped with an integrable null frame on which the following Schwarzschild-like conditions on the expansions and the mass-aspect functions are imposed:
\begin{align}\label{eq:trch-trchb-mu}
  \trchc = 0, \qquad (\trchbc)_{\ell \geq 2} = 0, \qquad \muc_{\ell \geq 2} = 0,
\end{align}
where the modes $\ell$ are defined in terms of an approximate basis, and here for any quantity $\psi$ we denote $\widecheck{\psi}=\psi - \psi_{Schw}$ the linearization with respect to the corresponding value on Schwarzschild
(see Section \ref{sec:l=1modes} and \ref{sec:def-mass-charge} for precise definitions). 
These constraints eliminate lower-order gauge freedoms and geometrically characterize the sphere relative to the chosen frame. The existence of such GCM spheres was established in \cite{klainermanConstructionGCMSpheres2022}, by deforming given spheres of the background foliation.

To further reduce the residual gauge and select a preferred foliation near null infinity, Klainerman and Szeftel introduced the notion of \emph{intrinsic GCM spheres} \cite{klainermanEffectiveResultsUniformization2022}. These are GCM spheres satisfying in addition
\[
  (\trchbc)_{\ell = 1} = 0, \qquad (\div \b)_{\ell=1} = 0,
\]
with respect to a canonical $\ell=1$ basis determined by an effective uniformization procedure. These extra conditions fix the remaining freedom associated with translations and center-of-mass motion, thereby producing a frame suited for tracking physical quantities at infinity \cite{klainermanCanonicalFoliationNull2024}.

\subsubsection{GCM hypersurfaces}
Building on this, Shen \cite{shenConstructionGCMHypersurfaces2023} constructed \emph{GCM hypersurfaces}, i.e. spacelike hypersurfaces foliated by GCM spheres, subject to further transversality and gauge constraints along the hypersurface:
\[
  (\div \eta)_{\ell=1} = 0, \qquad (\div \xib)_{\ell=1} = 0,
\]
see already Section \ref{sec:def-mass-charge} for the definitions of the Ricci coefficients $\eta$ and $\xib$.

The construction of GCM spheres and hypersurfaces proceeds by solving the determined system encoded in the GCM conditions, via a null frame transformation governed by transition functions $(\lambda, f, \fb)$, see \eqref{eq:Generalframetransf} for their exact form.
The transition functions $(\lambda, f, \fb)$ are chosen so that the transformed frame meets the prescribed constraints on $\trch$, $\trchb$, $\mu$, and the $\ell=1$ components such as $(\div \b)_{\ell=1}$.

In the proof of stability of slowly rotating Kerr, such a hypersurface is initiated from an intrinsic GCM sphere, whose location is quantitatively controlled via a dominance condition. The hypersurface then serves as the geometric backbone of the evolution: the gauge, determined by the GCM and transversality conditions, is fixed along this hypersurface, which acts as the future boundary of the domain under consideration. This “future-initialized” gauge is a key structural element of the proof, enabling control of the evolution backwards into the spacetime.

\subsubsection{Tracking the center of mass via GCM spheres}
A crucial feature of the vacuum case lies in the behavior of the
$\ell=1$ modes of $\div \b$. The full quantity $\div \b$ satisfies a
transport equation along $e_4$ involving the spin $+2$ curvature
component $\a$, whose decay is too weak to integrate
directly. Remarkably, $\a$ has no $\ell=1$ component, so the mode
$(\div \b)_{\ell=1}$ decouples from $\a$ and instead obeys a transport
equation with more favorable structure. Its backward evolution along
$e_3$ also exhibits sufficient decay to be integrated along $\Si_*$,
and still deduce the correct decay for it. This exceptional property
allows $(\div \b)_{\ell=1}$ to be controlled both along the GCM
hypersurface and in the exterior region $\Mext$.

Geometrically, the condition $(\div \b)_{\ell=1}=0$ identifies the center of mass of the final state and removes ambiguities from spatial translations. Analytically, controlling $(\div \b)_{\ell=1}$ establishes a hierarchy of estimates for curvature components and Ricci coefficients, and therefore the evolution of $(\div \b)_{\ell=1}$ plays a central role in the process.

This hierarchy underpins the nonlinear stability argument and ensures closure of the system under relatively weak decay assumptions.
According to \cite{klainermanCanonicalFoliationNull2024}, under suitable decay assumptions, $(\div \b)_{\ell=1}$—or an appropriate renormalization thereof—should converge at null infinity $\II^+$ to a well-defined limit, interpreted as the null center of mass of the spacetime. Any full nonlinear stability result must therefore guarantee the existence of this limit. Conversely, if the null center of mass cannot be controlled, one should not expect stability to hold.

\subsection{The charged case and the breakdown of the exceptional structure}

In the charged setting, new difficulties arise due to the inhomogeneous, non-autonomous behavior introduced by the Maxwell field. The most significant change concerns the $\ell=1$ modes. In the vacuum case, gauge-invariant gravitational perturbations of Schwarzschild or Kerr are supported only on $\ell \geq 2$ harmonics. In contrast, in the Einstein–Maxwell system, gauge-invariant coupled electromagnetic–gravitational perturbations around Reissner–Nordström or Kerr–Newman already involve nontrivial $\ell=1$ contributions.

This difference destroys the exceptional structure present in vacuum. There, the transport equation for $\nab_4 (\div \b)_{\ell=1}$ is structurally decoupled from the problematic spin-$+2$ curvature component $\a$, because $\a$ has no $\ell=1$ projection. As a result, $(\div \b)_{\ell=1}$ satisfies a more favorable transport equation with improved decay, allowing one to bypass the weak decay of the full $\nab_4 \b$ equation.

In the charged case, this decoupling fails: the Maxwell field produces source terms that couple directly to $(\div \b)_{\ell=1}$. Consequently, the $\nab_4$ equation inherits the same poor decay, and the key simplification of the vacuum case is lost. One may attempt to renormalize $(\div \b)_{\ell=1}$ to recover improved transport behavior, but such renormalizations inevitably improve only one of the two transport directions ($\nab_3$ or $\nab_4$), never both.\footnote{We have not found a renormalization with acceptable decay in both directions.}
This breakdown represents a fundamental obstruction to extending the GCM framework from vacuum to the charged setting.

From the analytic perspective, the renormalized center-of-mass quantity is indispensable, since it initiates the hierarchy of estimates for gauge-dependent fields. From the geometric perspective, this reflects the fact that while angular momentum is a true invariant, the center of mass is not: it can be shifted by a trivial coordinate transformation. Even for exact backgrounds, this freedom must be fixed. In the charged case, this necessitates a new gauge-fixing mechanism adapted to the electromagnetic structure of the equations.

\subsection{Mass-centered GCM spheres and hypersurfaces}

The failure of the exceptional structure in the charged case forces a departure from the standard GCM strategy of \cite{klainermanConstructionGCMSpheres2022, klainermanEffectiveResultsUniformization2022, shenConstructionGCMHypersurfaces2023}. In the vacuum setting, control of the center-of-mass quantity can be transported along a GCM hypersurface; in the charged setting this mechanism breaks down. Our proposal is to replace transport by a stronger condition: we impose the \emph{sphere-wise} vanishing of the center-of-mass quantity on each sphere foliating the hypersurface. We denote the center of mass by $\bm{C}_{\ell=1}$, see already \eqref{eq:definition-bm-C} for its definition. 

This also leads to the notion of a \emph{mass-centered GCM hypersurface}, defined as a spacelike hypersurface foliated by mass-centered GCM spheres. Because of the the additional center-of-mass condition, we are left with one less degree of freedom to impose on the hypersurface.

The idea of mass-centered GCM spheres thus bypasses the obstruction created by the lack of two-sided improved transport in the charged case, while retaining sufficient structure to implement the future-initialized gauge-fixing mechanism required for stability.
Moreover, the mass-centered GCM spheres are an alternative to the original GCM spheres of Klainerman-Szeftel, where \textit{each} sphere of the final slice $\Si_*$, and not only the first one, is required to identify the center-of-mass of the final black hole.

We give here a simplified version of our main theorems, see Theorems \ref{thm:GCM-spheres-J}, \ref{thm:intrinsic-GCM}, \ref{thm:GCMH}  for the precise versions. 

\begin{theorem}[Existence of mass-centered GCM spheres and hypersurfaces, First version]\label{thm:main-intro} Let $\RR$ be a fixed spacetime region verifying the assumptions given in Definition \ref{definition-spacetime-region-RR} and let $S$ be a fixed sphere of the foliation in $\RR$. Then 
  \begin{enumerate}
  \item for any $\underline{\Lambda} \in \mathbb{R}^3$ and any choice of approximated $\ell=1$ basis $J^{(p)}$, there exists a unique deformation of $S$ which is a \emph{mass-centered GCM sphere} ${\bf S}(\underline{\Lambda}, J^{(p)})$, i.e. in addition to 
    \eqref{eq:trch-trchb-mu}, it satisfies 
    \begin{equation}
      \bm{C}_{\ell=1,J^{(p)}} = 0.
    \end{equation}
  \item there exists a unique deformation of $S$, up to a rotation, which is an \emph{intrinsic mass-centered GCM sphere} ${\bf S}$, i.e. in addition to 
    \eqref{eq:trch-trchb-mu}, it satisfies 
    \begin{equation}
      (\trchbc)_{\ell = 1, J^{(p,\S)}} = 0, \qquad   \bm{C}_{\ell=1,J^{(p,\S)}} = 0,
    \end{equation}
    with respect to a canonical $\ell=1$ mode $J^{(p,\S)}$. 
  \item there exists a unique smooth spacelike \emph{mass-centered GCM hypersurface} $\Si$ passing through the unique mass-centered GCM sphere deformation of $S$, i.e. $\Sigma$ is foliated by mass-centered GCM spheres and satisfies in addition:
    \begin{align*}
      (\div\xib)_{\ell=1, J^{(p,\S)}}=0.
    \end{align*}
  \end{enumerate}
\end{theorem}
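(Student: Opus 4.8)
The plan is to establish the three parts in sequence, adapting the null-frame-transformation machinery of \cite{klainermanConstructionGCMSpheres2022, klainermanEffectiveResultsUniformization2022, shenConstructionGCMHypersurfaces2023} to the mass-centered constraints, with the center-of-mass functional $\bm{C}_{\ell=1}$ playing the role that the transport-controlled quantity $(\div\b)_{\ell=1}$ played in the vacuum theory. For part (1), I would parametrize an admissible deformation of the background sphere $S$ together with a general null-frame transformation \eqref{eq:Generalframetransf} with transition functions $(\lambda, f, \fb)$, and translate the Schwarzschildian conditions \eqref{eq:trch-trchb-mu} into a coupled Hodge system on the deformed sphere ${\bf S}$. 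Exactly as in \cite{klainermanConstructionGCMSpheres2022}, the conditions $\trchc=0$, $(\trchbc)_{\ell\ge2}=0$, $\muc_{\ell\ge2}=0$ determine the $\ell=0$ and $\ell\ge2$ components of $(\lambda, f, \fb)$ through Hodge operators of the form $\ddstwo\ddd_2$, whose kernels are the $\ell=1$ modes; these are solved by a contraction/implicit-function-theorem argument using the smallness encoded in Definition \ref{definition-spacetime-region-RR}.

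The genuinely new content lies in the $\ell=1$ sector. The free parameter $\underline{\Lambda}\in\mathbb{R}^3$ fixes the $\ell=1$ part of $\fb$, while the remaining $\ell=1$ freedom carried in $f$ is pinned not by an artificial constant but by the condition $\bm{C}_{\ell=1, J^{(p)}}=0$. I would therefore compute the linearization of the map $f_{\ell=1}\mapsto \bm{C}_{\ell=1, J^{(p)}}$ about the background and show it is a small perturbation of an invertible $3\times 3$ map, so that the implicit function theorem yields a unique admissible $f_{\ell=1}$; combining the $\ell=0$, $\ell=1$ and $\ell\ge2$ solutions and iterating to account for the sphere-dependence of the modes produces the unique mass-centered GCM sphere ${\bf S}(\underline{\Lambda}, J^{(p)})$, as made precise in Theorem \ref{thm:GCM-spheres-J}.

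Part (2) then couples this construction to the effective uniformization of \cite{klainermanEffectiveResultsUniformization2022}: on the deformed sphere one selects a canonical basis $J^{(p,\S)}$, intrinsic to ${\bf S}$ and unique up to an $SO(3)$ rotation, and imposes simultaneously $(\trchbc)_{\ell=1, J^{(p,\S)}}=0$, which now fixes $\underline{\Lambda}$, together with $\bm{C}_{\ell=1, J^{(p,\S)}}=0$, which fixes $f_{\ell=1}$. Since the basis depends on ${\bf S}$ while ${\bf S}$ depends on the transition functions, I would solve the coupled problem by a nested fixed point, treating the uniformization map as a small correction to the background basis; the residual $SO(3)$ accounts for the ``up to rotation'' in the statement, giving Theorem \ref{thm:intrinsic-GCM}.

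Finally, for part (3) I would follow Shen \cite{shenConstructionGCMHypersurfaces2023} and construct $\Si$ as a one-parameter foliation by mass-centered GCM spheres, initialized at the intrinsic sphere of part (2) and propagated by an elliptic-transport system: on each leaf the sphere-wise conditions \eqref{eq:trch-trchb-mu} and $\bm{C}_{\ell=1, J^{(p,\S)}}=0$ are solved as above, while the single remaining gauge freedom along the hypersurface is fixed by the transversality condition $(\div\xib)_{\ell=1, J^{(p,\S)}}=0$. Because the center of mass is now pinned on \emph{every} leaf, there is one fewer degree of freedom than in \cite{shenConstructionGCMHypersurfaces2023}, so only $(\div\xib)_{\ell=1}=0$ can be imposed in place of the two transversality conditions $(\div\eta)_{\ell=1}=0$ and $(\div\xib)_{\ell=1}=0$. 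The main obstacle is precisely here: I expect that verifying the elliptic-transport system remains determined and well-posed after replacing the transported quantity $(\div\b)_{\ell=1}$ by the sphere-wise constraint $\bm{C}_{\ell=1}=0$ --- so that the $\ell=1$ modes propagate consistently from leaf to leaf without over- or under-determination --- will be the delicate step, since it is exactly the transport mechanism that fails in the charged case and whose role the sphere-wise condition is designed to absorb.
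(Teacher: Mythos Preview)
Your proposal is correct and follows essentially the same route as the paper: first build generic charged GCM spheres parameterized by $(\Lambda,\underline{\Lambda})$ \`a la \cite{klainermanConstructionGCMSpheres2022}, then fix $\Lambda=(\div^\S f)_{\ell=1}$ via the transformation identity $\bm{C}^\S_{\ell=1}+\bigl(\tfrac{3m^\S}{(r^\S)^3}-\tfrac{(Q^\S)^2}{2(r^\S)^4}\bigr)\Lambda=\bm{C}_{\ell=1}+\text{error}$ and the implicit function theorem, with the coupled $(\Lambda,\underline{\Lambda})$ fixed point giving the intrinsic sphere. One clarification for part (3): the paper does not re-uniformize on each leaf but instead transports the $\ell=1$ basis along $\Sigma$ by $\nu^\S(\widetilde J)=0$ from the initial sphere, then reduces to an ODE system in $(\underline{\Lambda}(s),\psi(s))$ solved by Cauchy--Lipschitz; your anticipated obstacle is handled precisely by the observation that $\Lambda(s)$, being determined sphere-wise, enters the ODE for $\underline{\Lambda}'(s)$ only through already-controlled lower-order terms.
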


Theorems \ref{thm:GCM-spheres-J} and \ref{thm:intrinsic-GCM} correspond respectively to the modification of Theorem 1.1 in \cite{klainermanConstructionGCMSpheres2022} and Theorem 1.6 in \cite{klainermanEffectiveResultsUniformization2022}, and Theorem \ref{thm:GCMH} corresponds to Theorem 1.6 in \cite{shenConstructionGCMHypersurfaces2023}. Note that Theorem \ref{thm:main-intro} applies to the full subextremal range of charge and angular momentum. 

\begin{figure}[htbp]
  \centering
  \begin{tikzpicture}[scale=1.2, line cap=round, line join=round]

    \coordinate (A) at (-5,-2);
    \coordinate (B) at (3.5,-2.5);
    \coordinate (C) at (5,2.5);
    \coordinate (D) at (-3.5,3);

    \draw[blue!70, line width=1.2] (A)--(B)--(C)--(D)--cycle;

    \coordinate (A) at (-5,-2);
    \coordinate (B) at (3.5,-2.5);
    \coordinate (C) at (5,2.5);
    \coordinate (D) at (-3.5,3);

    \draw[blue!70, line width=1.2] (A)--(B)--(C)--(D)--cycle;

    \coordinate (Center) at (0,0);
    \coordinate (Ared) at ({cos(10)*(-5) - sin(10)*(-2)}, {sin(10)*(-5) + cos(10)*(-2)});
    \coordinate (Bred) at ({cos(10)*(3.5) - sin(10)*(-2.5)}, {sin(10)*(3.5) + cos(10)*(-2.5)});
    \coordinate (Cred) at ({cos(10)*(5) - sin(10)*(2.5)}, {sin(10)*(5) + cos(10)*(2.5)});
    \coordinate (Dred) at ({cos(10)*(-3.5) - sin(10)*(3)}, {sin(10)*(-3.5) + cos(10)*(3)});

    \draw[red!75, line width=1.2]
    (Ared) .. controls +(+0.2,0.2) and +(-0.2,0.2) .. (Bred)
    .. controls +(-0.2,0.2) and +(-0.2,-0.2) .. (Cred)
    .. controls +(-0.2,-0.2) and +(0.2,-0.2) .. (Dred)
    .. controls +(0.2,-0.2) and +(0.2,0.2) .. (Ared);

    \begin{scope}[rotate=12]
      \draw[red!75, line width=1.1] (0,0) ellipse (3.0 and 1.7);
      \draw[red!75, line width=1.1] (0,0) ellipse (2.1 and 1.1);
      \draw[red!75, line width=1.8] (0,0) ellipse (1.2 and 0.7);
    \end{scope}

    \begin{scope}[rotate=-2, shift={(0.3,0.1)}]
      \draw[blue!70, dashed, line width=1.1] (0,0) ellipse (3.0 and 1.7);
      \draw[blue!70, dashed, line width=1.1] (0,0) ellipse (2.1 and 1.1);
      \draw[blue!70, line width=1.8] (0,0) ellipse (1.2 and 0.7);
    \end{scope}


    \node[blue] at (2,-1.7) {$S$}; 
    \node[red]  at (-1.5,1.9) {$\S$}; 
    \node[red]  at (2.8,1.8) {$\bm{C}_{\ell=1,J^{(p)}} = 0$};
  \end{tikzpicture}
  \caption{The blue spheres represent the original foliation spheres $S$, while the red spheres ${\bf S}$ denote the deformed ones satisfying the vanishing center-of-mass condition.}
  \label{fig:placeholder}
\end{figure}
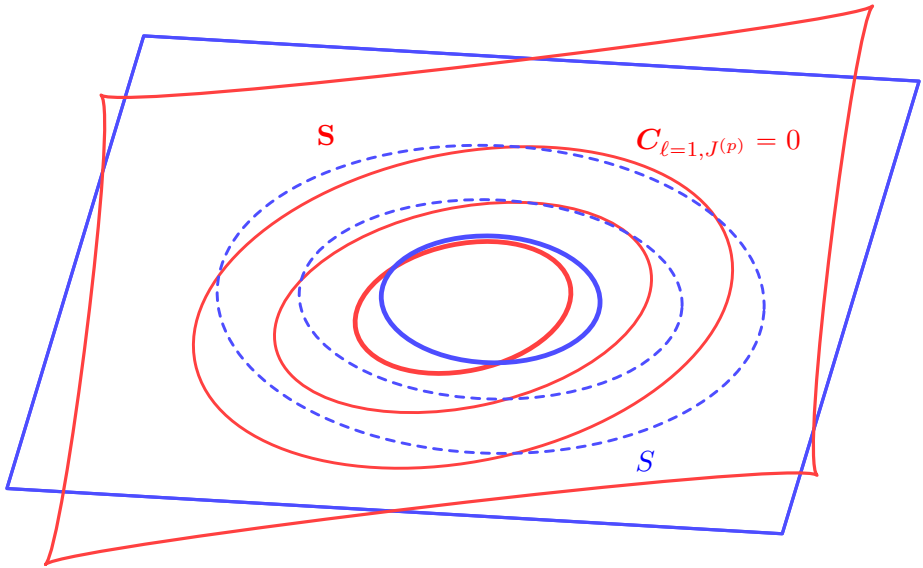

In our companion paper \cite{fangEinsteinMaxwellEquationsMassCentered2025}, we solve the Einstein-Maxwell equations on a mass-centered GCM hypersurface of the type constructed here.

\subsection{Main ideas behind the mass-centered GCM construction}

The proof of Theorem \ref{thm:main-intro} follows the same steps as the proof of Theorem 1.1 in \cite{klainermanConstructionGCMSpheres2022}, Theorem 1.6 in \cite{klainermanEffectiveResultsUniformization2022}, and Theorem 1.6 in \cite{shenConstructionGCMHypersurfaces2023} respectively, with important modification due to the change of the gauge conditions to the mass-centered case.

This modification imposes several requirements on the construction:
\begin{itemize}
\item The center-of-mass quantity $\bm{C}_{\ell=1}$ — defined as a suitable renormalization of $(\div \b)_{\ell=1}$ in the charged case—must reduce to $(\div \b)_{\ell=1}$ in the vacuum limit. Observe that both $\bm{C}_{\ell=1}$ and $(\div \b)_{\ell=1}$ are quantities well-defined on individual spheres.
\item The renormalized center-of-mass quantity must satisfy a favorable outgoing transport equation, so that control can still be propagated backward along outgoing null directions into the exterior region $\Mext$.
\item The hypersurface cannot coincide with that of \cite{shenConstructionGCMHypersurfaces2023}, since the available gauge freedom does not suffice to fix simultaneously the $\ell=1$ modes of $\div \eta$, $\div \xib$, and the center-of-mass quantity on each sphere.
\end{itemize}

We collect here a brief description of the proofs and modifications with respect to the corresponding Theorems in \cite{klainermanConstructionGCMSpheres2022, klainermanEffectiveResultsUniformization2022, shenConstructionGCMHypersurfaces2023}. 

\begin{enumerate}
\item The construction of \emph{mass-centered GCM spheres}, defined as
  GCM spheres satisfying the additional condition
  $\bm{C}_{\ell=1,J^{(p)}} = 0$ for a chosen basis $J^{(p)}$ of
  $\ell=1$ modes, proceeds in the same way as in
  \cite{klainermanConstructionGCMSpheres2022}. We employ the GCM
  system of adapted frame transformations and deformation in
  \cite{klainermanConstructionGCMSpheres2022}, see Section
  \ref{sec:GCM-system}, to deduce the existence of \emph{generic
    (charged) GCM spheres}, i.e. spheres satisfying
  \eqref{eq:trch-trchb-mu}, see Theorem \ref{thm:GCM-spheres}. The
  proof relies on an iterative scheme on the deformation map of a
  fixed sphere in the background as in
  \cite{klainermanConstructionGCMSpheres2022}, with the only
  modification being lower order terms in the charge of the
  coefficient functions of the GCM system.

  We specialize a generic charged GCM sphere to one which is also mass-centered by imposing the additional condition $\bm{C}_{\ell=1, J^{(p)}}=0$ in Theorem \ref{thm:GCM-spheres-J}. This is obtained by fixing the $(\div f)_{\ell=1, J^{(p)}}$ of the transition function through a nonlinear equation. Contrary to Theorem 1.6 in \cite{klainermanEffectiveResultsUniformization2022}, where both $(\div f)_{\ell=1}$ and $(\div \underline{f})_{\ell=1}$ are fixed for the construction of the intrinsic GCM sphere, here we only fix $(\div f)_{\ell=1}$ at this stage. 
\item 
  The construction of the \emph{intrinsic mass-centered GCM sphere} is obtained in Theorem \ref{thm:intrinsic-GCM} from the mass-centered GCM sphere of the previous point by fixing in addition $(\div\underline{f})_{\ell=1}$ to satisfy the additional constraint $(\trchb)_{\ell=1}=0$, analogously to the proof of Theorem 1.6 in \cite{klainermanEffectiveResultsUniformization2022}. Given a background sphere and a fixed basis $J^{(p)}$ of $\ell=1$ modes, the GCM conditions, together with $(\trchb)_{\ell=1}=0$ and $\bm{C}_{\ell=1}=0$, form a determined elliptic system for the frame transformation.

  The uniqueness of the intrinsic GCM sphere near a given background sphere is crucial, as it ensures that the construction is canonical. To make the procedure unambiguous, we employ the effective uniformization theorem of Klainerman–Szeftel \cite{klainermanEffectiveResultsUniformization2022} to select a canonical basis $J^{(p)}$ of $\ell=1$ modes. This eliminates residual supertranslation freedom and provides a geometrically meaningful cut at future null infinity $\II^+$, as emphasized in \cite{klainermanCanonicalFoliationNull2024} in the context of the Kerr stability result.

\item The \emph{mass-centered GCM hypersurface}, foliated by mass-centered GCM spheres, is constructed in Theorem \ref{thm:GCMH} by adapting the method of \cite{shenConstructionGCMHypersurfaces2023}. Compared with the framework in \cite{shenConstructionGCMHypersurfaces2023}, two conditions are now imposed: 
  \begin{align*}
    (\div \xib)_{\ell=1} = 0 \qquad \text{along the hypersurface,} \qquad \bm{C}_{\ell=1} = 0 \qquad \text{on each foliating sphere}.
  \end{align*}
  The condition $(\div \eta)_{\ell=1} = 0$ used in \cite{shenConstructionGCMHypersurfaces2023} is dropped, since the available gauge freedom does not suffice to fix all three quantities simultaneously. Nevertheless, the modified system remains sufficient to control the frame transformations along the hypersurface.

  As in \cite{shenConstructionGCMHypersurfaces2023}, the adapted frame is obtained from a general null frame via a transition function $(\lambda, f, \fb)$. The full set of GCM conditions, together with $(\div \xib)_{\ell=1}=0$ and $\bm{C}_{\ell=1}=0$, yields a determined elliptic–transport system along the hypersurface for solving this transition. The set up of the proof, involving the ODE system for the transition functions, follows closely the one in \cite{shenConstructionGCMHypersurfaces2023}, with the main difference being that it now only involves the $(\div \underline{f})_{\ell=1}$, as $(\div f)_{\ell=1}$ is fixed sphere-wise by the mass-centered condition along $\Sigma$. The mass-centered condition is imposed with respect to a family of $\ell=1$ modes $\widetilde{J}$ constructed by propagating the canonical one on the first sphere along $\Si$, as in \cite{shenConstructionGCMHypersurfaces2023}.
\end{enumerate}

\subsection{Organization of the paper}

The paper is organized as follows.
\begin{itemize}
\item Section 2 reviews the necessary preliminaries, including the background spacetime, $\ell=1$ modes, mass and charge parameters, linearized quantities, and general frame transformations. It also records the assumptions made for the construction and discusses deformation of surfaces and adapted $\ell=1$ modes.
  
\item Section 3 recalls the adapted frame transformation and deformation system developed by Klainerman–Szeftel, which serves as the foundation for both the previous and present GCM constructions.
  
\item Section 4 defines the mass-centered (charged) GCM spheres and proves their existence in Theorem \ref{thm:GCM-spheres-J}. We also include the construction of the intrinsic GCM sphere in Theorem \ref{thm:intrinsic-GCM}.
  
\item Section 5 constructs the mass-centered GCM hypersurface foliated by mass-centered (charged) GCM spheres. It states and proves Theorem \ref{thm:GCMH} regarding its existence.
\end{itemize}

\paragraph{Acknowledgments} A.J.F. acknowledges  support from the Deutsche
Forschungsgemeinschaft (DFG, German Research Foundation) through
Germany’s Excellence Strategy EXC 2044 390685587, Mathematics
M\"{u}nster: Dynamics–Geometry–Structure, from the Alexander von
Humboldt Foundation in the framework of the Alexander von Humboldt
Professorship endowed by the Federal Ministry of Education and
Research, and from NSF award DMS-2303241. E.G. acknowledges the support of NSF Grants DMS-2306143,
DMS-2336118 and of a grant of the Sloan Foundation. J.W. is supported by ERC-2023 AdG 101141855 BlaHSt.

\section{Preliminaries}

\subsection{Background spacetime and adapted coordinates}\label{sec:background}

We consider spacetime regions $\RR$ foliated by geodesic foliations $S(u,s)$ induced by an outgoing optical function $u$, with $s$ a properly normalized affine parameter along the null geodesic generators of $L=-\g^{\a\b}\partial_\b u \partial_\a$, where $\g$ is the spacetime metric.

We denote by $r=r(u,s)$ the area radius of $S(u,s)$, i.e such that the volume of $S(u,s)$ is given by $4\pi (r(u,s))^2$, and let $(e_3, e_4, e_1, e_2)$ be an adapted null frame where $e_4$ is proportional to $L$, $e_3$ is the unique null vectorfield orthogonal to $S(u,s)$ such that $\g(e_3, e_4)=-2$, and $e_1, e_2$ are tangent to the spheres $S=S(u,s)$.

Recall, see \cite{klainermanConstructionGCMSpheres2022}, that a
coordinate system $(u, s, y^1, y^2)$ is said to be \emph{adapted to an
outgoing geodesic foliation} $(e_3,e_4,e_1,e_2)$ if $e_4(y^1)= e_4(y^2)=0$. In
that case the spacetime metric can be written in the form
\begin{equation*}
  \g = - 2\vsi du ds + \vsi^2\Omb du^2 +g_{ab}\big( dy^a- \vsi \undB^a du\big) \big( dy^b-\vsi \undB^b du\big),
\end{equation*}
where
\begin{equation*}
  \Omb=e_3(s), \qquad \undB^a =\frac{1}{2} e_3(y^a), \qquad g_{ab}=\g(\pr_{y^a}, \pr_{y^b}).
\end{equation*}
Relative to these coordinates
\begin{align*}
  e_4=\pr_s, \qquad \pr_u = \vsi\left(\frac{1}{2}e_3-\frac{1}{2}\Omb e_4-\undB^a\pr_{y^a}\right), \qquad \pr_{y^a}= \sum_{c=1,2} Y_{(a)}^c e_c, \qquad a=1,2,
\end{align*}
with coefficients $ Y_{(a)}^b $ verifying $g_{ab}=\sum_{c=1, 2} Y_{(a)}^c Y_{(b)}^c$.

As in \cite{klainermanConstructionGCMSpheres2022}, we assume that $\RR $ is covered by two coordinate systems, i.e. $\RR=\RR_N\cup \RR_S$,
such that
\begin{enumerate}
\item The North coordinate chart $\RR_N$ is given by the coordinates
  $(u, s, y_{N}^1, y_{N}^2)$ with $(y^1_{N})^2+(y^2_{N})^2<2$. 

\item The South coordinate chart $\RR_S$ is given by the coordinates
  $(u, s, y_{S}^1, y_{S}^2)$ with $(y^1_{S})^2+(y^2_{S})^2<2$. 

\item The two coordinate charts intersect in the open equatorial region
  $\RR_{Eq}\vcentcolon=\RR_N\cap \RR_S$ in which both coordinate systems are defined.
  
\item In $\RR_{Eq} $ the transition functions between the two coordinate systems are given by the smooth functions $ \varphi_{SN}$ and $\varphi_{NS}= \varphi_{SN}^{-1} $. 
\end{enumerate}
The metric coefficients for the two coordinate systems are given by
\begin{align*}
  \g^N &= - 2\vsi du ds + \vsi^2\Omb du^2 +g^{N}_{ab}\big( dy_N^a- \vsi \undB_{N}^a du\big) \big( dy_N^b-\vsi \undB_N^b du\big),\\
  \g^S &= - 2\vsi du ds + \vsi^2\Omb du^2 +g^{S}_{ab}\big( dy_S^a- \vsi \undB_{S}^a du\big) \big( dy_S^b-\vsi \undB_S^b du\big),
\end{align*}
where
\begin{align*}
  \Omb=e_3(s), \qquad \undB_N^a =\frac{1}{2} e_3(y_N^a), \qquad \undB_S^a =\frac{1}{2} e_3(y_S^a).
\end{align*}
On a 2-sphere $S(u,s)$ we consider the following norms, 
\begin{equation*}
  \begin{split}
    \|f\|_{\infty,k} :&= \sum_{i=0}^k \|\dk^i f\|_{L^\infty(S) }, \qquad 
                        \|f\|_{\mathfrak{h}_k}\vcentcolon=\sum_{i=0}^k \|\dkb^i f\|_{L^2(S)},
  \end{split}
\end{equation*}
where $\dk^i$ stands for any combination of length $i$ of operators of the form 
$e_3, r e_4, r\nab $ and $\dkb^i$ stands for any combination of length $i$ of $r\nab$.

\subsection{Basis of \texorpdfstring{$\ell=1$}{} modes}\label{sec:l=1modes}

Recall that on the standard round sphere $\mathbb{S}^2$ with sperical coordinates $(\theta, \phi)$, the basis of the $\ell=1$ spherical harmonics is given by
\begin{align}\label{eq:sph-harm}
  J^{(\mathbb{S}^2, 0)}=\cos\th, \qquad J^{(\mathbb{S}^2, +)}=\sin\th\cos\phi, \qquad J^{(\mathbb{S}^2, -)}=\sin\th\sin\phi.
\end{align}
Given a small parameter $\mathring{\epsilon}>0$, we generalize the above definition for any sphere $S$ as follows.\footnote{The property of the scalar functions $\Jp$ above is motivated by the fact that the $\ell=1$ spherical harmonics on the standard sphere $\SSS^2$, given by $J^{(0, \SSS^2)}=\cos\th, \, J^{(+, \SSS^2)}=\sin\th\cos\vphi, \, J^{(-, \SSS^2)}=\sin\th\sin\vphi$, 
  satisfy \eqref{eq:Jpsphericalharmonics} with $\epg=0$. Note also that on $\SSS^2$,
  \begin{align*}
    \int_{\mathbb{S}^2}(\cos\th)^2=\int_{\mathbb{S}^2}(\sin\th\cos\vphi)^2=\int_{\mathbb{S}^2}(\sin\th\sin\vphi)^2=\frac{4\pi}{3}, \qquad |\SSS^2|=4\pi.
  \end{align*},} 
\begin{definition}\label{def:l=1}
  We say that a triplet of smooth real scalar functions $J^{(p)}=\{J^{(0)}, J^{(+)}, J^{(-)}\} $ on $S\subset \RR$ is an \emph{$\epg$-approximated basis of $\ell=1$ modes} on $S$ if the following are verified:
  \begin{equation}\label{eq:Jpsphericalharmonics}
    \begin{split}
      \left(r^2 \bigtriangleup+2\right)J^{(p)}=O(\epg), \qquad p=0,+,-,\\
      \frac{1}{|S|}\int_{S}J^{(p)}J^{(q)}=\frac 13 \delta_{pq} +O(\epg), \qquad p,q=0,+,-,\\
      \frac{1}{|S|}\int_{S}J^{(p)}=O(\epg), \qquad p=0,+,- , 
    \end{split}
  \end{equation}
  where $\epg>0$ is a sufficiently small constant.
\end{definition}

Observe that for a basis of $\ell=1$ mode as in Definition \ref{def:l=1}, we have
\begin{align*}
  \dds_2\dds_1 J^{(p)}=r^{-2}O(\mathring{\epsilon}).
\end{align*}
Assuming the existence of such a basis $J^{(p)}$, we define for a scalar function $\lambda$ on $S$:
\begin{align}
  \lambda_{\ell=1, J}\vcentcolon=\biggl\{ \int_{S}J^{(p)} \lambda, \qquad p\in \{-, 0, +\}\biggl\},
\end{align}
and for a 1-form $f$ on $S$:
\begin{align}\label{eq:defell=1foroneformofepgspheres}
  f_{\ell=1, J}\vcentcolon=\biggl\{ \int_{S}J^{(p)} \ddd_1^{S}f, \qquad p\in \{-, 0, +\}\biggl\},
\end{align}
where $\ddd_1^{S}f=(\div^S f, \curl^S f)$.

We say that a function $\lambda$ is supported in $\ell\leq 1$, denoted by $\lambda_{\ell\geq 2, J}=0$, if there exist constants $A_0$, $B_{-}$, $B_{0}$, $B_{+}$ such that 
\begin{align*}
  \lambda= A_0 + B_{-}J^{(-)}+B_{0}J^{(0)}+B_{+}J^{(+)}.
\end{align*}

We also use the following notation for a function $\lambda$ to denote its projection to the $\ell\geq1$ modes:
\begin{align}\label{eq:ellgeq1}
  \lambda_{\ell\geq 1}\vcentcolon=\lambda - \frac{1}{|S|}\int_S \lambda,
\end{align}
which is independent of the choice of $J^{(p)}$.

Recall from \cite{klainermanEffectiveResultsUniformization2022} that there is a way of defining canonical
\texorpdfstring{$\ell=1$}{} modes on spheres. We first recall
the effective uniformization theorem of \cite{klainermanEffectiveResultsUniformization2022}.

\begin{theorem}[Effective Uniformization; Corollary 3.8 in \cite{klainermanEffectiveResultsUniformization2022}]
  \label{thm:eff-unif}
  Let $(S, g^S)$ be a fixed sphere. Then, up to isometries of $\SSS^2$, there exists a unique diffeomorphism $\Phi:\SSS^2 \to S$ and a unique centered conformal factor $u \in H^1(\SSS^2)$ with
  \[
    \int_{\SSS^2} x e^{2u} = 0,
  \]
  such that
  \[
    \Phi^\#(g^S) = (r^S)^2 e^{2u} \ga_0.
  \]
  Moreover, under the almost round condition
  \begin{equation}
    \label{eq:almost-round}
    \left\|K^S - \frac{1}{(r^S)^2} \right\|_{L^\infty(S)} \le \frac{\ep}{(r^S)^2},
  \end{equation}
  where $K^S = K(g^S)$ is the Gauss curvature, we have 
  \begin{equation*}
    \| u \circ \Phi^{-1} \|_{L^\infty(S)} \les \ep.
  \end{equation*}
  for sufficiently small $\ep > 0$.

\end{theorem}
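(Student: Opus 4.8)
The plan is to combine the classical uniformization theorem for the two-sphere with a conformal balancing (center-of-mass) argument to fix the residual gauge, and then to extract the quantitative bound from the prescribed Gauss curvature equation in the almost-round regime.

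\medn
First I would invoke the classical uniformization theorem: since $(S,g^S)$ is diffeomorphic to $\SSS^2$, there is a conformal diffeomorphism $\Phi_0:\SSS^2\to S$ with $\Phi_0^\#(g^S)=(r^S)^2 e^{2\phi}\ga_0$ for some $\phi\in H^1(\SSS^2)$, where the constant factor $(r^S)^2$ is forced by the area normalization $\int_{\SSS^2} e^{2\phi}\,d\mu_{\ga_0}=|\SSS^2|$ coming from the definition of the area radius $r^S$. The conformal factor is determined only up to the action of the conformal group of $(\SSS^2,\ga_0)$: precomposing $\Phi_0$ with a M\"obius transformation $M$, where $M^\#\ga_0=e^{2\psi_M}\ga_0$, replaces $\phi$ by $\phi\circ M+\psi_M$. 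The $SO(3)$ subgroup acts by isometries (with $\psi_M=0$), so the genuine residual ambiguity is the $3$-dimensional family of conformal dilations.

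\medn
Next I would eliminate this ambiguity by a balancing argument. Parametrizing the dilations by points $a$ in the open unit ball $B^3\subset\mathbb{R}^3$, with $\phi_0=\mathrm{id}$, I would study the center-of-mass map
\begin{equation*}
  F(a)\vcentcolon=\int_{\SSS^2} x\, e^{2(\phi\circ\phi_a+\psi_{\phi_a})}\,d\mu_{\ga_0}\in\mathbb{R}^3,
\end{equation*}
where $x:\SSS^2\hookrightarrow\mathbb{R}^3$ is the standard embedding; note that the total measure $\int_{\SSS^2}e^{2(\phi\circ M+\psi_M)}d\mu_{\ga_0}$ is M\"obius-invariant, equal to $|\SSS^2|$. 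As $a\to\partial B^3$ the pushed-forward measure concentrates at the boundary point $a/|a|$, so $F$ extends continuously to $\overline{B^3}$ with $F|_{\partial B^3}$ pointing radially outward; a degree-theoretic argument then produces a zero $a_*$, giving $\int_{\SSS^2}x\,e^{2u}=0$ with $\Phi=\Phi_0\circ\phi_{a_*}$ and $u=\phi\circ\phi_{a_*}+\psi_{\phi_{a_*}}$. Uniqueness up to isometries of $\SSS^2$ follows from the strict monotonicity of the balancing map, which forces the zero to be unique modulo the residual $SO(3)$-action.

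\medn
Finally, for the quantitative estimate I would use that $u$ solves the prescribed Gauss curvature equation
\begin{equation*}
  -\De_{\ga_0}u=(K^S\circ\Phi)\,(r^S)^2 e^{2u}-1,
\end{equation*}
which is consistent by Gauss--Bonnet. Under the almost-round condition $K^S(r^S)^2=1+O(\ep)$ in $L^\infty$, this becomes $-\De_{\ga_0}u-(e^{2u}-1)=O(\ep)\,e^{2u}$, a perturbation of the equation solved by $u\equiv 0$. The key structural point is that the linearization at $u=0$ is $-\De_{\ga_0}-2$, whose kernel is exactly the space of $\ell=1$ modes (the coordinate functions $x^i$); the centering condition $\int_{\SSS^2}x\,e^{2u}=0$ linearizes to the vanishing of the $\ell=1$ part of $u$, while the area normalization fixes the $\ell=0$ part. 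Thus on the admissible space the linearized operator is invertible with a spectral gap (eigenvalue $-2$ on $\ell=0$ and $\ge 4$ on $\ell\ge 2$), and a continuity/implicit-function argument in $\ep$ yields $\|u\|_{L^\infty}\lesssim\ep$, hence $\|u\circ\Phi^{-1}\|_{L^\infty(S)}\lesssim\ep$. The main obstacle is precisely this interplay: the degeneracy of the linearized operator is exactly matched by the centering condition, so the delicate part is organizing the estimate so that the $\ell=1$ gauge directions are controlled by the balancing while the complementary modes are controlled by ellipticity, and then closing the two-dimensional critical estimate — passing from $H^1$ to a uniform $L^\infty$ bound on the exponential nonlinearity $e^{2u}$ via a Moser--Trudinger/bootstrapping argument — uniformly as $\ep\to 0$.
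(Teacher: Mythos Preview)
The paper does not supply its own proof of this statement: Theorem~\ref{thm:eff-unif} is quoted verbatim as Corollary~3.8 of \cite{klainermanEffectiveResultsUniformization2022} and is immediately followed by Definition~\ref{def:can-ell1-modes} with no intervening argument. There is therefore nothing in the present paper to compare your proposal against.

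That said, your sketch is a faithful outline of the strategy actually carried out in \cite{klainermanEffectiveResultsUniformization2022}: classical uniformization, then a center-of-mass/balancing argument on the M\"obius group to enforce $\int_{\SSS^2} x\,e^{2u}=0$, and finally a perturbative analysis of the prescribed curvature equation $-\De_{\ga_0}u = (r^S)^2(K^S\circ\Phi)e^{2u}-1$ around $u=0$, using that the centering condition kills the $\ell=1$ kernel of the linearization $-\De_{\ga_0}-2$. One point worth flagging: the uniqueness of the balanced representative does \emph{not} follow from a simple monotonicity as you suggest, and in fact the degree argument alone only gives existence; in \cite{klainermanEffectiveResultsUniformization2022} uniqueness (up to $SO(3)$) is obtained in the almost-round regime as a byproduct of the same quantitative implicit-function/continuity argument that produces the $L^\infty$ bound, not from a separate global argument on the M\"obius orbit. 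Your final paragraph correctly identifies the genuine analytic content --- the matching of the $\ell=1$ degeneracy with the centering gauge and the two-dimensional critical bootstrap for $e^{2u}$ --- but you should be aware that this is exactly where the work in \cite{klainermanEffectiveResultsUniformization2022} lies, and a full proof requires more than an appeal to Moser--Trudinger.
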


\begin{definition}[Basis of canonical $\ell=1$ modes on $S$]
  \label{def:can-ell1-modes}
  Let $(S, g^S)$ be an almost round sphere, i.e., satisfying \eqref{eq:almost-round}. Let $(\Phi, u)$ be the unique, up to isometries of $\SSS^2$, uniformization pair given by Theorem \ref{thm:eff-unif}, so that
  \[
    \Phi : \SSS^2 \longrightarrow S, \qquad \Phi^\#(g^S) = (r^S)^2 e^{2u} \ga_0, \qquad u \in H^1(\SSS^2), \quad \int_{\SSS^2} x e^{2u} = 0.
  \]
  We define the \emph{canonical $\ell=1$ modes on $S$} by
  \[
    J^S \vcentcolon= J^{\SSS^2} \circ \Phi^{-1},
  \]
  where $J^{\SSS^2}$ denotes the $\ell=1$ spherical harmonics as in \eqref{eq:sph-harm}.
\end{definition}

\begin{remark}
  The canonical basis $J^S$ is unique up to a rotation on $\SSS^2$.
\end{remark}

\subsection{Definition of mass and charge parameters}\label{sec:def-mass-charge}

We recall the following definition of the corresponding connection coefficients and null Weyl curvature components relative to the null frame $(e_3, e_4, e_1, e_2)$:
\begin{align*}
  \chib_{ab}&=\g(\D_{e_a}e_3, e_b), \qquad \chi_{ab}=\g(\D_{e_a}e_4, e_b), \\
  \xib_a&=\frac 1 2 \g(\D_3 e_3, e_a), \qquad \xi_a=\frac 1 2 \g(\D_4 e_4, e_a), \\
  \omb&=\frac 1 4 \g(\D_3 e_3, e_4), \qquad \om=\frac 1 4 \g(\D_4 e_4, e_3),\\
  \etab_a&=\frac 1 2 \g( \D_4 e_3, e_a), \qquad \eta_a=\frac 1 2 \g( \D_3 e_4, e_a),\\
  \ze_a&=\frac 1 2\g(\D_{e_a}e_4, e_3),
\end{align*}
and
\begin{align*}
  \alpha_{ab}&=\W(e_a, e_4, e_b, e_4), \qquad \aa_{ab}=\W(e_a, e_3, e_b, e_3),\\
  \b_a&=\frac 1 2 \W(e_a, e_4, e_3, e_4), \qquad \bb_a= \frac 1 2\W(e_a, e_3, e_3, e_4), \\
  \rho&=\frac 1 4 \W(e_3, e_4, e_3, e_4), \qquad \dual\rho=\frac 1 4 \dual\W (e_3, e_4, e_3, e_4),
\end{align*}
where $\W$ is the Weyl curvature of the spacetime metric $\g$ and $\dual \W$ is its Hodge dual.

We denote
\begin{align*}
  \trch\vcentcolon=\de^{ab}\chi_{ab}, \qquad \trchb\vcentcolon=\de^{ab}\chib_{ab}.
\end{align*}

Assume that the spacetime region $\RR$ comes endowed with a closed and co-closed two form $\F$, the electromagnetic tensor, i.e.
\begin{align}
  \div \F=0, \qquad d\F=0.
\end{align}
We define the following null electromagnetic components relative to the null frame $(e_3, e_4, e_1, e_2)$:
\begin{align*}
  \bF_a&= \F(e_a, e_4), \qquad \bbF_a= \F(e_a, e_3), \\
  \rhoF&=\frac 1 2 \F(e_3, e_4), \qquad \dual\rhoF=\frac12\dual\F (e_3, e_4).
\end{align*}

\begin{remark}
  Observe that in an outgoing geodesic foliation of Reissner-Nordstr\"om spacetime, we have
  \begin{align}\label{eq:condition-RN}
    \trch =\frac{2}{r}, \qquad
    \trchb=-\frac{2}{r}\left( 1-\frac{2M}{r}+\frac{Q^2}{r^2}\right), \qquad
    \mu= \frac{2M}{r^3}-\frac{2Q^2}{r^4}.
  \end{align}
  In the charged Reissner-Nordstr\"om spacetime, the standard definition of electric and magnetic charge and that of Hawking mass of a sphere $S$ or radius $r$ are respectively
  \begin{align*}
    Q^S \vcentcolon= \frac{1}{4\pi}\int_{S} \rhoF, \qquad
    e^S \vcentcolon= \frac{1}{4\pi}\int_{S} \dual \rhoF, \qquad
    \frac{2m^S}{r}\vcentcolon=1+\frac{1}{16}\int_S \trch \trchb,
  \end{align*}
  and result\footnote{The relation $e^S=0$ can be enforced by making use of the $U(1)$ symmetry of the Maxwell equations.} in the relation
  \begin{align}
    m^S=M-\frac{(Q^S)^2}{2r}, \qquad e^S=0.
  \end{align}
\end{remark}

The above remark motivates the following definitions for a sphere $S\subset \RR$. 
\begin{definition}\label{def:parameters} On any sphere $S\subset \RR$ with area radius $r$, we define the following parameters.
  \begin{enumerate}
  \item The \emph{auxiliary mass $m^S$} is chosen to be the Hawking mass of $S$, i.e.,
    \begin{align}\label{eq:definition-auxiliary-m}
      \frac{2m^S}{r}\vcentcolon=1+\frac{1}{16\pi} \int_{S} \trch\trchb. 
    \end{align}
  \item The \emph{electric charge $Q^S$} is chosen as the average of $\rhoF$ at $S$, i.e.
    \begin{equation}
      \label{eq:M3:Q:def}
      Q^S \vcentcolon= \frac{1}{4\pi}\int_{S} \rhoF= \frac{1}{4\pi}\int_{S}\dual\F.
    \end{equation}
  \item The \emph{magnetic charge $e^S$} is chosen as the average of $\dual\rhoF$ at $S$, i.e.
    \begin{equation}
      e^S \vcentcolon= \frac{1}{4\pi}\int_{S} \dual \rhoF= \frac{1}{4\pi}\int_{S}\F.
    \end{equation}
  \item The \emph{mass parameter $M^S$} of $S$ is chosen as 
    \begin{align}
      M^S \vcentcolon= m^S + \frac{(Q^S)^2+(e^S)^2}{2r}.
    \end{align}
  \item The \emph{center of mass function} on $S$ is defined as 
    \begin{equation}
      \label{eq:definition-bm-C}
      \bm{C}^S\vcentcolon=\div\b -\frac{Q^S}{r^2}\div \bF+\frac{2Q^S}{r^3}\rhoFc,
    \end{equation}
    where $\rhoFc \vcentcolon= \displaystyle\rhoF -\frac{Q^S}{r^2}$.
  \end{enumerate}
\end{definition}

\begin{remark}
  An analogous definition for the angular momentum function of $S$ can be given, see \cite{fangEinsteinMaxwellEquationsMassCentered2025}.
\end{remark}

Recall that in the above $r$ always denotes the area radius $r\vcentcolon=\sqrt{\frac{|S|}{4\pi}}$. We also use the notation:
\begin{align}
  \Up\vcentcolon= 1-\frac{2M}{r}+\frac{Q^2+e^2}{r^2}=1-\frac{2m}{r}.
\end{align}

\begin{remark}
  In the case of vacuum spacetime, $Q=0$ and the center of mass function $\bm{C}$ reduces to $\div\beta$, as used in \cite{klainermanEffectiveResultsUniformization2022}.
\end{remark}
\begin{remark}
  By the definition of electric charge \eqref{eq:M3:Q:def}, notice that $\rhoFc$ is defined to be the difference between $\rhoF$ and its average, and therefore by \eqref{eq:ellgeq1} we have $\rhoFc= \rhoF_{\ell\geq 1}$.
\end{remark}

\subsection{Linearized quantities}

The mass and charge parameters given in Definition \ref{def:parameters} are used to define the linearized quantities with respect to their expected value in Reissner-Nordstr\"om. 

\begin{definition} 
  We define the following linearized quantities:
  \begin{gather*}
    \trchc \vcentcolon= \ds\trch-\frac{2}{r},\qquad
    \trchbc \vcentcolon= \ds\trchb+\frac{2\Up}{r}, \qquad \ombc \vcentcolon= \ds\omb-\frac{M}{r^2}+\frac{Q^2}{r^3}, \\
    \Kc \vcentcolon= K -\frac{1}{r^2}, \qquad 
    \rhoc \vcentcolon= \ds \rho +\frac{2M}{r^3} - \frac{2Q^2}{r^4}, \qquad \widecheck{\mu} \vcentcolon= \ds \mu -\frac{2M}{r^3} + \frac{2Q^2}{r^4}, \\
    \rhoFc \vcentcolon= \displaystyle\rhoF -\frac{Q}{r^2}, \qquad \widecheck{\Omb} \vcentcolon=\Omb+\Up, \qquad \widecheck{\varsigma} \vcentcolon= \varsigma-1,
  \end{gather*}
  where $M$ and $Q$ are the mass and electric charge defined in Definition \ref{def:parameters} and $\Up = 1-\frac{2M}{r} + \frac{Q^2}{r^2}$.
\end{definition}

We divide the linearized quantities into two sets, denoted $\Gag, \Gab$, depending on their expected decay in $r$: 
\begin{equation}\label{eq:Definition-Gag-Gab}
  \begin{split}
    \Gag\vcentcolon={}& \Bigg\{\trchc,\,\, \chih, \,\, \ze, \,\, \trchbc,\,\, r\muc ,\,\, r\rhoc, \,\, r\dual\rho, \,\, r\b, \,\, r\a, \,\,\bF, \,\,\dual\rhoF, \,\, \rhoFc,\,\,\\
            &  r\Kc, \,\, r^{-1} \big(e_4(r)-1\big),\,\, r^{-1}e_4(m), \,\, r^{-2}e_4(Q)\Bigg\},\\
    \Ga_b \vcentcolon={}& \Bigg\{\eta, \,\,\chibh, \,\, \ombc, \,\, \xib,\,\, r\bb, \,\, \aa, \,\, \bbF, \,\, r^{-1}\Ombc, \,\,r^{-1}\widecheck{\varsigma}, \,\, r^{-1}(e_3(r)+\Up\big), \,\, r^{-1}e_3(m), \,\, r^{-2} e_3(Q) \Bigg\}. 
  \end{split}
\end{equation}


\subsection{Main assumptions for \texorpdfstring{$\RR$}{}}
\label{subsubsect:regionRR2}


In the following definition, we specify the background spacetime region $\RR$.

\begin{definition}\label{definition-spacetime-region-RR}
  Let $m_0, Q_0>0$ be constants. Let $\epg>0$ be a sufficiently small constant and let $(\ug, \sg, \rg)$ three real numbers with $\rg$ sufficiently large so that 
  \begin{align}
    \epg\ll m_0, Q_0\ll \rg.
  \end{align}
  We define $\RR$ to be the region
  \begin{align}
    \RR\vcentcolon= \{ |u-\ug| \leq \epg, \qquad |s-\sg| \leq \epg \}
  \end{align}
  such that the following assumptions {\bf A1-A4} below are verified with constant $\epg$ on the background foliation of $\RR$.

  Given an integer $s_{max}\geq 3$, we assume\footnote{In view of \eqref{eq:assumtioninRRforGagandGabofbackgroundfoliation}, we will often replace $\Ga_g$ by $r^{-1} \Ga_b$.} the following. 
  \begin{enumerate}
  \item[\bf A1.]
    For $k\le s_{max}$
    \begin{equation}\label{eq:assumtioninRRforGagandGabofbackgroundfoliation}
      \begin{split}
        \| \Ga_g\|_{\infty,k}&\leq \epg r^{-2},\\
        \| \Ga_b\|_{\infty,k}&\leq \epg r^{-1}.
      \end{split}
    \end{equation}

  \item[\bf A2.] The Hawking mass $m=m(u,s)$, the electric charge $Q=Q(u,s)$ and the magnetic charge $e=e(u,s)$ of $S(u, s)$ verify 
    \begin{equation}\label{eq:assumtionsonthegivenusfoliationforGCMprocedure:Hawkingmass} 
      \sup_{\RR}\left|\frac{m}{m_0}-1\right| \leq \epg,\qquad \sup_{\RR}\left|\frac{Q}{Q_0}-1\right| \leq \epg, \qquad \sup_{\RR}|e|\leq \epg.
    \end{equation}

  \item[\bf A3.] 
    In the region of their respective validity\footnote{That is the quantities on the left verify the same estimates as those for $\Ga_b$, respectively $\Ga_g$.} we have
    \begin{equation}
      \undB_N^a,\,\, \undB_S^a \in r^{-1}\Ga_b, \qquad Z_N^a,\,\, Z_S^a \in \Ga_b,\qquad r^{-2} \widecheck{g}^{N}_{ab}, \,\, r^{-2} \widecheck{g}^{S}_{ab} \in r\Ga_g,
    \end{equation}
    where
    \begin{align*}
      \widecheck{g} ^{N}\!_{ab} &= g^N_{ab} - \frac{4r^2}{1+(y^1_{N})^2+(y^2_{N})^2) }\de_{ab},\\
      \widecheck{g}^{S}\!_{ab} &= g^S_{ab} - \frac{4r^2}{(1+(y^1_{S})^2+(y^2_{S})^2) } \de_{ab}.
    \end{align*}

  \item[\bf A4.] We assume the existence of a smooth family of scalar functions
    $\Jp:\RR\to\RRR$, for $p=0,+,-$, which is an $\epg$-approximated basis of $\ell=1$ modes as in Definition \ref{def:l=1} for all surfaces $S$ of the background foliation.
  \end{enumerate}
\end{definition}

\begin{remark}
  We note that the assumptions {\bf A1}, {\bf A2}, {\bf A3}, {\bf A4},
  are expected to be valid in the asymptotically flat regions of
  perturbations of a Kerr-Newman black hole.
\end{remark}

What follows is mostly independent of the
Einstein(-Maxwell) equations. We summarize here what
consequences of the Einstein(-Maxwell) equation we use in the following.

\paragraph{Assumptions on $e_4$ derivatives of some GCM quantities.} We need to assume that
\begin{align}\label{eq:assumption-Einstein}
  \begin{split}
    \|  e_4 (\trchc) \|_{\mathfrak{h}_k} = r^{-3}O(\epg), \\
    \|  e_4(\trchbc)\|_{\mathfrak{h}_k}= r^{-3}O(\epg), \\
    \| e_4 (\muc)\|_{\mathfrak{h}_k} = r^{-4}O(\epg).
  \end{split}
\end{align}

\begin{remark}
  The above are obviously satisfied in the case of Einstein vacuum or Einstein-Maxwell equations as a consequence of assumption {\bf A1}. They would also be satisfied in most reasonable matter models. 
\end{remark}

\subsection{General frame transformations}


\begin{lemma}[Lemma 3.1 in \cite{klainermanConstructionGCMSpheres2022}]
  \label{Lemma:Generalframetransf}
  Given a null frame $(e_3, e_4, e_1, e_2)$, a general null transformation from the null frame $(e_3, e_4, e_1, e_2)$ to another null frame $(e_3', e_4', e_1', e_2')$ can be written in the form,
  \begin{equation}\label{eq:Generalframetransf}
    \begin{split}
      e_4'&=\la\left(e_4 + f^b e_b +\frac 1 4 |f|^2 e_3\right),\\
      e_a'&= \left(\de_{ab} +\frac{1}{2}\fb_af_b\right) e_b +\frac 1 2 \fb_a e_4 +\left(\frac 1 2 f_a +\frac{1}{8}|f|^2\fb_a\right) e_3,\qquad a=1,2,\\
      e_3'&=\la^{-1}\left( \left(1+\frac{1}{2}f\c\fb +\frac{1}{16} |f|^2 |\fb|^2\right) e_3 + \left(\fb^b+\frac 1 4 |\fb|^2f^b\right) e_b + \frac 1 4 |\fb|^2 e_4 \right),
    \end{split}
  \end{equation}
  where $\la$ is a scalar, $f$ and $\fb$ are horizontal 1-forms, i.e.
  \begin{equation*}
  f(e_3) = f(e_4) = \fb(e_3) = \fb(e_4)=0.
  \end{equation*}
  The dot product and magnitude $|\c |$ are taken with respect to the
  standard euclidian norm of $\mathbb{R}^2$. We call $(f, \fb, \la)$
  the transition coefficients of the change of frame. We denote
  $F\vcentcolon=(f, \fb, \ovla)$ where $\ovla=\la-1$.
\end{lemma}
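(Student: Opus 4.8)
The plan is to prove the two implications contained in the claim: that for an arbitrary scalar $\la\neq 0$ and horizontal $1$-forms $f,\fb$ the vectors in \eqref{eq:Generalframetransf} form a null frame, and conversely that every null frame spanning the same tangent space is of this form. Throughout I would use only the null-frame relations $\g(e_4,e_4)=\g(e_3,e_3)=0$, $\g(e_3,e_4)=-2$, $\g(e_a,e_b)=\de_{ab}$, $\g(e_a,e_3)=\g(e_a,e_4)=0$, together with their primed analogues. For the sufficiency direction I would substitute \eqref{eq:Generalframetransf} into each of the six families of inner products. For instance $\g(e_4',e_4')=\la^2\big(|f|^2+2\c\tfrac{1}{4}|f|^2\,\g(e_3,e_4)\big)=\la^2(|f|^2-|f|^2)=0$, and the coefficient $\tfrac{1}{4}|f|^2$ of $e_3$ in $e_4'$ is exactly the value forced by this cancellation. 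The analogous computations for $\g(e_3',e_3')=0$, $\g(e_3',e_4')=-2$, $\g(e_a',e_b')=\de_{ab}$ and the mixed products $\g(e_a',e_3')=\g(e_a',e_4')=0$ are routine but bookkeeping-heavy; each higher-order coefficient appearing in \eqref{eq:Generalframetransf} (such as $\tfrac{1}{8}|f|^2\fb_a$, the factor $1+\tfrac{1}{2} f\c\fb+\tfrac{1}{16}|f|^2|\fb|^2$, and $\fb^b+\tfrac{1}{4}|\fb|^2 f^b$) is precisely the one that makes the corresponding relation hold identically in $(f,\fb)$.

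Conversely, given a second null frame in the same tangent space I would recover the transition coefficients constructively. Since $e_4'$ is null, write $e_4'=\la e_4+\mu e_3+\la f^b e_b$; imposing $\g(e_4',e_4')=0$ forces $\mu=\tfrac{1}{4}\la|f|^2$, which reproduces the first line of \eqref{eq:Generalframetransf} and extracts $\la$ (the $e_4$-coefficient, nonzero for frames comparable to the background) and $f$. Next, the spacelike $2$-plane $\{e_3',e_4'\}^\perp$ is determined; expressing its unit sections in the original frame and imposing $\g(e_a',e_4')=0$ and $\g(e_a',e_b')=\de_{ab}$ yields the stated expression for $e_a'$ and introduces $\fb$ through the relation $\g(e_a',e_3)=-\fb_a$. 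Finally $e_3'$ is the unique null vector orthogonal to $e_1',e_2'$ normalized by $\g(e_3',e_4')=-2$, which reproduces the third line.

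The step I expect to be genuinely delicate is ensuring completeness of the parametrization, which amounts to a dimension count. The Lorentz group relating two null frames is six-dimensional, whereas $(\la,f,\fb)$ supplies only $1+2+2=5$ parameters; the missing degree of freedom is the $SO(2)$ rotation of the horizontal pair $(e_1',e_2')$ inside $\{e_3',e_4'\}^\perp$. Consequently \eqref{eq:Generalframetransf} describes the general transformation only \emph{up to} such a horizontal rotation, and the careful point is to pin this residual freedom by the normalization built into the formula for $e_a'$ (which aligns the new horizontal frame with the old) and then to verify that, with this convention, the map $(\la,f,\fb)\mapsto(e_3',e_4',e_1',e_2')$ is a bijection onto null frames of the same orientation with $\la>0$. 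Checking that no over-determination occurs—that is, that the constraints used to recover $\fb$ and $e_3'$ are automatically consistent—is exactly the content of the six identities established in the sufficiency step, so the two directions close together.
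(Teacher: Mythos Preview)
The paper does not prove this lemma; it simply quotes it from \cite{klainermanConstructionGCMSpheres2022}. Your proposal is a correct and natural way to establish the result: verify the null-frame relations by direct substitution, and for the converse recover $(\la,f,\fb)$ constructively from the components of $e_4'$ and $e_a'$ in the old frame. Your observation about the missing $SO(2)$ degree of freedom is accurate and worth flagging---the formula \eqref{eq:Generalframetransf} indeed fixes the horizontal rotation by the convention that $e_a'\to e_a$ as $(f,\fb)\to 0$, and ``general'' here should be read modulo that residual rotation.
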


The Ricci coefficients, curvature and electromagnetic components get modified by a general frame transformation as above. We recall here the frame transformations which will be useful in the following. For a full list of transformation identities see for example Proposition 2.14 in \cite{shenConstructionGCMHypersurfaces2023}.

\begin{lemma}[Proposition 3.3 in \cite{klainermanConstructionGCMSpheres2022}]\label{lem:EM-transform}
  Under a general transformation of type \eqref{eq:Generalframetransf}, the Ricci coefficients transform as follows:
  \begin{itemize}
  \item The transformation formulas for $\chi $ are given by 
    \begin{equation*}
      \begin{split}
        \la^{-1}\trch' ={}& \trch + \div'f + f\c\eta + f\c\ze+\err(\trch,\trch')\\
        \err(\trch,\trch') ={}& \fb\c\xi+\frac{1}{4}\fb\c\left(f\trch -\dual f\atrch\right) +\om (f\c\fb) -\omb |f|^2 \\
                       & -\frac{1}{4}|f|^2\trchb - \frac{1}{4} ( f\c\fb) \la^{-1}\trch' +\frac{1}{4} (\fb\wedge f) \la^{-1}\atrch'+\lot,
      \end{split}
    \end{equation*}

  \item The transformation formula for $\xi$ is given by 
    \begin{equation*}
      \begin{split}
        \la^{-2}\xi' ={}& \xi +\frac{1}{2}\la^{-1}\nab_4'f+\frac{1}{4}(\trch f -\atrch\dual f)+\om f +\err(\xi,\xi'),\\
        \err(\xi,\xi') ={}& \frac{1}{2}f\c\chih+\frac{1}{4}|f|^2\eta+\frac{1}{2}(f\c \ze)\,f -\frac{1}{4}|f|^2\etab \\
                     &+ \la^{-2}\left( \frac{1}{2}(f\c\xi')\,\fb+ \frac{1}{2}(f\c\fb)\,\xi' \right) +\lot
      \end{split}
    \end{equation*}

  \item The transformation formula for $\xib$ is given by 
    \begin{equation*}
      \begin{split}
        \la^2\xib' &= \xib + \frac{1}{2}\la\nab_3'\fb + \omb\,\fb + \frac{1}{4}\trchb\,\fb - \frac{1}{4}\atrchb\dual\fb +\err(\xib, \xib'),\\
        \err(\xib, \xib') &= \frac{1}{2}\fb\c\chibh - \frac{1}{2}(\fb\c\ze)\fb + \frac{1}{4} |\fb|^2\etab -\frac{1}{4} |\fb|^2\eta'+\lot
      \end{split}
    \end{equation*}

  \item The transformation formula for $\eta$ is given by 
    \begin{equation*}
      \begin{split}
        \eta' &= \eta +\frac{1}{2}\la \nab_3'f +\frac{1}{4}\fb\trch -\frac{1}{4}\dual\fb\atrch -\omb\, f +\err(\eta, \eta'),\\
        \err(\eta, \eta') &= \frac{1}{2}(f\c\fb)\eta +\frac{1}{2}\fb\c\chih
                            +\frac{1}{2}f(\fb\c\ze) - (\fb\c f)\eta'+ \frac{1}{2}\fb (f\c\eta') +\lot
      \end{split}
    \end{equation*}
  \item The transformation formula for $\etab$ is given by 
    \begin{equation*}
      \begin{split}
        \etab' &= \etab +\frac{1}{2}\la^{-1}\nab_4'\fb +\frac{1}{4}\trchb f - \frac{1}{4}\atrchb\dual f -\om\fb +\err(\etab, \etab'),\\
        \err(\etab, \etab') &= \frac{1}{2}f\c\chibh + \frac{1}{2}(f\c\etab)\fb-\frac{1}{4} (f\c\ze)\fb -\frac{1}{4} |\fb|^2\la^{-2}\xi'+\lot
      \end{split}
    \end{equation*}

  \item The transformation formula for $\om$ is given by
    \begin{equation*}
      \begin{split}
        \la^{-1}\om' &= \om -\frac{1}{2}\la^{-1}e_4'(\log\la)+\frac{1}{2}f\c(\ze-\etab) +\err(\om, \om'),\\
        \err(\om, \om') &= -\frac{1}{4}|f|^2\omb - \frac{1}{8}\trchb |f|^2+\frac{1}{2}\la^{-2}\fb\c\xi' +\lot
      \end{split}
    \end{equation*}

  \item The transformation formula for $\omb$ is given by
    \begin{equation*}
      \begin{split}
        \la\omb' ={}& \omb+\frac{1}{2}\la e_3'(\log\la) -\frac{1}{2}\fb\c\ze -\frac{1}{2}\fb\c\eta +\err(\omb,\omb'),\\
        \err(\omb,\omb') ={}& f\c\fb\,\omb-\frac{1}{4} |\fb|^2\om +\frac{1}{2}f\c\xib + \frac{1}{8}(f\c\fb)\trchb + \frac{1}{8}(\fb\wedge f)\atrchb \\
                 & -\frac{1}{8}|\fb|^2\trch -\frac{1}{4}\la \fb\c\nab_3'f +\frac{1}{2} (\fb\c f)(\fb\c\eta')- \frac{1}{4}|\fb|^2 (f\c\eta')+\lot
      \end{split}
    \end{equation*}
  \end{itemize}
  where, for the transformation formulas of the Ricci coefficients above, $\lot$ denote expressions of the type
  \begin{eqnarray*}
    \lot&=O((f,\fb)^3)\Ga +O((f,\fb)^2) \Gac
  \end{eqnarray*}
  containing no derivatives of $f$, $\fb$, $\Ga$ and $\Gac$.

  Under a general null frame transformation of the form \eqref{eq:Generalframetransf}, the electromagnetic components transform as follows:
  \begin{itemize}
  \item The transformation formula for $\bF$, $\bbF$ are given by
    \begin{align*}
      \lambda^{-1} \bF' &= \bF - \dual f \, \dual\rhoF + f \, \rhoF + \lot, \\
      \lambda \, \bbF' &= \bbF - \dual \fb \, \dual\rhoF - \fb \, \rhoF + \lot.
    \end{align*}

  \item The transformation formula for $\rhoF$ and $\dual\rhoF$ are given by
    \begin{align*}
      \rhoF' &= \rhoF + \err(\rhoF, \rhoF'), \\
      \err(\rhoF, \rhoF') &= -\frac{1}{2} f \c \bbF + \frac{1}{2} \fb \c \bF + \frac{1}{2} \rhoF (f \c \fb) + \frac{1}{2} \dual\rhoF (f \wedge \fb) + \lot, \\
      \dual\rhoF' &= \dual\rhoF + \err(\dual\rhoF, \dual\rhoF'), \\
      \err(\dual\rhoF, \dual\rhoF') &= -\frac{1}{2} f \c \dual \bbF - \frac{1}{2} \fb \c \dual \bF + \frac{1}{2} \dual\rhoF (f \c \fb) - \frac{1}{2} \rhoF (f \wedge \fb) + \lot.
    \end{align*}
  \end{itemize}

  Here, the error term $\lot$ denotes cubic and quadratic terms in $f$, $\fb$:
  \[
    \lot = O((f,\fb)^3)(\rhoF, \dual\rhoF) + O((f,\fb)^2)(\bF, \bbF),
  \]
  and contains no derivatives of $f$, $\fb$, $\bF$, $\bbF$, $\rhoF$, or $\dual\rhoF$.
\end{lemma}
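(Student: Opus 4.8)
The plan is to split the quantities into two families according to whether they are purely algebraic contractions of a tensor against the frame (the electromagnetic components) or whether they involve the Levi-Civita connection $\D$ (the Ricci coefficients), and to treat each by direct substitution of \eqref{eq:Generalframetransf} followed by expansion in orders of $(f,\fb)$.

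For the electromagnetic components $\bF, \bbF, \rhoF, \dual\rhoF$, the computation is purely algebraic, since each is a contraction of the two-form $\F$ (or its Hodge dual) against the frame vectors. I would substitute the explicit expressions for $(e_3', e_4', e_a')$ into the defining contractions and expand using the bilinearity and antisymmetry of $\F$. For example, in $\rhoF' = \tfrac12\F(e_3', e_4')$ the powers of $\la$ cancel between $e_3'$ and $e_4'$, and the linear terms $\tfrac12\F(\fb^b e_b, e_4) + \tfrac12\F(e_3, f^c e_c)$ reduce, via $\F(e_b, e_4) = \bF_b$ and $\F(e_3, e_c) = -\bbF_c$, to $-\tfrac12 f\c\bbF + \tfrac12\fb\c\bF$; the quadratic and cubic contributions are collected into $\lot$. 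The same substitution yields $\bF'$, $\bbF'$ and $\dual\rhoF'$, the last requiring the identity relating $\dual\F$-contractions to $\F$-contractions.

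For each Ricci coefficient I would expand the defining expression $\g(\D_{e_\mu'} e_\nu', e_\si')$ by the Leibniz rule. The connection acts either on the scalar and one-form coefficients $\la, f, \fb$ — producing the primed derivative terms $\nab' f$, $\nab_3' f$, $\nab_4'\fb$, $e_4'(\log\la)$ that appear in the stated formulas — or on the original frame vectors $\D_{e_\mu} e_\nu$, which I would re-express in the background frame through the Ricci coefficients using metric compatibility $\D\g = 0$ together with the null normalizations $\g(e_3,e_4) = -2$ and $\g(e_3,e_3) = \g(e_4,e_4) = 0$. The powers of $\la$ in $\la^{-1}\trch'$, $\la^{-2}\xi'$, $\la^2\xib'$, $\la^{-1}\om'$, $\la\omb'$ are dictated by the boost weight of each coefficient under $e_4'\mapsto\la e_4'$, $e_3'\mapsto\la^{-1}e_3'$, and I would project onto the new horizontal distribution and decompose into trace, symmetric-traceless and antisymmetric parts to isolate $\trch'$, $\chih'$ and $\atrch'$.

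The bookkeeping of orders in $(f,\fb)$ is the crux and the main obstacle. The linear terms must match the stated principal parts, the quadratic terms — each carrying one factor of a Ricci or electromagnetic coefficient — must reproduce the listed $\err$ expressions, and all contributions of cubic order or higher are absorbed into $\lot$. The stated $\lot$ structure $O((f,\fb)^3)\Ga + O((f,\fb)^2)\Gac$ must be respected, with the linearized (checked) coefficients entering only at quadratic order and the full coefficients at cubic order. Since the connection does not see the Maxwell field, the Ricci-coefficient formulas coincide with Proposition 3.3 of \cite{klainermanConstructionGCMSpheres2022} and may be cited, so the genuinely new content is the electromagnetic block computed as above; the difficulty is organizational — keeping the $\la$-weights and the linear/quadratic/cubic separation straight across a long but conceptually shallow calculation.
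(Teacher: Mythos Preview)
Your proposal is correct and matches the paper's proof essentially exactly: the paper cites Proposition~3.3 of \cite{klainermanConstructionGCMSpheres2022} for the Ricci coefficients (as you suggest doing), and then computes $\bF'$ and $\rhoF'$ by direct substitution of \eqref{eq:Generalframetransf} into the defining contractions, obtaining $\bbF'$ and $\dual\rhoF'$ by symmetry. Your sketch of the $\rhoF'$ computation is precisely what the paper carries out.
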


\begin{proof}
  The transformation formulas for the Ricci coefficients are the same as in the case of vacuum.
  We compute the formulas for the electromagnetic components:
  \begin{align*}
    \bF'_a &= \F(e'_a, e'_4) \\
           &= \lambda \F\left( \left( \de_a^b + \frac{1}{2} \fb_a f^b \right) e_b + \frac{1}{2} \fb_a e_4 + \frac{1}{2} f_a e_3, \; e_4 + f^b e_b + \frac{1}{4} |f|^2 e_3 \right) + \lot \\
           &= \lambda \left( \F_{a4} + f^b \F_{ab} + \frac{1}{2} f_a \F(e_3, e_4) \right) + \lot \\
           &= \lambda \left( \bF_a - \dual f_a \dual\rhoF + f_a \rhoF \right) + \lot.
  \end{align*}
  The formula for $\bbF$ follows by symmetry, using that $\rhoF$ changes sign and $\dual\rhoF$ remains unchanged under $e_3 \leftrightarrow e_4$. Next, compute:
  \begin{align*}
    \rhoF' &= \frac{1}{2} \F(e_3', e_4') \\
           &= \frac{1}{2} \F\left( \left(1 + \frac{1}{2} f \c \fb \right) e_3 + \fb^b e_b + \frac{1}{4} |\fb|^2 e_4, \; e_4 + f^b e_b + \frac{1}{4} |f|^2 e_3 \right) + \lot \\
           &= \rhoF - \frac{1}{2} f^b \bbF_b + \frac{1}{2} \fb^b \bF_b - \frac{1}{2} \in_{ab} \fb^a f^b \dual\rhoF + \frac{1}{2} \rhoF (f \c \fb) + \lot,
  \end{align*}
  as claimed.
\end{proof}
\subsection{Deformation of surfaces in \texorpdfstring{$\RR$}{}}


\begin{definition}
  \label{definition:Deformations}
  We say that $\S$ is an \emph{$O(\dg)$ deformation of $ \ovS$} if there exist smooth scalar functions $U, S$ defined on $\ovS$ and a map 
  a map $\Psi:\ovS\to \S $ verifying, on any coordinate chart $(y^1, y^2) $ of $\ovS$, 
  \begin{equation*}
    \Psi(\ovu, \ovs, y^1, y^2)=\left( \ovu+ U(y^1, y^2 ), \, \ovs+S(y^1, y^2 ), y^1, y^2 \right)
  \end{equation*}
  with $(U, S)$ smooth functions on $\ovS$ of size $\dg$.
\end{definition}

\begin{definition}
  Given a deformation $\Psi:\ovS\to \S$ we say that 
  a new frame $(e_3', e_4', e_1', e_2')$ on $\S$, obtained from the standard frame $(e_3, e_4, e_1, e_2)$ via the transformation \eqref{eq:Generalframetransf}, is \emph{$\S$-adapted} if the horizontal vectorfields $e'_1, e'_2$ are tangent to $\S$ or, equivalently $e_3', e_4' $ are perpendicular to $S$.
\end{definition}

\begin{lemma}\label{lemma:comparison-gaS-ga}
  Let $\ovS \subset \RR$. Let $\Psi:\ovS\longrightarrow \S $ be a deformation generated by the functions $(U, S)$ as in Definition \ref{definition:Deformations} and denote by $g^{\S,\#}$ the pull back of the metric $g^\S$ to $\ovS$. Assume the bound, for $s\le s_{max}+1$,
  \begin{equation}\label{assumption-UV-dg}
    \| (U, S)\|_{L^\infty(\ovS)} +r ^{-1} \big\|(U, S)\big\|_{\hk_{s}(\ovS)} \les \dg.
  \end{equation}
  Then 
  \begin{equation}
    \frac{r^\S}{\ovr}= 1 + O(r ^{-1} \dg ),
  \end{equation}
  where $r^\S$ is the area radius of $\S$ and $\ovr$ that of $\ovS$. 
  
  Also,
  \begin{equation}
    \big\| g^{\S, \#} -\ovg\big\|_{L^\infty} +r^{-1} \big\| g^{\S, \#} -\ovg\big\|_{\hk_{s}(\ovS)}\les \dg r.
  \end{equation}
\end{lemma}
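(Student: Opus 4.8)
The plan is to compute the pulled-back metric $g^{\S,\#}$ directly from the adapted-coordinate form of $\g$ and the deformation map $\Psi$, and then compare it with $\ovg$ term by term. First I would push forward the coordinate vectorfields: since $\Psi(y^1,y^2)=(\ovu+U,\ovs+S,y^1,y^2)$, one has $\Psi_\ast\pr_{y^a}=\pr_{y^a}+(\pr_{y^a}U)\pr_u+(\pr_{y^a}S)\pr_s$, so that $g^{\S,\#}_{ab}=\g(\Psi_\ast\pr_{y^a},\Psi_\ast\pr_{y^b})$ evaluated at the deformed point. Expanding this with the explicit form of $\g$ recorded in Section \ref{sec:background} gives
\begin{align*}
g^{\S,\#}_{ab}
= g_{ab}
-\vsi\big(\pr_{y^a}U\,\pr_{y^b}S+\pr_{y^b}U\,\pr_{y^a}S\big)
-\vsi\big(\pr_{y^a}U\,g_{bc}\undB^c+\pr_{y^b}U\,g_{ac}\undB^c\big)
+\vsi^2\big(\Omb+g_{cd}\undB^c\undB^d\big)\pr_{y^a}U\,\pr_{y^b}U,
\end{align*}
where all coefficients are evaluated at $(\ovu+U,\ovs+S,y)$. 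Writing $\pr_{y^a}=Y_{(a)}^c e_c$ with $|Y_{(a)}^c|\les r$ and using \eqref{assumption-UV-dg} together with Sobolev embedding on $\ovS$ to get $|\nab U|,|\nab S|\les r^{-1}\dg$, each coordinate derivative obeys $\pr_{y^a}U=O(\dg)$; since moreover $g_{bc}\undB^c=O(\epg)$ and $\Omb=O(1)$ by {\bf A1}--{\bf A3}, every correction term in the display is $O(\dg^2)+O(\epg\dg)$, hence strictly lower order than the claimed $O(\dg r)$.

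The leading contribution therefore comes from Taylor expanding $g_{ab}$ from $(\ovu+U,\ovs+S,y)$ back to $(\ovu,\ovs,y)$, namely $g_{ab}(\ovu+U,\ovs+S,y)=\ovg_{ab}+U\,\pr_s g_{ab}+S\,\pr_u g_{ab}+O(\dg^2)$. Since $\pr_s$ and $\pr_{y^a}$ commute, the torsion-free property yields $\pr_s g_{ab}=\trch\,g_{ab}+2\chih_{ab}$, and an analogous combination of $\trchb,\chibh$ governs $\pr_u g_{ab}$ through $\pr_u=\vsi(\tfrac12 e_3-\tfrac12\Omb e_4-\undB^a\pr_{y^a})$. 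Using $\trch\sim 2/r$, $\trchb\sim-2\Up/r$, $g_{ab}\sim r^2$ and the $\Ga_g$-smallness of $\chih,\chibh$ from {\bf A1}, each of $\pr_s g_{ab},\pr_u g_{ab}$ is of size $O(r)$, whence $U\,\pr_s g_{ab}+S\,\pr_u g_{ab}=O(\dg r)$. Collecting the leading and lower-order pieces gives $\|g^{\S,\#}-\ovg\|_{L^\infty}\les\dg r$. The $\hk_s(\ovS)$ bound follows from the same expansion after commuting $r\nab$ through and distributing by Leibniz, controlling all products by \eqref{assumption-UV-dg} and the $\|\cdot\|_{\infty,k}$ bounds of {\bf A1}--{\bf A3} via Moser-type estimates; this is admissible since $s\le s_{max}+1$ and $s_{max}\ge 3$.

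For the area radius I would use that $\Psi$ identifies the two area integrals, so $|\S|=\int_{\ovS}\sqrt{\det g^{\S,\#}}\,dy$. Writing $\sqrt{\det g^{\S,\#}}=\sqrt{\det\ovg}\,\big(1+\tfrac12\tr_{\ovg}(g^{\S,\#}-\ovg)+O((r^{-1}\dg)^2)\big)$ and inserting the relative bound $\ovg^{-1}(g^{\S,\#}-\ovg)=O(r^{-1}\dg)$ just obtained (recall $\ovg\sim r^2$ while $g^{\S,\#}-\ovg=O(\dg r)$) yields $|\S|=|\ovS|\,(1+O(r^{-1}\dg))$. Since $r^\S=\sqrt{|\S|/4\pi}$ and $\ovr=\sqrt{|\ovS|/4\pi}$, taking square roots gives $r^\S/\ovr=1+O(r^{-1}\dg)$.

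The main obstacle is the second step: correctly identifying $\pr_s g_{ab},\pr_u g_{ab}$ with the null second fundamental forms and tracking their $r$-weights so that the leading Taylor terms land exactly at order $\dg r$, i.e.\ relative order $r^{-1}\dg$, while verifying that all frame-transition and quadratic-in-$(U,S)$ corrections are genuinely lower order. Closing the $\hk_s$ estimate—commuting $r\nab$ through the chain-rule expansion and bounding the resulting products using the assumptions on $(U,S)$, $\Ga_g$, $\Ga_b$ and $\undB$—is the bulk of the bookkeeping, but it introduces no idea beyond standard Moser estimates and Sobolev embedding on the sphere.
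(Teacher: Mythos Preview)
Your approach is correct and is essentially the direct computation carried out in Lemma 5.8 of \cite{klainermanConstructionGCMSpheres2022}, which is all the paper's own proof cites. Note the harmless slip in your Taylor expansion: it should read $U\,\pr_u g_{ab}+S\,\pr_s g_{ab}$ (since $u\mapsto u+U$ and $s\mapsto s+S$), but as both $\pr_u g_{ab}$ and $\pr_s g_{ab}$ are $O(r)$ the conclusion is unaffected.
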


\begin{proof}
  See Lemma 5.8 in \cite{klainermanConstructionGCMSpheres2022}.
\end{proof}

\subsection{Adapted non canonical \texorpdfstring{$\ell=1$ modes}{}}


Consider a deformation $\Psi:\ovS \to \S$ and recall the existence of the family of scalar functions $\Jp$, $p\in\big\{0, +, -\big\}$, on $\RR$ introduced in assumption {\bf A4}, see Definition \ref{def:l=1}, which form a basis of the $\ell=1$ modes on the spheres $S(u,s)$ of $\RR$, and hence in particular on $\ovS$. We denote by 
$\Jpov $ the restriction of the family $\Jp$ to $\ovS$.

\begin{definition} 
  \label{def:ell=1sphharmonicsonS}
  We define the \emph{basis of adapted $\ell=1$ modes $\JpS$ on $\S$} by
  \begin{align*}
    \JpS = \Jpov\circ\Psi^{-1}, \qquad p\in\big\{ -, 0, +\big\}.
  \end{align*}
\end{definition}

\section{GCM system of adapted frame transformations and deformation following \texorpdfstring{\cite{klainermanConstructionGCMSpheres2022}}{}}\label{sec:GCM-system}

The following crucial system describes a $\S$-adapted frame transformation and deformation derived by Klainerman-Szeftel in \cite{klainermanConstructionGCMSpheres2022}.
\begin{proposition}[Corollary 4.6 and Proposition 5.14 in \cite{klainermanConstructionGCMSpheres2022}]
  Consider a fixed deformation $\Psi:\ovS\to\S$ from the background sphere $\ovS$ to a $O(\epg)$-sphere $\S$ in $\RR$, generated by functions $U, S:\ovS\to\RRR$. Let $F=(f,\fb,\ovla=\la-1)$ satisfy
  \[|F| \ll 1.\]
  Then a new frame $e_4^\S, e_3^\S, e_1^\S, e_2^\S$ on $\S$, generated by $(f,\fb,\la)$ from the reference frame $e_4, e_3, e_1, e_2$ via the transformation formulas \eqref{eq:Generalframetransf}, is $\S$-adapted if and only if the corresponding variables $(U, S, f, \fb, \ovla)$ solve the following coupled system:
  \begin{equation}\label{eq:qequivalentGCMsystemwhicisnotsolvabledirectly:1}
    \begin{split}
      \curl^\S f &= -\err_1[\curl^\S f], \\
      \curl^\S \fb &= -\err_1[\curl^\S \fb],
    \end{split}
  \end{equation}
  \begin{equation}\label{eq:qequivalentGCMsystemwhicisnotsolvabledirectly:2}
    \begin{split}
      \div^\S f + \trch \ovla -\frac{2}{(r^\S)^2}\ovb
      ={}& \trch^\S - \frac{2}{r^\S} - \left(\trch - \frac{2}{r}\right) - \err_1[\div^\S f] - \frac{2(r - r^\S)^2}{r(r^\S)^2},\\
      \div^\S\fb - \trchb \ovla + \frac{2}{(r^\S)^2}\ovb
      ={}& \trchb^\S + \frac{2}{r^\S} - \left(\trchb + \frac{2}{r}\right) - \err_1[\div^\S \fb] + \frac{2(r - r^\S)^2}{r(r^\S)^2},\\
      \Delta^\S\ovla + V\ovla
      ={}& \mu^\S - \mu - \left(\omb + \frac{1}{4} \trchb\right)(\trch^\S - \trch)\\
         &+ \left(\om + \frac{1}{4} \trch\right)(\trchb^\S - \trchb) + \err_2[\lap^\S\ovla],\\
      \Delta^\S\ovb
      ={}& \frac{1}{2} \div^\S\left(\fb - \Up f + \err_1[\Delta^\S \ovb]\right), \quad \ov{\ovb}^\S = \ov{r}^\S - r^\S,
    \end{split} 
  \end{equation}
  \begin{equation}\label{eq:qequivalentGCMsystemwhicisnotsolvabledirectly:3}
    \begin{split}
      \pr_{y^a} S&= \Big( \SS(f, \fb, \Ga)_bY^b_{(a)} \Big)^\#,\\
      \pr_{y^a} U&=\Big(\UU(f, \fb, \Ga)_bY^b_{(a)}\Big)^\#,
    \end{split}
  \end{equation}
  where $\ovb = r - r^\S$ and the potential $V$ is given by
  \[
    V \vcentcolon= -\left( \frac{1}{2} \trch \trchb + \trch \omb + \trchb \om \right).
  \]
  The error terms satisfy the schematic estimates
  \[
    \begin{split}
      r\err_1 &= F\c(r\Ga_b) + F\c(r\nab^\S)^{\le 1}F + r^{-1}F, \\
      r^2\err_2 &= (r\nab^\S)^{\le 1}(r\err_1) + (F + \Ga_b)\c r\dk \Ga_b,
    \end{split}
  \]
  with their exact expressions depending on the specific equation.
\end{proposition}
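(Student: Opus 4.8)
The plan is to follow the derivation of Klainerman–Szeftel, treating the charge as a source of lower-order corrections only, since by Lemma \ref{lem:EM-transform} the Ricci transformation formulas coincide with the vacuum ones and the electromagnetic components enter solely through the background values and through $\Up$. I would organize the argument into two logically distinct pieces: first the purely geometric characterization of $\S$-adaptedness, which supplies the transport equations \eqref{eq:qequivalentGCMsystemwhicisnotsolvabledirectly:3} for the deformation functions $(U,S)$; and second the specialization of the general frame-transformation identities to the adapted frame, which produces the elliptic system \eqref{eq:qequivalentGCMsystemwhicisnotsolvabledirectly:1}--\eqref{eq:qequivalentGCMsystemwhicisnotsolvabledirectly:2} for $(f,\fb,\ovla)$.

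For the adaptedness part, I would push the coordinate tangent vectors forward under $\Psi$. Since $\Psi(\ovu,\ovs,y^1,y^2)=(\ovu+U,\ovs+S,y^1,y^2)$, one has $\Psi_*\pr_{y^a}=\pr_{y^a}+(\pr_{y^a}U)\pr_u+(\pr_{y^a}S)\pr_s$, and I would substitute the frame expansions $e_4=\pr_s$, $\pr_u=\vsi(\tfrac12 e_3-\tfrac12\Omb e_4-\undB^b\pr_{y^b})$, $\pr_{y^a}=Y_{(a)}^b e_b$ recorded in Section \ref{sec:background}. The frame is $\S$-adapted precisely when $e_3',e_4'\perp T\S$, i.e. $\g(e_3',\Psi_*\pr_{y^a})=\g(e_4',\Psi_*\pr_{y^a})=0$ for $a=1,2$. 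Expanding $e_3',e_4'$ via \eqref{eq:Generalframetransf} and solving the resulting linear system for $\pr_{y^a}U$ and $\pr_{y^a}S$ yields exactly \eqref{eq:qequivalentGCMsystemwhicisnotsolvabledirectly:3}, with the maps $\SS,\UU$ collecting the $f,\fb,\Ga$ dependence; this step carries the essential content of the ``if and only if''.

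Given adaptedness, I would then read off the remaining equations from Lemma \ref{lem:EM-transform}. For the curl equations I would use that both the background geodesic foliation and the deformed sphere carry integrable horizontal distributions, so their null second fundamental forms are symmetric and $\atrch=\atrch^\S=0$; the transformation identity $\la^{-1}\atrch^\S=\atrch+\curl^\S f+\err$ then collapses to \eqref{eq:qequivalentGCMsystemwhicisnotsolvabledirectly:1}, and likewise for $\fb$. For the divergence equations I would rewrite the $\trch$ and $\trchb$ transformation formulas, expand $\la^{-1}=1-\ovla+\cdots$, add and subtract the reference values $2/r$, $2/r^\S$, and use $\tfrac{2}{r^\S}-\tfrac2r=\tfrac{2}{(r^\S)^2}\ovb-\tfrac{2(r-r^\S)^2}{r(r^\S)^2}$ to peel off the linear $\ovla$, $\ovb$ terms and relegate the quadratic defect $(r-r^\S)^2$ to the right-hand side; summing the two equations and invoking the definition of the area radius produces the elliptic equation for $\ovb=r-r^\S$ with its prescribed average $\ov{\ovb}^\S$. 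The Laplacian equation for $\ovla$ I would obtain by combining the transformations of $\ze$, $\rho$, $\trch$ and $\trchb$ that assemble the mass-aspect $\mu$, linearizing in $\ovla$ to extract the potential $V=-(\tfrac12\trch\trchb+\trch\omb+\trchb\om)$.

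Finally, I would verify the schematic error bounds. The operators $\div^\S,\curl^\S,\lap^\S$ and the radius $r^\S$ on $\S$ are compared to their background counterparts via Lemma \ref{lemma:comparison-gaS-ga}, and the bounds $r\err_1=F\cdot(r\Ga_b)+F\cdot(r\nab^\S)^{\le1}F+r^{-1}F$ and $r^2\err_2=(r\nab^\S)^{\le1}(r\err_1)+(F+\Ga_b)\cdot r\dk\Ga_b$ follow from assumption {\bf A1} together with $|F|\ll1$, the charge contributing only through lower-order background terms. I expect the main obstacle to be the Laplacian equation for $\ovla$: it is the only genuinely second-order relation, and isolating the precise potential $V$ requires carefully separating the linear-in-$\ovla$ part of the mass-aspect transformation from everything that must be absorbed into $\err_2$, while simultaneously controlling the discrepancy between the intrinsic $\lap^\S$ and the background Laplacian through the pullback estimates of Lemma \ref{lemma:comparison-gaS-ga}.
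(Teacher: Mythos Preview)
The paper does not give its own proof of this proposition; it is stated as a direct citation of Corollary~4.6 and Proposition~5.14 in \cite{klainermanConstructionGCMSpheres2022}, and your outline is essentially the Klainerman--Szeftel derivation, which is the intended one. One small correction: the elliptic equation for $\ovb=r-r^\S$ is not obtained by ``summing the two divergence equations'' but rather by computing $\nab^\S r$ directly from the frame formula $e_a^\S=e_a+\tfrac12\fb_a e_4+\tfrac12 f_a e_3+\cdots$ together with $e_a(r)=0$, $e_4(r)\approx 1$, $e_3(r)\approx -\Up$, which gives $\nab^\S_a\ovb=\nab^\S_a r\approx\tfrac12(\fb_a-\Up f_a)$ and hence $\Delta^\S\ovb=\tfrac12\div^\S(\fb-\Up f)+\err_1$; the prescribed average is then simply the definition of $r^\S$ as area radius.
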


Note that the system
\eqref{eq:qequivalentGCMsystemwhicisnotsolvabledirectly:1}--\eqref{eq:qequivalentGCMsystemwhicisnotsolvabledirectly:3} is a priori
not solvable directly. To circumvent this, we instead solve a modified
system in which appropriate averages are added to enforce solvability,
following \cite{klainermanConstructionGCMSpheres2022}:

\begin{definition}
  We say that $F=(f,\fb,\ovla=\la-1)$ with $|F| \ll 1$ satisfies the \emph{modified solvable system} if
  \begin{equation}\label{eq:qequivalentGCMsystemwhicisnotsolvabledirectly:1:bis}
    \begin{split}
      \curl^\S f &= -\err_1[\curl^\S f] + \ov{\err_1[\curl^\S f]}^\S, \\
      \curl^\S \fb &= -\err_1[\curl^\S \fb] + \ov{\err_1[\curl^\S \fb]}^\S 
    \end{split}
  \end{equation}
  \begin{equation}\label{eq:qequivalentGCMsystemwhicisnotsolvabledirectly:2:bis}
    \begin{split}
      \div^\S f + \trch \ovla -\frac{2}{(r^\S)^2}\ovb ={}& \trch^\S - \frac{2}{r^\S} - \left(\trch - \frac{2}{r}\right) - \err_1[\div^\S f] - \frac{2(r - r^\S)^2}{r(r^\S)^2},\\
      \div^\S\fb - \trchb \ovla + \frac{2}{(r^\S)^2}\ovb ={}& \trchb^\S + \frac{2}{r^\S} - \left(\trchb + \frac{2}{r}\right) - \err_1[\div^\S \fb] + \frac{2(r - r^\S)^2}{r(r^\S)^2},\\
      \Delta^\S\ovla + V\ovla ={}& \mu^\S - \mu - \left(\omb + \frac{1}{4} \trchb\right)(\trch^\S - \trch) \\
                                                      &+ \left(\om + \frac{1}{4} \trch\right)(\trchb^\S - \trchb) + \err_2[\lap^\S\ovla],\\
      \Delta^\S\ovb ={}& \frac{1}{2} \div^\S\left(\fb - \Up f + \err_1[\Delta^\S \ovb]\right), \quad \ov{\ovb}^\S = \ov{r}^\S - r^\S,
    \end{split}
  \end{equation}
  \begin{equation}\label{systemUU-SS-derived}
    \begin{split}
      \lapzero U&=\divzero\left(\big(\UU(f, \fb, \Ga)\big)^\#\right),\\
      \lapzero S&=\divzero\left(\big(\SS(f, \fb, \Ga)\big)^\#\right).
    \end{split}
  \end{equation}
\end{definition}

\section{Mass-centered (charged) GCM spheres}

In this section, we construct mass-centered (charged) GCM spheres in
Theorem \ref{thm:GCM-spheres-J} and also the intrinsic GCM sphere in Theorem
\ref{thm:intrinsic-GCM}. 

\subsection{Definition of mass-centered (charged) GCM spheres}\label{section:GCMspheres}
We first define the generic (charged) GCM spheres which satisfy conditions mimicking those in \eqref{eq:condition-RN} in Reissner-Nordstr\"om. 

\begin{definition}[Generic (charged) GCM spheres]\label{def:GCM-original}
  A topological\footnote{The GCM spheres are denoted ${\S}$ in contrast with any sphere $S$ of the foliation in $\RR$.} sphere ${\S}$ in $\RR$, endowed with a null frame $(e_3^{\S}, e_4^{\S}, e_1^{\S}, e_2^{\S})$ adapted\footnote{This means that $e_1^{\S}, e_2^{\S}$ are tangent to $S$} to it, is called a \emph{GCM sphere with respect to a basis of $\ell=1$ modes $J^{({\S}, p)}$} (as in Definition \ref{def:l=1}) if
  \begin{align}\label{eq:conditions-GCM1}
    \trch^{\S} -\frac{2}{r^{\S}}=0, \qquad \Big(\trchb^{\S}+\frac{2\Up^{\S}}{r^{\S}}\Big)_{\ell\geq 2, J^{({\S})}}=0, \qquad \Big(\mu^{\S}- \frac{2M^{\S}}{(r^{\S})^3}+\frac{2\left((Q^\S)^2+(e^\S)^2\right)}{(r^{\S})^4}\Big)_{\ell\geq 2, J^{({\S})}}=0,
  \end{align}
  as well as
  \begin{equation}\label{eq:conditions-GCM1-e}
    e^\S\vcentcolon=\frac{1}{4\pi}\int_\S \dual\rhoF^\S =0,
  \end{equation}
  where $r^{\S}$, $M^{\S}$, $Q^{\S}$ and $e^{\S}$ denote the area radius, the mass, the electric charge and magnetic charge of ${\S}$ as in Definition \ref{def:parameters}.
\end{definition}
\begin{remark}\label{rmk:totalchargeconserved}
  The quantity $(Q^\S)^2 + (e^\S)^2$ in \eqref{eq:conditions-GCM1} represents the square of total electromagnetic charge of the sphere $\S$, defined in Definition \ref{def:parameters}. This combination is conserved under the natural $ U(1) $-action on the Maxwell equations, which acts on the complexified Maxwell 2-form $ \mathcal{F} \vcentcolon= \F + i \, {}^*\F $ by phase rotation:
  \[
    \mathcal{F} \mapsto e^{i\theta} \mathcal{F}, \qquad \theta \in \mathbb{R}.
  \]
  Under this action, the electric and magnetic components mix as:
  \[
    \rhoF \mapsto \cos\theta\, \rhoF + \sin\theta\, \dual\rhoF, \qquad 
    \dual\rhoF \mapsto \cos\theta\, \dual\rhoF - \sin\theta\, \rhoF.
  \]
  While the individual charges $ Q^\S $ and $ e^\S $ are not invariant, the square of total charge $ (Q^\S)^2 + (e^\S)^2 $ is preserved by this symmetry.
\end{remark}
An equivalent definition is the following:
\begin{definition}\label{definition:GCMS}

  We say that $\S\subset \RR$, endowed with an adapted
  frame\footnote{i.e.$ (e^\S_1, e_2 ^\S)$ are tangent to $\S$.}
  $(e_1^\S, e_2^\S, e_3^\S, e_4^\S)$, is a \emph{generic charged GCM
  sphere} if the following hold true:

  \begin{equation}
    \label{def:GCMC:00}
    \begin{split}
      \trch^\S&=\frac{2}{r^\S},\\
      \trchb^\S &=-\frac{2}{r^\S}\Up^\S+ \Cb^\S_0+\sum_p \CbpS \JpS,\\
      \mu^\S&= \frac{2m^\S}{(r^\S)^3}-\frac{(Q^\S)^2+(e^\S)^2}{(r^\S)^4} + M^\S_0+\sum _p\MpS \JpS,
    \end{split}
  \end{equation}
  and
  \begin{equation}\label{def:GCMC:e=0}
    \overline{\dual\rhoF^\S}^\S=0,
  \end{equation}
  for some constants $\Cb^\S_0,\, \CbpS, \, M^\S_0, \, \MpS, \, p\in\{-,0, +\}$.
  In addition, since the $\S$-frame is integrable, we also have
  \begin{equation}
    \label{Conditions:GCMS-automatic}
    \atrch^\S=\atrchb^\S=0.
  \end{equation}
\end{definition}

In this work, we rely on GCM spheres whose center of mass vanishes,
which we call mass-centered GCM spheres.

\begin{definition}[Mass-centered GCM spheres]\label{def:mass-centered-sphere} 

  A GCM sphere ${\S}$ as in Definition \ref{def:GCM-original} is called an (electrovacuum) \emph{mass-centered GCM sphere} with respect to a basis of $\ell=1$ modes $J^{({\S}, p)}$ if in addition to \eqref{eq:conditions-GCM1} and \eqref{eq:conditions-GCM1-e}, we also have

  \begin{align}
    \label{eq:condition-center-mass}
    \bm{C}^{\S}_{\ell=1,J^{({\S})}}=0.
  \end{align}
\end{definition}

\begin{remark}
  In \cite{klainermanEffectiveResultsUniformization2022}, the intrinsic GCM sphere $S_*$ is a mass-centered GCM sphere, for which $(\div\b)_{\ell=1}=0$ with respect to a canonical basis of $\ell=1$ modes, but the GCM hypersurface constructed in \cite{shenConstructionGCMHypersurfaces2023} and used in \cite{klainermanKerrStabilitySmall2023} $\Sigma_*$, is not necessarily foliated by spheres for which $(\div\b)_{\ell=1}=0$ with respect to a canonical basis of $\ell=1$ modes.
\end{remark}

\subsection{Existence of mass-centered (charged) GCM spheres}

In this section, we prove the existence of mass-centered (charged) GCM spheres in Theorem Theorem \ref{thm:GCM-spheres-J}.

\subsubsection{Existence of generic (charged) GCM spheres}
The following theorem is analogous to the main result of \cite{klainermanConstructionGCMSpheres2022}.
\begin{theorem}[Existence of generic (charged) GCM spheres]
  \label{Theorem:ExistenceGCMS1}\label{thm:GCM-spheres}
  Let $m_0, Q_0>0$ be constants. Let $0<\dg\leq \epg $ two sufficiently small constants, and let $(\ug, \sg, \rg)$ three real numbers with $\rg$ sufficiently large so that
  \begin{align*}
    \epg\ll m_0, Q_0\ll \rg.
  \end{align*}
  Let a fixed spacetime region $\RR$, as in Definition \ref{definition-spacetime-region-RR}, together with a $(u, s)$ outgoing geodesic foliation verifying the assumptions ${\bf A1-A4}$, see section \ref{subsubsect:regionRR2}. Let $\ovS=S(\ovu, \ovs)$ be a fixed sphere from this foliation, and let $\rg$ and $\mg$ denoting respectively its area radius and its Hawking mass. Assume that the GCM quantities $\trch, \trchb, \mu$ of the background foliation verify the following:
  \begin{equation*}
    \begin{split}
      \trch&=\frac{2}{r}+\dot{\trch},\\
      \trchb&=-\frac{2\Up}{r} + \Cb_0+\sum_p \Cbp \Jp+\dot{\trchb},\\
      \mu&= \frac{2m}{r^3} -\frac{Q^2}{r^4}+ M_0+\sum _p\Mp \Jp+\dot{\mu},
    \end{split}
  \end{equation*}
  where
  \begin{equation*}
    |\Cb_0, \Cbp| \les r^{-2} \epg, \qquad |M_0, \Mp| \les r^{-3} \epg,
  \end{equation*}
  and
  \begin{equation*}
    \big\| \dot{\trch}, \dot{\trchb}\|_{\hk_{s_{max} }(\S) }\les r^{-1}\dg,\qquad 
    \big\|\mudot\| _{\hk_{s_{max} }(\S) }\les r^{-2}\dg.
  \end{equation*}
  Then
  for any fixed pair of triplets $\La, \Lab \in \mathbb{R}^3$ verifying $|\La|,\, |\Lab| \les \dg$,
  there 
  exists a unique GCM sphere $\S=\S^{(\La, \Lab)}$, which is a deformation of $\ovS$, 
  such that the GCM conditions of Definition \ref{definition:GCMS} are verified,
  i.e. there exist constants $\Cb^\S_0,\, \CbpS$, \, $ M^\S_0$, \, $\MpS, \, p\in\{-,0, +\}$ for which
  \begin{equation*}
    \begin{split}
      \trch^\S&=\frac{2}{r^\S},\\
      \trchb^\S &=-\frac{2}{r^\S}\Up^\S+ \Cb^\S_0+\sum_p \CbpS \JpS,\\
      \mu^\S&= \frac{2m^\S}{(r^\S)^3} -\frac{(Q^\S)^2}{(r^\S)^4} + M^\S_0+\sum _p\MpS \JpS,\\
      \overline{\dual\rhoF^\S}^\S&=0,
    \end{split}
  \end{equation*}
  where we recall that $\JpS=\Jp\circ\Psi^{-1}$, see Definition \ref{def:ell=1sphharmonicsonS}. Moreover, 
  \begin{equation}\label{GCMS:l=1modesforffb}
    (\div^\S f)_{\ell=1,J^{(\S)}}=\La, \qquad (\div^\S\fb)_{\ell=1,J^{(\S)}}=\Lab,
  \end{equation}
  where we recall that the $\ell=1$ modes for scalars on $\S$ are defined by \eqref{eq:defell=1foroneformofepgspheres}.

  The resulting deformation has the following additional properties:
  \begin{enumerate}
  \item The triplet $(f,\fb,\ovla)$ verifies
    \begin{equation}\label{eq:ThmGCMS1}
      \|(f,\fb, \ovla)\|_{\hk_{s_{max}+1}(\S)} \les \dg. 
    \end{equation}
  \item The GCM constants $\Cb^\S_0,\, \CbpS$, \, $ M^\S_0$, \, $\MpS, \, p\in\{-,0, +\}$ verify
    \begin{equation*}
      \begin{split}
        \big| \Cb^\S_0-\Cb_0\big|+\big| \CbpS-\Cbp\big|&\les r^{-2}\dg,\\
        \big| M^\S_0-M_0\big|+\big| \MpS-\Mp\big|&\les r^{-3}\dg.
      \end{split}
    \end{equation*}

  \item The volume radius $r^\S$ verifies
    \begin{equation*}
      \left|\frac{r^\S}{\rg}-1\right|\les r^{-1} \dg.
    \end{equation*}
  \item The parameter functions $U, S$ of the deformation verify
    \begin{equation}\label{eq:ThmGCMS4}
      \|( U, S)\|_{\hk_{s_{max}+1}(\ovS)} \les r \dg.
    \end{equation}

  \item The Hawking mass $m^\S$ and charge $Q^\S$ of $\S$ verify the estimate
    \begin{equation*}
      \big|m^\S-\ovm\big|+ \big|Q^\S -\ovQ\big|\les \dg. 
    \end{equation*}
  \item The well defined\footnote{These include the Ricci coefficients $\trch^\S, \trchb^\S, \chih^\S, \chibh^\S, \ze^\S$, the curvature components $\a^\S, \b^\S, \rho^\S, \rhod^\S, \bb^\S, \aa^\S$, the electromagnetic components $\bF^\S, \rhoF^\S, \dual\rhoF^\S, \bbF^\S$, and the mass aspect function $\mu^\S$. In contrast, quantities such as $\eta^\S, \etab^\S, \xi^\S, \xib^\S, \om^\S, \omb^\S$ involve derivatives in the $e_3^\S$ and $e_4^\S$ directions and are not directly accessible on $\S$.}
    Ricci coefficients, curvature components and electromagnetic components of $\S$ verify,
    \begin{equation*}
      \begin{split}
        \| \Ga^\S_g\|_{\hk_{s_{max} }(\S) }&\les \epg r^{-1},\\
        \| \Ga^\S_b\|_{\hk_{s_{max} }(\S) }&\les \epg.
      \end{split}
    \end{equation*}
  \end{enumerate}
\end{theorem}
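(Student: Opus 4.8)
The plan is to reproduce the iteration scheme of \cite{klainermanConstructionGCMSpheres2022} almost verbatim, treating the electromagnetic contributions as lower-order perturbations controlled by the charge $Q$ and by the components $\bF, \bbF, \rhoFc, \dual\rhoF \in \Ga_g$ under assumption {\bf A1}. Concretely, I would set up a fixed-point argument on the space of deformations $\Psi:\ovS\to\S$, equivalently on the pair of scalars $(U,S)$ on $\ovS$ of size $\dg$: for a \emph{fixed} deformation one transports the background geometry to $\S$ via Lemma \ref{lemma:comparison-gaS-ga}, solves the modified solvable system \eqref{eq:qequivalentGCMsystemwhicisnotsolvabledirectly:1:bis}--\eqref{eq:qequivalentGCMsystemwhicisnotsolvabledirectly:2:bis} for the transition functions $(f,\fb,\ovla)$ and the auxiliary scalar $\ovb$, then re-generates $(U,S)$ from \eqref{systemUU-SS-derived}, and shows the resulting map contracts on a ball of $\hk_{s_{max}+1}$.

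The elliptic step, carried out for fixed deformation, is the heart of the construction. The 1-form $f$ is determined by prescribing both $\curl^\S f$ (from \eqref{eq:qequivalentGCMsystemwhicisnotsolvabledirectly:1:bis}) and $\div^\S f$, the latter read off from the full condition $\trch^\S=2/r^\S$ once $\ovla,\ovb$ are known; since $\ddd_1^\S$ is injective on the almost-round sphere $\S$, this fixes $f$ uniquely. The scalar $\ovla$ is obtained from the $\mu$-equation $\Delta^\S\ovla+V\ovla=\dots$, whose potential $V=2(r^\S)^{-2}+O(\epg r^{-2})$ to leading order makes $\Delta^\S+V$ invertible on the $\ell\neq 1$ modes while leaving the $\ell=1$ mode as an (approximate) kernel; this free $\ell=1$ mode of $\ovla$ is precisely what I would use to enforce the prescription $(\div^\S f)_{\ell=1,J^{(\S)}}=\La$. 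The remaining form $\fb$ is recovered analogously: the $\trchb^\S$-condition imposed only on $\ell\geq 2$ fixes $(\div^\S\fb)_{\ell\geq2}$ and leaves $(\div^\S\fb)_{\ell=1}=\Lab$ free, with the $\ell\leq1$ part defining the GCM constants $\Cb^\S_0,\CbpS$ (and likewise $M^\S_0,\MpS$ from $\mu$). The magnetic-charge condition \eqref{def:GCMC:e=0} does not compete with this elliptic count: since $\dual\rhoF$ transforms trivially at leading order (Lemma \ref{lem:EM-transform}), I would enforce $\overline{\dual\rhoF^\S}^\S=0$ by a single global $U(1)$ phase rotation of the Maxwell field, as in Remark \ref{rmk:totalchargeconserved}, applied independently of the frame transition.

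To close the fixed point I would use the schematic bounds $r\err_1 = F\c(r\Ga_b)+F\c(r\nab^\S)^{\le 1}F + r^{-1}F$ and the companion estimate for $\err_2$, together with the elliptic estimates on $\S$, to obtain $\|(f,\fb,\ovla)\|_{\hk_{s_{max}+1}(\S)}\les\dg$ and, from \eqref{systemUU-SS-derived}, $\|(U,S)\|_{\hk_{s_{max}+1}(\ovS)}\les r\dg$, with Lipschitz dependence that contracts once $\epg,\dg$ are small. The quantitative conclusions (items 2--6) then follow by inserting the solution back into the defining identities: the comparison of the GCM constants and the control of $r^\S,m^\S,Q^\S$ come from integrating the $\trchb$ and $\mu$ equations and from the transformation laws for $\rhoF,\dual\rhoF$, while the bounds on the $\S$-Ricci, curvature, and electromagnetic coefficients follow from Lemma \ref{lem:EM-transform} applied to the small transition $F$.

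The main obstacle I anticipate is not the elliptic theory, which is identical to the vacuum case, but verifying \emph{uniformly over the full subextremal range} that every charge-dependent term genuinely enters at lower order. The $\mu^\S$-condition now carries $(Q^\S)^2/(r^\S)^4$, and the center-of-mass definition \eqref{eq:definition-bm-C} injects contributions proportional to $Q^\S\,\div^\S\bF$ and $Q^\S(r^\S)^{-1}\rhoFc$ into the transformed quantities; I would need to check that differentiating $Q^\S=\tfrac{1}{4\pi}\int_\S\rhoF^\S$ with respect to the deformation produces only terms absorbable by $\Ga_g$ under {\bf A1}, and that the extra equation \eqref{def:GCMC:e=0} is compatible with the determined count. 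Since $Q$ is bounded by $Q_0\ll\rg$ and all electromagnetic components lie in $\Ga_g$, these contributions are expected to modify only the coefficient functions of the GCM system at lower order, so that the contraction estimate and the invertibility of the $\ell\neq1$ operators persist unchanged.
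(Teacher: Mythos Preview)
Your proposal is correct and follows essentially the same route as the paper: the authors explicitly invoke the iterative scheme of \cite[Section 6]{klainermanConstructionGCMSpheres2022} on the modified solvable system \eqref{eq:qequivalentGCMsystemwhicisnotsolvabledirectly:1:bis}--\eqref{systemUU-SS-derived}, noting that the charged terms only modify the coefficient functions at lower order, then control $Q^\S$ from the transformation law for $\rhoF$ together with Lemma \ref{lemma:comparison-gaS-ga}, and finally impose $e^\S=0$ by a constant $U(1)$ phase rotation (observing via Remark \ref{rmk:totalchargeconserved} that $(Q^\S)^2+(e^\S)^2$, and hence the system itself, is unchanged). Your anticipated obstacle---that the charge contributions enter only at lower order uniformly in the subextremal range---is exactly what the paper takes for granted once $Q_0\ll\rg$ and the electromagnetic components sit in $\Ga_g$.
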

\begin{proof}[Proof of Theorem \ref{Theorem:ExistenceGCMS1}]
  We first aim to solve the modified solvable system \eqref{eq:qequivalentGCMsystemwhicisnotsolvabledirectly:1:bis}--\eqref{systemUU-SS-derived} for the frame transformation and deformation parameters, subject to the generic (charged) GCM conditions \eqref{def:GCMC:00}, the prescribed $\ell=1$ conditions \eqref{GCMS:l=1modesforffb}, as well as the normalization
  \begin{equation*}
    U(\text{South}) = S(\text{South}) = 0
  \end{equation*}
  at the South Pole of the background sphere $\ovS$. In view of the GCM conditions \eqref{def:GCMC:00}, modified solvable system \eqref{eq:qequivalentGCMsystemwhicisnotsolvabledirectly:1:bis}--\eqref{systemUU-SS-derived} can be written in the following form:
  \begin{equation}
    \begin{split}\label{eq:KS-GCM}
      \curl ^\S f &= h_1 -\ov{h_1}^\S,\\
      \curl^\S \fb&= \underline{h}_1 - \ov{\underline{h}_1}^\S,\\
      \div^\S f + \frac{2}{r^\S} \ovla -\frac{2}{(r^\S)^2}\ovb &=h_2,\\
      \div^\S\fb + \frac{2}{r^\S} \ovla +\frac{2}{(r^\S)^2}\ovb
                  &= \Cbdot_0+\sum_p \Cbpdot \JpS +\underline{h}_2,\\
      \left(\Delta^\S+\frac{2}{(r^\S)^2}\right)\ovla &= \Mdot_0+\sum _p\Mpdot \JpS+\frac{1}{2r^\S}\left(\Cbdot_0+\sum_p \Cbpdot \JpS\right) +h_3,\\
      \Delta^\S\ovb-\frac{1}{2}\div^\S\Big(\fb - f\Big) &= h_4 -\ov{h_4}^\S , \qquad \ov{\ovb}^\S=b_0,
    \end{split}
  \end{equation}
  where $\ovb\vcentcolon=r-r^\S$, and
  \begin{equation}
    \begin{split}\label{eq:hhhhh}
      &\left\|\nu^l(h_1,\underline{h}_1,h_2,\underline{h}_2,h_4)\right\|_{\hk_{s}(\S)}+r\left\|\nu^l(h_3)\right\|_{\hk_{s}(\S)}\\
      \les{}& r^{-1}\dg+(\epg r^{-1}+r^{-2})\left(\left\|(\nab^\S_{\nu})^{l}(F)\right\|_{\hk_{s}(\S)}+r^{-1}\big\|\nu^{l}(\ovb)\big\|_{\hk_{s}(\S)}\right)\\
      &+r^{-1}\left(\left\|(\nab^\S_{\nu})^{\leq l-1}(F)\right\|_{\hk_{s}(\S)}+r^{-1}\big\|\nu^{\leq l-1}\ovb\big\|_{\hk_{s}(\S)}\right)+r^{-2}\dg\|\nu^{\leq l-1}(b^\S)\|_{\hk_{s}(\S)}.
    \end{split}
  \end{equation} 
  See also Lemma 4.21 in \cite{shenConstructionGCMHypersurfaces2023} for the system \eqref{eq:KS-GCM}. The method of solving \eqref{eq:KS-GCM} follows the general strategy developed in \cite[Section 6]{klainermanConstructionGCMSpheres2022}. One first constructs an iterative scheme to solve the nonlinear system \eqref{eq:qequivalentGCMsystemwhicisnotsolvabledirectly:1:bis}--\eqref{systemUU-SS-derived}. Using suitable elliptic and transport estimates, the sequence is shown to converge to a limit $(U^{(\infty)}, S^{(\infty)}, f^{(\infty)}, \fb^{(\infty)}, \ovla^{(\infty)})$ solving the modified system. To conclude, one proves that the resulting frame induced by $(f^{(\infty)}, \fb^{(\infty)}, \ovla^{(\infty)})$ coincides with the frame adapted to the limiting deformed sphere $\S^{(\infty)}$. This implies that the limit also satisfies the original GCM system \eqref{eq:qequivalentGCMsystemwhicisnotsolvabledirectly:1}--\eqref{eq:qequivalentGCMsystemwhicisnotsolvabledirectly:3}, thus establishing the result. The remaining properties can be established in a similar way. 
  
  We now control the electric charge $Q^\S$:  
\begin{align*}
    Q^\S = \frac{1}{4\pi}\int_{\S}\rhoF^\S, 
    \qquad 
    \ovQ = \frac{1}{4\pi}\int_{\ovS}\rhoF.
\end{align*}
By Lemma \ref{lemma:comparison-gaS-ga}, see also Lemma~7.3 of \cite{klainermanConstructionGCMSpheres2022}, which provides control of the induced metric, we have
\begin{align*}
    \left|\int_\S \rhoF - \int_{\ovS} \rhoF \right| \les \dg.
\end{align*}
Hence
\begin{align*}
    |Q^\S - \ovQ| \les \left|\int_\S (\rhoF^\S - \rhoF)\right| + \dg.
\end{align*}
Using the transformation formula for $\rhoF$, which involves only $(f,\fb)$, together with the control of $(f,\fb)$ already obtained, we deduce
\begin{align*}
    |Q^\S - \ovQ| \les \dg,
\end{align*}
as required.

Finally, we use the $U(1)$ symmetry of the Einstein–Maxwell equations to eliminate the magnetic charge. By applying a constant phase rotation to the complexified field
\[
   \mathcal{F} = \F + i\, {}^*\F,
\]
we may assume that the magnetic charge of $\S$ vanishes, i.e.
\[
    e^\S = \frac{1}{4\pi} \int_\S \dual\rhoF^\S = 0,
\]
so that condition \eqref{def:GCMC:e=0} is satisfied. Moreover, by Remark~\ref{rmk:totalchargeconserved}, the quantity $(Q^\S)^2 + (e^\S)^2$ is conserved under such a phase rotation, and therefore the system \eqref{eq:KS-GCM} remains unchanged. This completes the proof of Theorem~\ref{thm:GCM-spheres}.
\end{proof}

\subsubsection{Differentiability with respect to the parameters \texorpdfstring{$(\La, \Lab)$}{}}


The following proposition investigates the differentiability with respect to $(\La, \Lab)$ of the various 
quantities appearing in Theorem \ref{Theorem:ExistenceGCMS1}.
\begin{proposition}
\label{prop:diff-wrt-La-Lab}
  Under the assumptions of Theorem \ref{Theorem:ExistenceGCMS1}, let
  $\S^{(\La, \Lab)}$ denote the deformed spheres constructed in Theorem
  \ref{Theorem:ExistenceGCMS1} for parameter
  $\La, \Lab \in \mathbb{R}^3$ verifying $|\La|,\, |\Lab| \les \dg$.
  Then
  \begin{enumerate}
  \item The transition parameters $(f, \fb, \ovla)$ are continuous and differentiable with respect to $\La, \Lab $ and verify
    \begin{equation*}
      \begin{split}
        \frac{\pr f }{\pr \La}&=O\big( r^{-1}\big), \quad \frac{ \pr f }{\pr \Lab}=O\big(\dg r^{-1} \big), \\ 
        \frac{\pr\fb }{\pr \La}&=O\big(\dg r^{-1}\big), \quad \frac{\pr\fb}{\pr \Lab}=O\big( r^{-1} \big),\\
        \frac{\pr\ovla }{\pr \La}&=O\big(\dg r^{-1}\big), \quad \frac{\pr\ovla }{\pr \Lab}=O\big(\dg r^{-1}\big).
      \end{split}
    \end{equation*}
  \item The parameter functions $U, S$ of the deformation are continuous and differentiable with respect to $\La, \Lab $ and verify
    \begin{equation*}
      \frac{\pr U}{\pr \La}= O(1), \qquad \frac{\pr U}{\pr \Lab}= O(1), \qquad \frac{\pr S}{\pr \La}= O(1), \qquad \frac{\pr S}{\pr \Lab}= O(\dg).
    \end{equation*}
  \item Relative to the coordinate system induced by $\Psi$, the metric $g^\S$ of $\S=\S^{\La, \Lab}$ is continuous with respect to the parameters $\La, \Lab$ and verifies 
    \begin{align*}
      \big\| \pr_\La g^\S, \, \pr_{\Lab} g^\S\|_{L^\infty(\S)} &\les O( r^2).
    \end{align*}
  \end{enumerate}
\end{proposition}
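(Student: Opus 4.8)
The plan is to differentiate the coupled system \eqref{eq:KS-GCM}, supplemented by the prescribed $\ell=1$ conditions \eqref{GCMS:l=1modesforffb}, with respect to the parameters $(\La,\Lab)$, and then to run the elliptic and transport estimates already used in the proof of Theorem \ref{Theorem:ExistenceGCMS1} on the resulting linearized system. The key structural fact is that $(\La,\Lab)$ enter the construction only through the affine conditions $(\div^\S f)_{\ell=1,J^{(\S)}}=\La$ and $(\div^\S\fb)_{\ell=1,J^{(\S)}}=\Lab$; every other equation depends on the parameters only implicitly through the solution $(f,\fb,\ovla,U,S)$. I would first establish Lipschitz dependence — hence continuity — by subtracting the systems satisfied by $\S^{(\La,\Lab)}$ and $\S^{(\La',\Lab')}$ and re-deriving the contraction estimates of Theorem \ref{Theorem:ExistenceGCMS1}, now for the differences, controlled by $|\La-\La'|+|\Lab-\Lab'|$. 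Differentiability then follows by showing that the difference quotients converge, in the norms \eqref{eq:ThmGCMS1} and \eqref{eq:ThmGCMS4}, to the unique solution of the linearized system obtained by formally differentiating \eqref{eq:KS-GCM}.

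Writing $\pr_\La F:=(\pr_\La f,\pr_\La\fb,\pr_\La\ovla)$, differentiation of \eqref{GCMS:l=1modesforffb} supplies the data $\pr_\La\big((\div^\S f)_{\ell=1}\big)=\mathrm{Id}$ and $\pr_\La\big((\div^\S\fb)_{\ell=1}\big)=0$, while all remaining sources arise from differentiating the error terms $h_i,\underline h_i,\err$. After moving the $O(\epg)$ terms that are linear in $\pr_\La F$ to the left-hand side — where they perturb the Hodge and Laplace operators by a harmless $O(\epg)$ amount, as in the existence proof — the genuine source driving the off-diagonal derivatives is quadratic in the small transition $F=O(\dg)$ by \eqref{eq:hhhhh}, and therefore carries a factor $\|F\|=O(\dg)$. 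Consequently the Hodge system for $\pr_\La f$ is driven by its unit $\ell=1$ data, and the elliptic estimate for $\ddd_1^\S$, combined with the normalization of the $\ell=1$ projection, gives $\pr_\La f=O(r^{-1})$; whereas the equations for $\pr_\La\fb$ and $\pr_\La\ovla$ carry no $\La$-data and are sourced only by the quadratic coupling, yielding $\pr_\La\fb,\pr_\La\ovla=O(\dg r^{-1})$. The symmetric computation in $\Lab$ gives $\pr_{\Lab}\fb=O(r^{-1})$ and $\pr_{\Lab}f,\pr_{\Lab}\ovla=O(\dg r^{-1})$. This off-diagonal smallness — each parameter drives its own transition function at full strength and the others only through the quadratic/$\Ga$ coupling — is exactly the mechanism behind the asymmetric bounds in item (1).

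For item (2) I would differentiate the transport–elliptic system \eqref{systemUU-SS-derived} for $(U,S)$, whose sources are $\divzero$ of the one-forms $\UU(f,\fb,\Ga)$ and $\SS(f,\fb,\Ga)$. Inserting the bounds for $\pr_\La F$ and using that $\lapzero^{-1}\divzero$ gains a net factor $r$ on the $\ell\geq1$ part yields $\pr_\La U=O(1)$ and $\pr_\La S=O(1)$; since the leading part of $\SS$ is governed by $\fb$, which is driven at full strength only by $\Lab$, the same off-diagonal mechanism gives $\pr_{\Lab}S=O(\dg)$ while $\pr_{\Lab}U,\pr_\La U,\pr_\La S$ remain $O(1)$. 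For item (3), the induced metric $g^\S$ depends on $(\La,\Lab)$ only through the deformation $(U,S)$ entering $\Psi$; differentiating the pull-back comparison of Lemma \ref{lemma:comparison-gaS-ga} and inserting $\pr_\La U,\pr_\La S=O(1)$ together with the background bounds on $\pr_u g,\pr_s g$ gives $\|\pr_\La g^\S,\pr_{\Lab}g^\S\|_{L^\infty(\S)}\les r^2$, as claimed.

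The hard part is upgrading Lipschitz continuity to genuine differentiability, because the solution of Theorem \ref{Theorem:ExistenceGCMS1} is produced only as the limit of an iteration rather than in closed form: one must differentiate the iteration scheme, prove a uniform-in-parameter contraction for the differentiated iterates, and justify interchanging the iteration limit with the $\La$- and $\Lab$-derivatives. A further point, which must be tracked but does not affect the final bounds, is that the adapted basis $J^{(\S)}=\Jpov\circ\Psi^{-1}$ itself depends on $(\La,\Lab)$ through $\Psi$, so the $\ell=1$ projections in \eqref{GCMS:l=1modesforffb} acquire an extra contribution $\pr_\La J^{(\S)}=O\big(\pr_\La(U,S)\,\nab J\big)$; since $\pr_\La(U,S)=O(1)$ this enters the linearized system only through controllable lower-order terms. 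Finally, the distinction between the $O(r^{-1})$ and the $O(\dg r^{-1})$ entries rests entirely on carefully separating the $O(\epg)$ linear-in-$F$ couplings, which are absorbed into the operators, from the genuinely quadratic sources, which carry the $\dg$ factor; getting this bookkeeping right is the delicate part of the estimate.
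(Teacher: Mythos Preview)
Your approach is exactly the one the paper indicates: differentiate the defining system \eqref{eq:KS-GCM}--\eqref{systemUU-SS-derived} and \eqref{GCMS:l=1modesforffb} with respect to $(\La,\Lab)$ and re-run the elliptic estimates of Theorem \ref{Theorem:ExistenceGCMS1} on the linearized system. The paper itself says only that ``the details are cumbersome but straightforward, and left to the reader'', so your more detailed sketch --- identifying the affine dependence on $(\La,\Lab)$, separating the $O(\epg)$ linear couplings from the quadratic sources to obtain the off-diagonal $O(\dg r^{-1})$ bounds, and upgrading Lipschitz dependence to differentiability by differentiating the iteration --- is precisely the intended argument.

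One slip to flag in your item (2): the sentence ``since the leading part of $\SS$ is governed by $\fb$, which is driven at full strength only by $\Lab$, the same off-diagonal mechanism gives $\pr_{\Lab}S=O(\dg)$'' is internally inconsistent. If $\SS$ were indeed governed by $\fb$ and $\fb$ is driven at full strength by $\Lab$, you would conclude $\pr_{\Lab}S=O(1)$, not $O(\dg)$. To obtain the stated asymmetry $\pr_{\Lab}S=O(\dg)$ while $\pr_\La S=O(1)$, the mechanism must run the other way: the adaptedness relations give $\UU\sim f$ and $\SS\sim \frac{1}{2}(\fb-\Up f)$ at leading order, and the claimed improvement for $\pr_{\Lab}S$ requires tracking how the $\Lab$-variation of $\fb$ is constrained through the coupled $\ovb$--equation and the GCM conditions rather than treating $\fb$ as freely driven. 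This is exactly the ``cumbersome'' bookkeeping the paper declines to spell out; your overall scheme is correct, but the justification for that particular bound needs to be reworked.
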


\begin{proof}
  The proof follows by differentiating the equations satisfied by $(f, \fb, \la)$ and $(U, S)$ with respect to $(\La, \Lab)$ and relying on the estimates derived for $(f, \fb, \la)$ and $(U, S)$ in Theorem \ref{Theorem:ExistenceGCMS1}. The details are cumbersome but straightforward, and left to the reader.
\end{proof}

\subsubsection{Center of mass function on a generic (charged) GCM sphere}
\begin{proposition}\label{prop:C-transform}
Under the same assumptions as Theorem \ref{Theorem:ExistenceGCMS1}, 
  on a generic (charged) GCM sphere $\S$ in Theorem \ref{Theorem:ExistenceGCMS1}, the center of mass function $\bm{C}^\S$ satisfies
  \begin{equation}
    \bm{C}^\S+\left(\frac{3m^\S}{(r^\S)^3}-\frac{(Q^\S)^2}{2(r^\S)^4} \right)\div^\S f= \bm{C} + \div^\S (r^{-1}\Gag\c F)+r^{-1}(\dk^{\leq 1} \Gag)\c F,
  \end{equation}
  where $F=(f,\fb,\ovla)$.
\end{proposition}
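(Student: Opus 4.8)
The plan is to expand the definition \eqref{eq:definition-bm-C} of $\bm{C}^\S$ on the deformed sphere and substitute the frame-transformation laws. Writing
\[
\bm{C}^\S=\div^\S\b^\S-\frac{Q^\S}{(r^\S)^2}\,\div^\S\bF^\S+\frac{2Q^\S}{(r^\S)^3}\,\rhoFc^\S,
\]
I would treat the three ingredients $\b^\S$, $\bF^\S$, $\rhoFc^\S$ separately. For $\bF^\S$ and $\rhoF^\S$ the relevant identities are exactly those recorded in Lemma \ref{lem:EM-transform}; for the curvature component $\b^\S$ I would invoke the vacuum transformation law (cf.\ Proposition 2.14 in \cite{shenConstructionGCMHypersurfaces2023}), whose only linear-in-$f$ term carrying a non-perturbative (Reissner--Nordstr\"om-size) coefficient is $\tfrac32\rho\,f$, all other contributions being quadratic in $(f,\fb)$ or paired with $\Gag$-quantities such as $\a$ and $\dual\rho$.

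The heart of the computation is to extract the terms proportional to $\div^\S f$. Applying $\div^\S$ to $\b^\S=\b+\tfrac32\rho f+\lot$ and splitting $\rho=\rho_{RN}+\rhoc$ with $\rho_{RN}=-\tfrac{2m^\S}{(r^\S)^3}+\tfrac{(Q^\S)^2}{(r^\S)^4}$ constant on $\S$, the piece $\tfrac32\rho_{RN}\,\div^\S f$ survives as a leading term, while $\tfrac32\div^\S(\rhoc f)=\div^\S(r^{-1}\Gag\c F)$ and the $\dual\rho$ contributions feed the error using $r\rhoc,\,r\dual\rho\in\Gag$. Likewise, from $\la^{-1}\bF^\S=\bF-\dual f\,\dual\rhoF+f\,\rhoF+\lot$ one gets $\div^\S\bF^\S=\div^\S\bF+\rhoF\,\div^\S f+\lot$, and with $\rhoF=\tfrac{Q^\S}{(r^\S)^2}+\rhoFc$ the product $-\tfrac{Q^\S}{(r^\S)^2}\,\rhoF\,\div^\S f$ produces the leading term $-\tfrac{(Q^\S)^2}{(r^\S)^4}\,\div^\S f$, the $\rhoFc$ and $\dual\rhoF$ pieces being absorbed since $\rhoFc,\dual\rhoF\in\Gag$. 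Finally, the transformation of $\rhoF$ in Lemma \ref{lem:EM-transform} is quadratic in $(f,\fb)$ or involves $\bF,\bbF$, so $\rhoFc^\S$ produces no leading $\div^\S f$ term and contributes only to $\bm{C}$ and to the error.

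Summing the two surviving coefficients gives
\[
\frac{3}{2}\Big(-\frac{2m^\S}{(r^\S)^3}+\frac{(Q^\S)^2}{(r^\S)^4}\Big)-\frac{(Q^\S)^2}{(r^\S)^4}=-\frac{3m^\S}{(r^\S)^3}+\frac{(Q^\S)^2}{2(r^\S)^4},
\]
so that moving this multiple of $\div^\S f$ to the left reproduces the asserted coefficient. The untransformed terms $\div^\S\b-\tfrac{Q}{r^2}\div^\S\bF+\tfrac{2Q}{r^3}\rhoFc$ reassemble into $\bm{C}$, up to differences arising from replacing background parameters by $\S$-values and from comparing $\div^\S$ with the background divergence; by Theorem \ref{Theorem:ExistenceGCMS1} and Lemma \ref{lemma:comparison-gaS-ga} all of these are $O(\dg)$ times controlled quantities and hence fall into $\div^\S(r^{-1}\Gag\c F)+r^{-1}(\dk^{\leq 1}\Gag)\c F$.

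The main obstacle I anticipate is the careful bookkeeping of the linear-in-$f$ terms: one must verify that every contribution other than $\tfrac32\rho_{RN}\,\div^\S f$ and $-\tfrac{Q^\S}{(r^\S)^2}\rhoF_{RN}\,\div^\S f$ genuinely lands in one of the two admissible error classes. This hinges on the decay assignments \eqref{eq:Definition-Gag-Gab}—in particular $r\b,\,r\rhoc,\,r\dual\rho,\,\bF,\,\dual\rhoF,\,\rhoFc\in\Gag$—and on repeatedly rewriting $g\,\div^\S f$ with $g\in r^{-1}\Gag$ as $\div^\S(g f)-f\c\nab^\S g$, which distributes it correctly between the $\div^\S(r^{-1}\Gag\c F)$ and $r^{-1}(\dk^{\leq 1}\Gag)\c F$ terms. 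The $\bbF\in\Gab$ factors that appear in the transformation of $\rhoF$ require a little more care, but they enter $\rhoFc^\S$ only through the small prefactor, after which they too are subsumed into the error.
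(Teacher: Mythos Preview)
Your approach is correct and coincides with the paper's: both expand $\bm{C}^\S$ via the definition, insert the frame-transformation laws for $\b$, $\bF$, $\rhoF$ (extracting the single non-perturbative linear-in-$f$ terms $\tfrac32\rho f$ and $\rhoF f$), split off the Reissner--Nordstr\"om values to produce the coefficient $-\tfrac{3m^\S}{(r^\S)^3}+\tfrac{(Q^\S)^2}{2(r^\S)^4}$ of $\div^\S f$, and push everything else into the schematic error using the $\Gag$/$\Gab$ assignments. The only cosmetic difference is that the paper first rearranges the one-form identities (e.g.\ $\b^\S+(\tfrac{3m^\S}{(r^\S)^3}-\tfrac{3(Q^\S)^2}{2(r^\S)^4})f=\b+r^{-1}\Gag\c F$) before applying $\div^\S$, whereas you apply $\div^\S$ first and then sort terms.
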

\begin{proof}
  From the frame transformation formula Proposition 3.3 in \cite{klainermanConstructionGCMSpheres2022} and Lemma \ref{lem:EM-transform}, we know
  \begin{align*}
    \b^\S&=\b +\frac{3}{2}f\rho + r^{-1}\Gag\c F,\\
    \bF^\S&=\bF+\rhoF f + \Gag\c F,\\
    \rhoF^\S&=\rhoF+\Gab\c F.
  \end{align*}
  Therefore,
  \begin{align*}
    \b^\S+\left(\frac{3m^\S}{(r^\S)^3}-\frac{3(Q^\S)^2}{2(r^\S)^4} \right)f&=\b+r^{-1}\Gag\c F,\\
    \bF^\S-\frac{Q^\S}{(r^\S)^2}f&=\bF + \Gag\c F,\\
    \rhoFc^\S&=\rhoF + \Gab\c F.
  \end{align*}
  So from Definition \eqref{eq:definition-bm-C}, we compute
  \begin{align*}
    &\quad\bm{C}^\S+\left(\frac{3m^\S}{(r^\S)^3}-\frac{(Q^\S)^2}{2(r^\S)^4} \right)\div^\S f\\
    &=\div^\S\b^\S-\frac{Q^\S}{(r^\S)^2}\div^\S \bF^\S +\frac{2Q^\S}{(r^\S)^3}\rhoFc^\S+\left(\frac{3m^\S}{(r^\S)^3}-\frac{(Q^\S)^2}{2(r^\S)^4} \right)\div^\S f\\
    &=\div^\S\left(\b^\S+\left(\frac{3m^\S}{(r^\S)^3}-\frac{3(Q^\S)^2}{2(r^\S)^4} \right)f\right)-\frac{Q^\S}{(r^\S)^2}\div^\S \left(\bF^\S-\frac{Q^\S}{(r^\S)^2}f\right)+\frac{2Q^\S}{(r^\S)^3}\rhoFc^\S\\
    &=\div^\S\left(\b+r^{-1}\Gag\c F\right)-\frac{Q^\S}{(r^\S)^2}\div^\S \left(\bF + \Gag\c F\right)+\frac{2Q^\S}{(r^\S)^3}(\rhoF+\Gab\c F)\\
    &=\bm{C}+\div^\S(r^{-1}\Gag\c F)+r^{-2}(\dk^{\leq 1}\Gag)\c F,
  \end{align*}
  as stated.
\end{proof}

\subsubsection{Existence of mass-centered (charged) GCM spheres}

\begin{theorem}\label{thm:GCM-spheres-J}

  Let $\mathring{S}=S(\mathring{u}, \mathring{s})$ be a fixed sphere of the spacetime region $\RR$ as in Definition \ref{definition-spacetime-region-RR}, and let $0<\dg\leq \mathring{\epsilon}$ be two sufficiently small constants.

  Suppose in addition that, the following quantities of the background foliation in $\RR$ satisfy the following:
  \begin{align*}
    \begin{split}
      \trch &= \frac{2}{r}+ \dot{\trch}, \\
      \trchb &=- \frac{2\Up}{r}+\underline{C}_0(u,s)+\sum_p \underline{C}^{(p)}(u,s)J^{(p)} +\dot{\trchb}, \\
      \mu &= \frac{2M}{r^3}+\frac{2Q^2}{r^4} +M_0(u,s) + \sum_{p} M^{(p)}(u,s)J^{(p)} + \dot{\mu},
    \end{split} 
  \end{align*}
  where the scalar functions $\underline{C}_0$, $\underline{C}^{(p)}$, $M_0$, $M^{(p)}$ on $\RR$ depend only on $(u,s)$ and where 
  \begin{align*}
    \sup_{\RR} |\dkb^{\leq s_{\max}}(\dot{\trch}, \dot{\trchb})| \les r^{-2} \dg, \qquad \sup_{\RR} |\dkb^{\leq s_{\max}}\dot{\mu}| \les r^{-3} \dg.
  \end{align*}
  Moreover, 
  \begin{align*}
    \sup_{\RR} |\dkb^{\leq s_{\max}} \bm{C}_{\ell=1, J}| \les r^{-4} \dg.
  \end{align*}
  Then, for any fixed $\underline{\Lambda} \in \mathbb{R}^3$ verifying $|\underline{\Lambda}| \les \dg$ and a choice of triplet of functions $\widetilde{J}^{(p)}$ satisfying
  \begin{align*}
    \sum_{p = 0, +, -} \| J^{(p)} - \widetilde{J}^{(p)} \|_{\mathfrak{h}_{s_{\max}}(\mathring{S})} \les r \dg,
  \end{align*}
  and 
  \begin{align*}
    \partial_u \widetilde{J}^{(p)}(u,s,y^1, y^2) =\partial_s \widetilde{J}^{(p)}(u,s,y^1, y^2)=0,
  \end{align*}
  there exists a unique sphere ${\S}={\S}(\underline{\Lambda}, \widetilde{J}^{(p)})$, together with a null frame $(e_3^{\S}, e_4^{\S}, e_1^{\S}, e_2^{\S})$, which is a deformation of $\mathring{S}$ and an (electrovacuum) mass-centered GCM sphere as in Definition \ref{def:mass-centered-sphere}, with respect to the basis of $\ell=1$ modes $\widetilde{J}^{(p)}$, and such that 
  \begin{align*}
    ( \div^{\S}\underline{f})_{\ell=1,\widetilde{J}}=\underline{\Lambda},
  \end{align*}
  where $(f, \underline{f}, \lambda)$ denote the transition coefficients of the transformation from the background frame of $\RR$ to the frame adapted to ${\S}$. 
\end{theorem}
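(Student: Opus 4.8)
The plan is to realize the mass-centered sphere inside the family of generic charged GCM spheres furnished by Theorem \ref{Theorem:ExistenceGCMS1}, choosing the free outgoing $\ell=1$ parameter so that the center-of-mass condition \eqref{eq:condition-center-mass} holds. First I would check that $\widetilde{J}^{(p)}$ is itself an $\epg$-approximated basis of $\ell=1$ modes in the sense of Definition \ref{def:l=1}: since $\sum_p\|J^{(p)}-\widetilde{J}^{(p)}\|_{\hk_{s_{\max}}(\mathring{S})}\les r\dg$ and the $J^{(p)}$ already satisfy \eqref{eq:Jpsphericalharmonics}, the defining relations persist for $\widetilde{J}^{(p)}$ up to $O(\dg)$ corrections, which is admissible for $\dg$ small. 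I would then fix the ingoing parameter equal to $\underline{\Lambda}$ once and for all and apply Theorem \ref{Theorem:ExistenceGCMS1} with $\widetilde{J}$ playing the role of the approximate basis, obtaining for each $\Lambda\in\mathbb{R}^3$ with $|\Lambda|\les\dg$ a generic charged GCM sphere $\S^{(\Lambda)}$ satisfying \eqref{eq:conditions-GCM1}, \eqref{eq:conditions-GCM1-e} together with $(\div^\S f)_{\ell=1,\widetilde{J}}=\Lambda$ and $(\div^\S\fb)_{\ell=1,\widetilde{J}}=\underline{\Lambda}$. It then remains only to select the unique $\Lambda$ for which $\S^{(\Lambda)}$ is mass-centered.

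The selection equation comes from Proposition \ref{prop:C-transform}. Projecting the identity there onto the $\ell=1$ modes relative to the adapted basis $\widetilde{J}\circ\Psi^{-1}$ and imposing $\bm{C}^\S_{\ell=1,\widetilde{J}}=0$ yields, schematically,
\begin{equation*}
  c^\S\,\Lambda = \bm{C}_{\ell=1,\widetilde{J}} + \err(\Lambda), \qquad c^\S:=\frac{3m^\S}{(r^\S)^3}-\frac{(Q^\S)^2}{2(r^\S)^4},
\end{equation*}
where $\err(\Lambda)$ collects the $\ell=1$ projections of $\div^\S(r^{-1}\Gag\c F)+r^{-1}(\dk^{\leq1}\Gag)\c F$ and depends on $\Lambda$ only through the transition functions $F=F(\Lambda,\underline{\Lambda})$ and through the deformation $\Psi$. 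In the subextremal regime $\rg\gg m_0,Q_0$ the coefficient $c^\S$ is bounded below by a positive multiple of $m_0(r^\S)^{-3}$, hence invertible, so I may recast the problem as the fixed-point equation $\Lambda=\mathcal{G}(\Lambda):=(c^\S)^{-1}\big(\bm{C}_{\ell=1,\widetilde{J}}+\err(\Lambda)\big)$ on the closed ball $\{|\Lambda|\les\dg\}\subset\mathbb{R}^3$, where the weak dependence of $c^\S$ on $\Lambda$ is harmless since $c^\S\approx 3m_0(r^\S)^{-3}$.

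To close, I would verify that $\mathcal{G}$ preserves this ball and is a contraction. Preservation uses the hypothesis $|\bm{C}_{\ell=1,J}|\les r^{-4}\dg$, inherited by $\widetilde{J}$ via the closeness $\|J-\widetilde{J}\|\les r\dg$: after multiplication by $(c^\S)^{-1}\sim(r^\S)^3$ the leading term is $\les r^{-1}\dg\les\dg$, while $\err$ carries a factor $\Gag\les\epg r^{-2}$ and remains subleading. For the contraction estimate I would differentiate $\mathcal{G}$ in $\Lambda$ and invoke Proposition \ref{prop:diff-wrt-La-Lab}, which gives $\partial f/\partial\Lambda=O(r^{-1})$, $\partial\fb/\partial\Lambda=O(\dg r^{-1})$ and $\partial\ovla/\partial\Lambda=O(\dg r^{-1})$; since every term of $\err$ contains one factor of $F$ and one factor of $\Gag$, its $\Lambda$-derivative acquires an extra $\epg$ (from $\Gag$) and an extra power of $r^{-1}$ (from $\partial_\Lambda F$), so that $(c^\S)^{-1}\,\partial_\Lambda\err\les\epg/m_0\ll1$. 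The Banach fixed-point theorem then yields a unique $\Lambda$, hence the unique deformation $\S=\S(\underline{\Lambda},\widetilde{J})$, which by construction satisfies all conditions of Definition \ref{def:mass-centered-sphere} together with $(\div^\S\fb)_{\ell=1,\widetilde{J}}=\underline{\Lambda}$.

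The step I expect to be the main obstacle is the bookkeeping behind the contraction estimate. One must track precisely how $\err(\Lambda)$ depends on $\Lambda$, not only through the transition functions $F$ but also through the \emph{moving} adapted basis $\widetilde{J}\circ\Psi^{-1}$ against which the $\ell=1$ projection in Proposition \ref{prop:C-transform} is taken, since $\Psi$ itself varies with $\Lambda$; both dependencies must be shown to come with a gain of $\epg$, $\dg$, or $r^{-1}$ relative to the leading term $c^\S\Lambda$. A related point requiring care is the conversion of the background quantity $\bm{C}_{\ell=1,\widetilde{J}}$ between $\S$ and $\mathring{S}$ using Lemma \ref{lemma:comparison-gaS-ga}, which introduces further $\Lambda$-dependent, but again subleading, corrections that can be absorbed into $\err$.
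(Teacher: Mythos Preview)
Your proposal is correct and follows essentially the same approach as the paper: apply Theorem \ref{Theorem:ExistenceGCMS1} to obtain the family $\S(\Lambda,\underline{\Lambda})$, invoke Proposition \ref{prop:C-transform} to write the mass-centered condition as a fixed-point equation $\Lambda=(c^\S)^{-1}\bm{C}_{\ell=1,\widetilde{J}}+H(\Lambda,\underline{\Lambda})$, and solve it using the smallness of $\partial_\Lambda H$ coming from Proposition \ref{prop:diff-wrt-La-Lab}. Your discussion of the contraction estimate and the $\Lambda$-dependence of the adapted basis $\widetilde{J}\circ\Psi^{-1}$ is more explicit than the paper's, which simply packages these effects into the bound $|\partial_\Lambda H|\les\mathring{\epsilon}$ and appeals to the implicit function theorem.
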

\begin{proof}
  In view of the existence of generic (charged) GCM spheres $\S(\Lambda, \underline{\Lambda})$ established in Theorem \ref{Theorem:ExistenceGCMS1}, the proof of Theorem \ref{thm:GCM-spheres-J} reduces to solving for $\Lambda = \Lambda(\underline{\Lambda})$ such that the mass-centered condition
  \[
    \bm{C}^\S_{\ell=1,\widetilde{J}} = 0
  \]
  is satisfied on the sphere $\S(\Lambda(\underline{\Lambda}), \underline{\Lambda})$, with respect to the given triplet of function $\widetilde{J}^{(p)}$.

  According to Proposition \ref{prop:C-transform}, \textbf{A1}, and Proposition \ref{prop:diff-wrt-La-Lab}, the parameter $\Lambda$ satisfies the nonlinear equation
  \begin{equation}\label{eq:definition-Lambda}
    \Lambda = \left( \frac{3m^\S}{(r^\S)^3} - \frac{(Q^\S)^2}{2(r^\S)^4} \right)^{-1} \left[ -\bm{C}^\S_{\ell=1,\widetilde{J}} + \bm{C}_{\ell=1,\widetilde{J}} \right] + H(\Lambda, \underline{\Lambda}),
  \end{equation}
  where $H$ is a continuously differentiable function of $(\Lambda, \underline{\Lambda})$ satisfying the estimates
  \begin{equation}\label{eq:Lip-H}
    |H| \les \mathring{\epsilon} \dg, \qquad |\partial_\Lambda H| + |\partial_{\underline{\Lambda}} H| \les \mathring{\epsilon}. 
  \end{equation}
   Therefore, solving the condition $\bm{C}^\S_{\ell=1,\widetilde{J}} = 0$ reduces to solving the fixed-point equation
  \[
    \Lambda = G(\Lambda, \underline{\Lambda}),
  \]
  where $G$ is a smooth function with small derivative in $\Lambda$. Since the coefficient 
  \[
    \frac{3m^\S}{(r^\S)^3} - \frac{(Q^\S)^2}{2(r^\S)^4}
  \]
  is uniformly bounded away from zero for sufficiently small $\delta_1$, the implicit function theorem guarantees the existence and uniqueness of a solution $\Lambda = \Lambda(\underline{\Lambda})$, at least for $\underline{\Lambda}$ in a neighborhood of size $\dg$. Observe that by triangle inequality, we deduce from \eqref{eq:definition-Lambda} that 
  \begin{align}\label{eq:estimate-Lambda}
    \Lambda \les r \epg .
  \end{align}
  This concludes the proof.
\end{proof}

We also recall the following proposition. 
\begin{proposition}[Proposition 4.13 in \cite{klainermanConstructionGCMSpheres2022}]\label{prop-a-priori-estimates-f-fb}
  Assume ${\S}$ is a given $O(\epg)$-sphere in $\RR$. Assume given a solution $(f, \fb, \ovla, \Cbdot_0, \Mdot_0, \Cbpdot, \Mpdot, \ovb)$ of the system \eqref{eq:KS-GCM} and verifying \eqref{GCMS:l=1modesforffb}. Then, the following a priori estimates are verified, for $3\leq s \leq s_{\max}+1$,
  \begin{align*}
    &  \| (f, \fb, \ovla - \ov{\ovla}^{\S}) \|_{\hk_{s}(\S)} + \sum_p \big(r^2 |\Cbpdot|+r^3 |\Mpdot| \big)+r^2|\Cbdot_0| + r^3 |\Mdot_0| + r |\ov{\ovla}^{\S}| \\
    \les{}& r \|(h_1 - \ov{h_1}^{\S}, \underline{h}_1 - \ov{\underline{h}_1}^\S, h_2 - \ov{h_2}^{\S}, \underline{h}_2 - \ov{\underline{h}_2}^\S)\|_{\hk_{s-1}(\S)}+ r^2 \| h_3 - \ov{h_3}^{\S}\|_{\hk_{s-2}}+ r \| h_4 - \ov{h_4}^{\S}\|_{\hk_{s-3}}\\
    &+|\Lambda| + |\underline{\Lambda}|+ |b_0|.
  \end{align*} 
\end{proposition}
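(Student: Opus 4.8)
The plan is to read \eqref{eq:KS-GCM} as a coupled system consisting of a first-order Hodge part, which reconstructs the $1$-forms $f,\fb$ from their divergences and curls, together with two scalar elliptic equations for $\ovla$ and $\ovb$, and to close the estimate by splitting every quantity into its $\ell=0$ (average), $\ell=1$, and $\ell\ge 2$ components relative to the approximate basis $\JpS$. Two structural facts drive the argument: on the near-round sphere $\S$ the Hodge operator $\ddd_1^\S=(\div^\S,\curl^\S)$ is injective (there are no harmonic $1$-forms on a topological $2$-sphere), with its smallest eigenvalues of size $(r^\S)^{-2}$ attained precisely on the $\ell=1$ sector; and the scalar operator $\Delta^\S+2(r^\S)^{-2}$ is elliptic with kernel exactly the $\ell=1$ modes. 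The quantitative constants in these Hodge-elliptic estimates on $\S$ are controlled by the comparison of $g^\S$ with the background metric provided by Lemma \ref{lemma:comparison-gaS-ga}.

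First I would reconstruct $f$ and $\fb$. Applying the elliptic estimate for $\ddd_1^\S$ to the first four equations bounds $\|(f,\fb)\|_{\hk_s(\S)}$ by $r$ times the $\hk_{s-1}$-norms of $(\curl^\S f,\curl^\S\fb,\div^\S f,\div^\S\fb)$ on the $\ell\ge 1$ range, hence by the sources $h_1,\underline h_1,h_2,\underline h_2$ together with the coupling terms $\tfrac{2}{r^\S}\ovla$, $\tfrac{2}{(r^\S)^2}\ovb$ and the unknown constants. The $\ell=1$ part of $(f,\fb)$ — the direction on which $\ddd_1^\S$ degenerates — is not seen by this estimate and must instead be pinned down by the prescribed conditions $(\div^\S f)_{\ell=1,J^{(\S)}}=\La$ and $(\div^\S\fb)_{\ell=1,J^{(\S)}}=\Lab$ of \eqref{GCMS:l=1modesforffb}, together with the $\ell=1$ projections of the two $\curl$ equations. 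This is exactly where $|\La|+|\Lab|$ enters the final bound.

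Next I would solve for $\ovla$ and the constants. Projecting the $\ovla$-equation onto $\ell\ge 2$ and inverting $\Delta^\S+2(r^\S)^{-2}$ controls $\|\ovla-\ov{\ovla}^\S\|_{\hk_s(\S)}$ by $r^2\|h_3\|_{\hk_{s-2}}$ and the already-bounded constants $\Cbdot_0,\Cbpdot$. The constants are then read off by mode projection: projecting the $\div^\S\fb$ equation onto $\JpS$ isolates $\Cbpdot$ (and is where $\Lab$ re-enters, since $(\div^\S\fb)_{\ell=1}$ is prescribed), while projecting the $\ovla$-equation onto $\JpS$ — the kernel direction of $\Delta^\S+2(r^\S)^{-2}$ — isolates $\Mpdot$. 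Integrating the two $\div$ equations and the $\ovla$-equation over $\S$, and using $\ov{\div^\S f}^\S=\ov{\div^\S\fb}^\S=0$, $\ov{\ovb}^\S=b_0$, together with the definitions of the area radius $r^\S$ and Hawking mass $m^\S$ that constrain the source averages, determines $\ov{\ovla}^\S$, $\Cbdot_0$ and $\Mdot_0$; finally $\ovb$ is recovered by inverting $\Delta^\S$ on $\ell\ge 1$ in its own equation with average fixed to $b_0$.

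The main obstacle is the genuine coupling between these blocks: $\ovla$ and $\ovb$ appear as sources in the $\div$ equations for $f,\fb$, while $f,\fb$ feed back through $\curl^\S\fb$, through $\div^\S(\fb-f)$ in the $\ovb$-equation, and through the $\ell=1$ projections determining the constants. The key point that makes the estimate close is that every feedback term carries a gain — either a power $(r^\S)^{-1}$ from the geometry or a factor $\epg$ from the almost-round and approximate-basis errors — so after mode decomposition the system is effectively triangular and all off-diagonal contributions can be absorbed into the left-hand side once $r$ is large and $\epg$ small. The delicate bookkeeping is twofold: first, tracking the $O(\epg)$ failure of orthonormality of $\JpS$ when extracting $\Cbpdot,\Mpdot$; and second, verifying that only the mean-zero parts $h_i-\ov{h_i}^\S$ of the sources (plus $|\La|,|\Lab|,|b_0|$) survive in the averaged equations, which is precisely where the geometric normalizations of $r^\S$ and $m^\S$ must be used to eliminate the a priori uncontrolled source averages.
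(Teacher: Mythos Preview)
The paper does not give its own proof of this proposition; it is simply cited as Proposition 4.13 of \cite{klainermanConstructionGCMSpheres2022}. Your outline is essentially the argument carried out there: mode-decompose into $\ell=0$, $\ell=1$, and $\ell\ge 2$, invert $\ddd_1^\S$ for $(f,\fb)$ with the $\ell=1$ part fixed by $(\La,\Lab)$, invert $\Delta^\S+2(r^\S)^{-2}$ on $\ell\ge 2$ for $\ovla$, read off the constants $\Cbdot_0,\Cbpdot,\Mdot_0,\Mpdot$ and $\ov{\ovla}^\S$ from the $\ell\le 1$ projections, and close using the $r^{-1}$ and $\epg$ gains in the off-diagonal couplings.
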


\subsection{Existence of intrinsic GCM sphere}

\begin{theorem}\label{thm:intrinsic-GCM}
  Assume in addition the stronger assumption that for $k\le s_{max}$ and for a small enough constant $\delta_1>0$ with $\delta_1\ge \mathring{\epsilon}$,
      \begin{equation}
          \label{eq:A1-Strong}
          \norm*{\Gamma_g, \Gamma_b}_{k,\infty}\lesssim \delta_1 r^{-2},\qquad\norm*{\nabla_3\Gamma_g}_{k, \infty}\lesssim \delta_1 r^{-3},
      \end{equation}
      and the existence of a smooth family of scalar functions $J^{(p)}:\mathcal{R}\to \mathbb{R}$ for $p=0, +, -0$ verifying properties (7.2), (7.3) in \cite{klainermanEffectiveResultsUniformization2022} and $\partial_s J^{(p)}=\partial_u J^{(p)}=0$. Then we have that  
  on the background foliation of $\RR$ as in Theorem 7.3 in \cite{klainermanEffectiveResultsUniformization2022}, there exists a unique GCM deformation $\S$ of $\mathring{S}$, up to a rotation of $\mathbb{S}^2$, verifying:
  \begin{enumerate}
  \item the mass-centered GCM condition
    \begin{equation}\label{eq:bmC-ell1-can}
      \bm{C}^\S_{\ell=1,J^{(p,\S)}} = 0
    \end{equation}
    with respect to its canonical $\ell=1$ mode $J^{(p,\S)}$ as in Definition \ref{def:can-ell1-modes}, and
  \item the condition
    \begin{equation}\label{eq:trchbc-ell1-can}
      \left(\trchb^{\S} + \frac{2}{r^{\S}} \left(1 - \frac{2M^{\S}}{r^{\S}} + \frac{(Q^{\S})^2}{(r^{\S})^2} \right) \right)_{\ell=1,J^{(p,\S)}} = 0,
    \end{equation}
    again with respect to its canonical $\ell=1$ mode $J^{(p,\S)}$.
  \end{enumerate}
\end{theorem}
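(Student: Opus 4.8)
The plan is to build directly on the mass-centered GCM sphere of Theorem \ref{thm:GCM-spheres-J} and to follow the template of Theorem 7.3 (equivalently Theorem 1.6) in \cite{klainermanEffectiveResultsUniformization2022}, organizing the argument as two nested fixed points. Recall that for a \emph{fixed} $\ell=1$ basis $\widetilde{J}$ and a fixed parameter $\underline{\Lambda}$, Theorem \ref{thm:GCM-spheres-J} already produces a mass-centered GCM sphere on which $\bm{C}^\S_{\ell=1,\widetilde{J}}=0$ holds and on which $\Lambda=(\div^\S f)_{\ell=1,\widetilde{J}}=\Lambda(\underline{\Lambda})$ has been solved for through \eqref{eq:definition-Lambda}. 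Thus the only remaining freedoms are $\underline{\Lambda}=(\div^\S\fb)_{\ell=1,\widetilde{J}}$ and the choice of basis, and the two conditions \eqref{eq:bmC-ell1-can}, \eqref{eq:trchbc-ell1-can} must ultimately be imposed with respect to the \emph{canonical} basis $J^{(p,\S)}$ of Definition \ref{def:can-ell1-modes} rather than with respect to $\widetilde{J}$.

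First, as an inner step with $\widetilde{J}$ frozen, I would solve for $\underline{\Lambda}$ so as to enforce \eqref{eq:trchbc-ell1-can}. By Definition \ref{definition:GCMS} and the fact that $\Up^\S$ is constant on $\S$, this condition is equivalent to the vanishing of the $\ell=1$ constant $\CbpS$ in the expansion of $\trchb^\S+\tfrac{2}{r^\S}\Up^\S$. Projecting the $\div^\S\fb$-equation of the GCM system \eqref{eq:KS-GCM} onto $\ell=1$ identifies $\Cbpdot=\CbpS-\Cbp$ with $\underline{\Lambda}$ to leading order, modulo terms of size $O(\epg)\|(f,\fb,\ovla)\|+O(r^{-1})$ controlled by Proposition \ref{prop-a-priori-estimates-f-fb}. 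Hence the map $\underline{\Lambda}\mapsto\CbpS$ is invertible with uniformly bounded inverse, and the implicit function theorem produces a unique $\underline{\Lambda}=\underline{\Lambda}(\widetilde{J})$ solving $\CbpS=0$, yielding a sphere $\S(\widetilde{J})$ satisfying \eqref{eq:bmC-ell1-can} and \eqref{eq:trchbc-ell1-can} with respect to $\widetilde{J}$, with differentiable dependence on the basis inherited from Proposition \ref{prop:diff-wrt-La-Lab}.

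Second, as an outer step, I would make the basis self-consistent with the uniformization. Property (6) of Theorem \ref{Theorem:ExistenceGCMS1} together with Lemma \ref{lemma:comparison-gaS-ga} guarantees that $\S(\widetilde{J})$ is almost round in the sense of \eqref{eq:almost-round}, with $\|K^\S-(r^\S)^{-2}\|_{L^\infty}\lesssim\epg (r^\S)^{-2}$; the stronger assumption \eqref{eq:A1-Strong}, together with properties (7.2)--(7.3) of \cite{klainermanEffectiveResultsUniformization2022}, supplies the regularity needed to invoke the effective uniformization Theorem \ref{thm:eff-unif}, returning canonical modes $J^{(p,\S)}$ with $\|u\circ\Phi^{-1}\|_{L^\infty}\lesssim\epg$. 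The self-consistency requirement is that the adapted basis $\JpS=\widetilde{J}\circ\Psi^{-1}$ used in the inner construction agree, up to a rotation of $\SSS^2$, with the canonical basis $J^{(p,\S)}$ of $\S(\widetilde{J})$. Encoding their discrepancy by a $3\times3$ matrix close to the identity turns this into a finite-dimensional fixed-point equation, which I would solve by the implicit function theorem; the residual $SO(3)$ freedom is precisely the rotational ambiguity, giving uniqueness up to rotation of $\SSS^2$.

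The main obstacle is closing this outer fixed point, i.e. establishing the contraction (equivalently, the nondegeneracy for the implicit function theorem) for the basis map. This requires quantifying and composing two Lipschitz dependences: how a variation of $\widetilde{J}$ moves the transition parameters and hence the induced metric $g^\S$ — available from Proposition \ref{prop:diff-wrt-La-Lab}, where $\pr_\La g^\S,\ \pr_{\Lab} g^\S=O(r^2)$ — and how the uniformization pair $(\Phi,u)$, and thus $J^{(p,\S)}$, depends Lipschitz-continuously on $g^\S$ via the effective uniformization of \cite{klainermanEffectiveResultsUniformization2022}. The delicate point is that perturbing $\widetilde{J}$ shifts \emph{both} $\ell=1$ parameters at once — $\Lambda$ through the mass-centered equation \eqref{eq:definition-Lambda} and $\underline{\Lambda}$ through the inner solve above — so both shifts must be tracked simultaneously and shown to contribute only at order $O(\epg)$ to the overall Lipschitz constant, producing a contraction for $\epg$ small. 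The charged modifications enter only as charge-weighted lower-order terms in the coefficients of \eqref{eq:KS-GCM} and in Proposition \ref{prop:C-transform}, and therefore do not alter this contraction structure, so the vacuum argument of \cite{klainermanEffectiveResultsUniformization2022} carries over throughout the full subextremal range.
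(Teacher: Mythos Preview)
Your approach is correct in spirit but organizes the argument differently from the paper. You set up a triple-nested fixed point: first $\Lambda$ via Theorem \ref{thm:GCM-spheres-J}, then $\underline{\Lambda}$ for the $\trchb^\S$ condition, then an outer fixed point on the basis $\widetilde{J}$ to make it coincide with the canonical one produced by the effective uniformization. The paper instead short-circuits your outer step by directly invoking Theorem 6.5 of \cite{klainermanEffectiveResultsUniformization2022}, which already produces, for each pair $(\Lambda,\underline{\Lambda})$, a unique GCM sphere $\S(\Lambda,\underline{\Lambda})$ together with its canonical $\ell=1$ modes $J^{(p,\S)}$. With this black box in hand, the paper writes down two identities (analogous to Lemma 7.5 of \cite{klainermanEffectiveResultsUniformization2022}) expressing $\Lambda$ and $\underline{\Lambda}$ in terms of $\bm{C}^\S_{\ell=1,J^{(p,\S)}}$ and $\trchbc^\S_{\ell=1,J^{(p,\S)}}$ plus error functions $H_1,H_2$ satisfying $|\partial_{\Lambda,\underline{\Lambda}}(H_1,H_2)|\lesssim\delta_1$, and solves a single six-dimensional fixed point for $(\Lambda,\underline{\Lambda})$ simultaneously. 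In particular, the paper does \emph{not} pass through Theorem \ref{thm:GCM-spheres-J} at all here; it starts from the generic canonical-mode GCM sphere rather than from the mass-centered one.

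What each buys: the paper's route is shorter and avoids the part you flag as ``the main obstacle'' (the outer contraction on the basis), because that contraction has already been established inside Theorem 6.5 of \cite{klainermanEffectiveResultsUniformization2022}; the price is that the dependence of $J^{(p,\S)}$ on $(\Lambda,\underline{\Lambda})$ is now hidden inside $H_1,H_2$, and one must check that this still gives $O(\delta_1)$ Lipschitz constants. Your route is more self-contained and makes explicit the hierarchy of constraints, but it requires you to re-prove a version of the basis fixed point from \cite{klainermanEffectiveResultsUniformization2022}. Your identification of $\underline{\Lambda}$ with $\Cbpdot$ via the $\div^\S\fb$ equation of \eqref{eq:KS-GCM} matches the paper's second identity in \eqref{eq:intrinsicGCM-LaLab} up to the $\Up^\S\Lambda$ coupling term, which you absorb into the inner-solve error; this is fine since $\Lambda$ has already been determined at that stage in your scheme.
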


\begin{proof}
  Given $\Lambda, \underline{\Lambda} \in \mathbb{R}^3$, we apply Theorem 6.5 in \cite{klainermanEffectiveResultsUniformization2022} to construct a unique GCM sphere $\S = \S(\Lambda, \underline{\Lambda})$ endowed with canonical $\ell=1$ modes $J^{(p,\S)}$ as in Definition \ref{def:can-ell1-modes} from the effective uniformization Theorem \ref{thm:eff-unif}. Note that the magnetic charge $e^{\S(\Lambda, \underline{\Lambda})}$ can be fixed to zero by an additional $U(1)$ rotation of the electromagnetic 2-form $\F$.

  Similar to Lemma 7.5 of \cite{klainermanEffectiveResultsUniformization2022}, we have the identities:
  \begin{equation}\label{eq:intrinsicGCM-LaLab}
    \begin{split}
      \Lambda &= \left( \frac{3m^\S}{(r^\S)^3} - \frac{(Q^\S)^2}{2(r^\S)^4} \right)^{-1} \left[ -\bm{C}^\S_{\ell=1,J^{(p,\S)}} + \bm{C}_{\ell=1,J^{(p,\S)}} \right] + H_1[\Lambda, \underline{\Lambda}], \\
      \underline{\Lambda} &= \Up^{\S} \Lambda + \frac{r^{\S}}{3m^{\S}} \left[ \trchbc^{\S}_{\ell=1,J^{(p,\S)}} + \Up^{\S} \trchc^{\S}_{\ell=1,J^{(p,\S)}} - \trchbc_{\ell=1,J^{(p,\S)}} \right] + H_2[\Lambda, \underline{\Lambda}],
    \end{split}
  \end{equation}
  where $H_1, H_2$ are continuously differentiable functions of $(\Lambda, \underline{\Lambda})$ satisfying the estimates
  \begin{equation}\label{eq:est-H1H2}
    |H_1, H_2| \les \delta_1 \dg, \qquad |\partial_{\Lambda, \underline{\Lambda}}(H_1, H_2)| \les \delta_1.
  \end{equation}

  The existence of an intrinsic GCM sphere then follows by solving \eqref{eq:intrinsicGCM-LaLab} for $(\Lambda, \underline{\Lambda})$ such that conditions \eqref{eq:bmC-ell1-can} and \eqref{eq:trchbc-ell1-can} are satisfied. This is achieved via a fixed point argument, using the smallness estimates \eqref{eq:est-H1H2}. Uniqueness follows by the same method.
\end{proof}

\begin{remark}
  The effective uniformization theorem in
  \cite{klainermanEffectiveResultsUniformization2022} yields
  uniqueness only up to a rotation of the sphere. This residual
  freedom can be used to fix the virtual axis of rotation of the
  perturbed spacetime, as in
  \cite{klainermanConstructionGCMSpheres2022}. Introducing the
  \emph{angular momentum function}
\begin{align}\label{eq:def-J}
    \bm{J}\vcentcolon=\curl \b -\frac{Q}{r^2}\curl\bF +\frac{2Q}{r^3}\dual\rhoF,
\end{align}
we can exploit this freedom to show that one can construct an
intrinsic GCM sphere, as in Theorem \ref{thm:intrinsic-GCM}, which in
addition satisfies
    \begin{align}\label{eq:S_*-GCM}
      \int_{{\bf S}}J^{(+)}\bm{J}=0, \qquad \int_{{\bf S}}J^{(-)}\bm{J}=0.
    \end{align}
\end{remark}

\section{Existence of mass-centered GCM hypersurface}

We want to prove the following theorem, which relies on the construction of GCM spheres in Theorem \ref{thm:GCM-spheres-J}. In particular, we denote ${\S}_0$ a mass-centered GCM sphere obtained as a result of Theorem \ref{thm:GCM-spheres-J}.

\begin{theorem}\label{thm:GCMH}
  Let ${\S}_0$ be a fixed sphere of the spacetime region $\RR$ as in Definition \ref{definition-spacetime-region-RR}, and let $0<\dg\leq \mathring{\epsilon}$ be two sufficiently small constants.
  Let
  $\underline{\Lambda}_0 \in \mathbb{R}^3$ be such that $|\underline{\Lambda}_0|\les \dg$, and let $J^{({\S}_0, p)}$ be a basis of $\ell=1$ modes on ${\S}_0$ satisfying 
  \begin{align*}
    \sum_{p = 0, +, -} \| J^{(p)} - J^{({\S}_0, p)} \|_{\mathfrak{h}_{s_{\max}+1}(\mathring{S})} \les r \dg,
  \end{align*}
  and such that ${\S}_0$ is an (electrovacuum) mass-centered GCM sphere with respect to $J^{({\S}_0, p)}$ and such that 
  \begin{align*}
    ( \div^{\S_0}\underline{f})_{\ell=1,J^{({\S}_0)}}=\underline{\Lambda},
  \end{align*}
  where $(f, \underline{f}, \lambda)$ denote the transition coefficients of the transformation from the background frame of $\RR$ to the frame adapted to ${\S}_0$.

  Suppose in addition that, the following quantities of the background foliation in $\RR$ satisfy the following:
  \begin{align*}
    \begin{split}
      \trch &= \frac{2}{r}+ \dot{\trch}, \\
      \trchb &=- \frac{2\Up}{r}+\underline{C}_0(u,s)+\sum_p \underline{C}^{(p)}(u,s)J^{(p)} +\dot{\trchb}, \\
      \mu &= \frac{2M}{r^3}+\frac{2Q^2}{r^4} +M_0(u,s) + \sum_{p} M^{(p)}(u,s)J^{(p)} + \dot{\mu} 
    \end{split} 
  \end{align*}
  where the scalar functions $\underline{C}_0$, $\underline{C}^{(p)}$, $M_0$, $M^{(p)}$ on $\RR$ depend only on $(u,s)$ and where 
  \begin{align}\label{eq:control-trch-trchb-mu}
    \sup_{\RR} |\widetilde{\dk}^{\leq s_{\max}}(\dot{\trch}, \dot{\trchb})| \les r^{-2} \dg, \qquad \sup_{\RR} |\widetilde{\dk}^{\leq s_{\max}}\dot{\mu}| \les r^{-3} \dg, 
  \end{align}
  where $\widetilde{\dk}\vcentcolon= (e_3-(z+\underline{\Omega})e_4, \dkb)$ and $\dkb=r \nab$.
  Moreover, 
  \begin{align*}
    \sup_{\RR} |\widetilde{\dk}^{\leq s_{\max}} \bm{C}_{\ell=1, J}| \les r^{-4} \dg, \qquad |(\div\xib)_{\ell=1, J}| \les \dg.
  \end{align*}
  Suppose in addition that in $\RR$, $e_3(J^{(p)})$, $r-s$ and $e_3(r)-e_3(s)$ are small with respect to the parameter $\dg$.

  Then there exists a unique, local, smooth, spacelike hypersurface $\Sigma_0$ passing through ${\S}_0$, a scalar function $u^{\S}$ defined on $\Sigma_0$, whose level surfaces are topological spheres denoted by ${\S}$, a smooth collection of constant triplets $\underline{\Lambda}^{\S}$ and a triplet of functions $J^{({\S}, p)}$ defined on $\Sigma_0$ verifying 
  \begin{align*}
    \underline{\Lambda}^{{\S_0}}=\underline{\Lambda}_0, \qquad J^{({\S}, p)}|_{{\S}_0}=J^{({\S}_0, p)},
  \end{align*}
  such that the following conditions are verified:
  \begin{enumerate}
  \item The surfaces ${\S}$ of constant $u^{\S}$, together with a null frame $(e_3^{\S}, e_4^{\S}, e_1^{\S}, e_2^{\S})$, are (electrovacuum) mass-centered GCM spheres with respect to the basis of $\ell=1$ modes $J^{({\S}, p)}$.
  \item We have, for some constant $c_0$, 
    \begin{align*}
      u^{\S}+r^{\S}=c_0, \qquad \text{along $\Sigma_0$.}
    \end{align*}
  \item Let $\nu^{\S}$ be the unique vectorfield tangent to the hypersurface $\Sigma_0$, normal to ${\S}$, and normalized by $\g(\nu^{\S}, e_4^{\S})=-2$. Let $b^{\S}$ be the unique scalar function of $\Sigma_0$ such that $\nu^{\S}$ is given by 
    \begin{align*}
      \nu^{\S}=e_3^{\S}+b^{\S}e_4^{\S}.
    \end{align*}
    Then the following normalization condition holds true:
    \begin{align}\label{eq:condition-overline-b}
      \overline{b^{\S}}=-1-\frac{2M_{(0)}}{r^{\S}}+\frac{Q_{(0)}^2}{(r^{\S})^2}
    \end{align}
    where $\overline{b^{\S}}$ is the average value of $b^{\S}$ over ${\S}$ and $M_{(0)}$, $Q_{(0)}$ are constants.
  \item The following transversality conditions 
    \begin{align}\label{eq:transversality-condition}
      \xi^{\S}=0, \qquad \om^{\S}=0, \qquad \etab^{\S}=-\ze^{\S},
    \end{align}
    and 
    \begin{align*}
      e_4^{\S}(u^{\S})=0, \qquad e_4^{\S}(r^{\S})=1
    \end{align*}
    are assumed on $\Sigma_0$.

  \item We have the following identity on $\Sigma_0$:
    \begin{align}\label{eq:condition-xib-eta}
      (\div^{\S}\xib^{\S})_{\ell=1, J^{({\S})}}=0.
    \end{align}
  \item The transition coefficients $(f, \underline{f}, \mathring{\lambda})$ from the background foliation to that of $\Sigma_0$ verify
    \begin{align}\label{eq:bounds-f-fb-la}
      \| (f, \underline{f}, \mathring{\lambda})\|_{\mathfrak{h}_{s_{\max}+1}(\S)}+ \| \dk(f, \underline{f}, \mathring{\lambda})\|_{\mathfrak{h}_{s_{\max}}(\S)}\les \dg.
    \end{align}
  \end{enumerate}
\end{theorem}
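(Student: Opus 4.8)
The plan is to adapt Shen's construction of GCM hypersurfaces \cite{shenConstructionGCMHypersurfaces2023} to the mass-centered setting, the essential modification being that the $\ell=1$ mode $(\div^\S f)_{\ell=1}$ is no longer a hypersurface unknown governed by a transport equation but is instead fixed \emph{sphere-wise} through the mass-centered condition $\bm{C}^\S_{\ell=1}=0$. Concretely, I would foliate the sought hypersurface $\Sigma_0$ by a one-parameter family of spheres $\S$ parametrized by their area radius $r^\S$, tying the optical parameter to it by the normalization $u^\S + r^\S = c_0$. At each value of the parameter I invoke Theorem \ref{thm:GCM-spheres-J} to produce the unique mass-centered GCM sphere deforming the corresponding background sphere $\mathring{S}(u,s)$: this fixes $\trch^\S$, $(\trchb^\S)_{\ell\geq 2}$, $(\mu^\S)_{\ell\geq 2}$ and, crucially, determines $\Lambda^\S=(\div^\S f)_{\ell=1}$ as a function of the remaining parameter $\underline{\Lambda}^\S=(\div^\S\fb)_{\ell=1}$ via the nonlinear relation \eqref{eq:definition-Lambda} enforcing $\bm{C}^\S_{\ell=1}=0$. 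The basis $J^{(\S,p)}$ is propagated along $\Sigma_0$ from the canonical basis on $\S_0$, exactly as in \cite{shenConstructionGCMHypersurfaces2023}, so that the mass-centered and GCM conditions are read off against a consistent family of $\ell=1$ modes.

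The heart of the argument is to set up and solve the transport system that glues these spheres into a smooth spacelike hypersurface. Writing $\nu^\S=e_3^\S+b^\S e_4^\S$ for the vectorfield tangent to $\Sigma_0$ and normal to the leaves, I would impose the transversality conditions \eqref{eq:transversality-condition} together with $e_4^\S(u^\S)=0$, $e_4^\S(r^\S)=1$; combined with the frame transformation formulas of Lemma \ref{lem:EM-transform}, these yield $\nu^\S$-transport equations for the transition coefficients $(f,\fb,\mathring{\lambda})$ and for the lapse-type scalar $b^\S$, whose average is prescribed by \eqref{eq:condition-overline-b} and whose $\ell\geq1$ part is fixed by the requirement that the leaves fit into a single smooth hypersurface. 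The one genuinely new equation is the ODE governing $\underline{\Lambda}^\S$: projecting the transformation formula
\[
  \la^2\xib^\S = \xib + \tfrac{1}{2}\la\nab_3'\fb + \omb\,\fb + \tfrac{1}{4}\trchb\,\fb - \tfrac{1}{4}\atrchb\dual\fb + \err(\xib,\xib^\S)
\]
onto the $\ell=1$ modes and imposing the hypersurface condition \eqref{eq:condition-xib-eta}, namely $(\div^\S\xib^\S)_{\ell=1}=0$, produces a first-order ODE for $\underline{\Lambda}^\S$ along $\Sigma_0$ with initial datum $\underline{\Lambda}^{\S_0}=\underline{\Lambda}_0$. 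This is precisely where the construction departs from \cite{shenConstructionGCMHypersurfaces2023}: since $(\div^\S f)_{\ell=1}$ is already pinned down sphere-wise, only $(\div^\S\fb)_{\ell=1}$ remains as an $\ell=1$ transport unknown, and the condition $(\div^\S\eta)_{\ell=1}=0$ used in \cite{shenConstructionGCMHypersurfaces2023} must be dropped so as not to over-determine the system.

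With the system assembled, I would close it by a standard ODE existence-and-uniqueness argument: the right-hand sides are Lipschitz in $(f,\fb,\mathring{\lambda},b^\S,\underline{\Lambda}^\S)$ with small constants, the smallness being inherited from the background control \eqref{eq:control-trch-trchb-mu}, from the sphere-wise estimates of Theorem \ref{thm:GCM-spheres-J}, and from the differentiability of the GCM data in $(\Lambda,\underline{\Lambda})$ recorded in Proposition \ref{prop:diff-wrt-La-Lab}. A Gr\"onwall/bootstrap argument on the interval in $r^\S$ then yields existence, uniqueness and the bound \eqref{eq:bounds-f-fb-la}, after which the normalization and transversality properties hold by construction. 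The main obstacle I anticipate is twofold. First, one must derive the $\ell=1$-projected $\xib^\S$ equation in a form that is genuinely a closed ODE for $\underline{\Lambda}^\S$; this requires commuting $\div^\S$ past $\nab_3'$ and controlling the resulting error terms, and it relies on the favorable outgoing transport structure of the renormalized $\bm{C}^\S_{\ell=1}$ so that the sphere-wise dependence $\Lambda^\S=\Lambda^\S(\underline{\Lambda}^\S)$ through \eqref{eq:definition-Lambda} enters only as a controlled, Lipschitz coupling. Second, one must check that dropping $(\div^\S\eta)_{\ell=1}=0$ leaves a \emph{determined} (neither over- nor under-constrained) elliptic-transport system, i.e.\ that the single hypersurface condition \eqref{eq:condition-xib-eta} together with the sphere-wise mass-centering exactly matches the available $\ell=1$ gauge freedom.
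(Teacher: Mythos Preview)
Your proposal is correct and follows essentially the same approach as the paper: build $\Sigma_0$ as a one-parameter family of mass-centered GCM spheres from Theorem \ref{thm:GCM-spheres-J}, propagate the $\ell=1$ basis along $\Sigma_0$ via $\nu^\S J^{(\S,p)}=0$, derive an ODE for $\underline{\Lambda}^\S$ by projecting the $\xib^\S$ transformation formula onto $\ell=1$ modes and imposing $(\div^\S\xib^\S)_{\ell=1}=0$, and close by Cauchy--Lipschitz. You also correctly identify the key structural change relative to \cite{shenConstructionGCMHypersurfaces2023}: $\Lambda^\S$ is fixed sphere-wise by the mass-centered condition, so only $\underline{\Lambda}^\S$ remains as an $\ell=1$ transport unknown and the $(\div^\S\eta)_{\ell=1}=0$ condition is dropped.

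One point you leave implicit that the paper makes explicit: the ODE system is \emph{coupled} and involves a second scalar unknown, namely the position function $\psi(s)=\Psi(s)+s-c_0$ selecting \emph{which} background sphere $S(\Psi(s),s)$ is deformed at each value of $s$. The paper shows that the normalization \eqref{eq:condition-overline-b} on $\overline{b^\S}$ is precisely what determines the evolution of $\psi$, so that the final system is a $2\times 2$ ODE in $(\underline{\Lambda},\psi)$ with the pair $(\underline{B},D)=\big((\div^\S\xib^\S)_{\ell=1},\ \overline{b^\S}+1+\tfrac{2M_{(0)}}{r^\S}-\tfrac{Q_{(0)}^2}{(r^\S)^2}\big)$ set to zero. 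Your mention of transport for $b^\S$ gestures at this, but when you write ``deforming the corresponding background sphere $\mathring{S}(u,s)$'' you should be aware that this correspondence is not given a priori---it is part of what the ODE system solves for.
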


\begin{remark}
  The value $\overline{b^{\S}}$ is free and should be prescribed. The choice \eqref{eq:condition-overline-b} coincides with the value for the hypersurface $\{u+r=c_0\}$ in Reissner-Nordstr\"om spacetime. 
\end{remark}

\begin{remark}
    In the context of nonlinear stability of Reissner-Nordstr\"om or Kerr-Newman, we will need to verify the existence of a mass-centered GCM hypersurface that starts from an intrinsic GCM sphere $S_*$ with a prescribed position (referred to as a dominance condition in \cite{fangEinsteinMaxwellEquationsMassCentered2025}). 
\end{remark}

\subsection{Proof of Theorem \texorpdfstring{\ref{thm:GCMH}}{}}

The proof consists in constructing $\Sigma_0$ by concatenating a family of mass-centered GCM spheres ${\S}(\underline{\Lambda}, \widetilde{J}^{(p)})$ as in Theorem \ref{thm:GCM-spheres-J} emanating from ${\S}_0$, and therefore Point 1 in Theorem \ref{thm:GCMH} will be automatically satisfied. On the other hand, \eqref{eq:condition-xib-eta} will be enforced thanks to a special choice of $\underline{\Lambda}^{\S}$.

More precisely, the hypersurface is described by ${\S}(\underline{\Lambda}(s), \widetilde{J}^{(p)}(s))$ where the $\ell=1$ basis $\widetilde{J}^{(p)}(s)$ is defined by exactly transporting it through $J^{(p)}{}'(s)=0$ starting with the given one at ${\S}_0$.

We divide the proof in four steps.

\subsubsection{Step 1: Apply Theorem \texorpdfstring{\ref{thm:GCM-spheres-J}}{} to spheres \texorpdfstring{in $\RR$}{}}

We assume $\RR$ is a spacetime region as in Definition \ref{definition-spacetime-region-RR}.

Let $\mathring{S}=S(\mathring{u}, \mathring{s})$ be a fixed sphere in $\RR$ with $\ell=1$ basis $J^{(p)}$ and let $\underline{\Lambda}_0 \in \mathbb{R}^3$ and $J_0^{(p)}$ be fixed such that $|\underline{\Lambda}_0| \les \dg$ and 
\begin{align*}
  \sum_{p = 0, +, -} \| J^{(p)} - J_0^{(p)} \|_{\mathfrak{h}_{s_{\max}}(S(u,s))} \les r \dg.
\end{align*}
Then, we denote ${\S}_0\doteq{\S}[\mathring{u},\mathring{s}, \underline{\Lambda}_0,J_0]$ the unique mass-centered GCM sphere which is a deformation of $\mathring{S}$ as in Theorem \ref{thm:GCM-spheres-J}.

Let $\Psi(s)$ be a real valued function and $\underline{\Lambda}(s)$ be a triplet of functions satisfying for $s \in \mathring{I}\vcentcolon=[\mathring{s}, s_1]$, with $|s_1-\mathring{s}|\les \epg$,
\begin{align*}
  |\underline{\Lambda}(s)|, \quad |\underline{\Lambda}'(s)|, \quad |\Psi(s) + s -c_0|, \quad |\Psi'+1| \les \dg, 
\end{align*}
and 
\begin{align*}
  \Psi(\mathring{s})=\mathring{u}, \qquad \underline{\Lambda}(\mathring{s})=\underline{\Lambda}_0,
\end{align*}
where $c_0=\mathring{u}+\mathring{s}$. We denote $$\psi(s)\vcentcolon= \Psi(s) + s -c_0.$$

We first define in a neighborhood of ${\S}_0$ a more general family of hypersurface, denoted $ \Sigma_{\#}$, within which we will make a choice of $\Sigma_0$. We define
\begin{align*}
  \Sigma_{\#}\vcentcolon= \bigcup_{s \in \mathring{I}} S(\Psi(s), s) = \{ u = \Psi(s), s \in \mathring{I} \}.
\end{align*}
Let $\widetilde{J}^{(p)}$ be a triplet of functions on $\Sigma_{\#}$ satisfying 
\begin{align}\label{eq:condition-J-intermediate-widetilde-J-thm}
  \sup_{s \in \mathring{I}}\sum_{p = 0, +, -} \| J^{(p)}(s) - \widetilde{J}^{(p)}(s)\|_{\mathfrak{h}_{s_{\max}}(S(\Psi(s), s))} \les r \dg,
\end{align}
and $\widetilde{J}^{(p)}(\mathring{s})=J_0^{(p)}$, where we denote $\widetilde{J}(s)\vcentcolon= \widetilde{J}|_{S(\Psi(s), s)}$. For every $s \in \mathring{I}$ we apply Theorem \ref{thm:GCM-spheres-J} to the background sphere $S(\Psi(s), s)$, the triplet $\underline{\Lambda}(s)$ and the triplet $\widetilde{J}(s)$ to obtain the (electrovacuum) mass-centered GCM sphere ${\S}[\Psi(s),s, \underline{\Lambda}(s),\widetilde{J}(s)]$ as a deformation of $S(\Psi(s),s)$. 

We define
\begin{align*}
  \Sigma_{\widetilde{J}}\vcentcolon= \bigcup_{s \in \mathring{I}} {\S}[\Psi(s),s, \underline{\Lambda}(s),\widetilde{J}(s)].
\end{align*}

\subsubsection{Step 2: Construct a family of \texorpdfstring{$\ell=1$}{} modes \texorpdfstring{$\widetilde{J}$}{} by propagating the one on the first sphere}

We construct, relying on a Banach fixed point argument, a family of basis of $\ell=1$ modes $\widetilde{J}(s)$ on $\Sigma_{\widetilde{J}}$. Morally speaking, we want to impose that 
\begin{align*}
  \nu^{\widetilde{J}}(\widetilde{J}(s))=0 \quad \text{on $\Sigma_{\widetilde{J}}$}, \qquad \widetilde{J}^{(p)}(\mathring{s})=J_0^{(p)},
\end{align*}
where $\nu^{\widetilde{J}}$ is the unique vectorfield tangent to $\Sigma_{\widetilde{J}}$ with $\g(\nu^{\widetilde{J}}, e_4^{\widetilde{J}})=-2$ and normal to ${\S}[\Psi(s),s, \underline{\Lambda}(s),\widetilde{J}(s)]$, where $(e_3^{\widetilde{J}}, e_4^{\widetilde{J}}, e_1^{\widetilde{J}}, e_2^{\widetilde{J}})$ denotes the adapted frame of ${\S}[\Psi(s),s, \underline{\Lambda}(s),\widetilde{J}(s)]$. Observe that we actually need to impose these conditions through the pull back of the deformation maps 
\begin{align*}
  \Phi^{\widetilde{J}}(s) : S(\Psi(s), s) \to {\S}[\Psi(s),s, \underline{\Lambda}(s),\widetilde{J}(s)] \subset \RR
\end{align*}
used in the proof of Theorem \ref{thm:GCM-spheres-J} in Step 1.

\begin{lemma}[Theorem 4.10 in \cite{shenConstructionGCMHypersurfaces2023}]\label{lemma:construction-Jtilde} There exists a unique $\widetilde{J}$ satisfying \eqref{eq:condition-J-intermediate-widetilde-J-thm} and verifying 
  \begin{align}\label{eq:conditions-Jtilde}
    \begin{split}
      \nu^{\widetilde{J}}\Big( \big((\Phi^{\widetilde{J}})^{-1} \big)^{\#} \widetilde{J} \Big)&=0 \qquad \qquad \text{on $\Sigma_{\widetilde{J}}$}, \\
      \big((\Phi^{\widetilde{J}})^{-1} \big)^{\#} \widetilde{J}^{(p)} &= J_0^{(p)} \qquad \qquad \text{on ${\S}_0$.} 
    \end{split}
  \end{align}
\end{lemma}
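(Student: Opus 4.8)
The proof is built on a Banach fixed-point argument, following the strategy of Theorem 4.10 in \cite{shenConstructionGCMHypersurfaces2023} but with the transition functions supplied by our mass-centered sphere construction in Theorem \ref{thm:GCM-spheres-J}. The plan is to set up a solution map $\Theta$ on a closed ball of candidate $\ell=1$ bases and show it is a contraction. Concretely, I would work in the complete metric space $\mathcal{B}$ of triplets $\widetilde{J}=(\widetilde{J}^{(p)})_{p=0,+,-}$ of functions on $\Sigma_{\#}$ satisfying the approximation bound \eqref{eq:condition-J-intermediate-widetilde-J-thm} together with the normalization fixing the value on ${\S}_0$, equipped with the distance induced by $\sup_{s}\sum_p\|\cdot\|_{\mathfrak{h}_{s_{\max}}(S(\Psi(s),s))}$.

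Given $\widetilde{J}\in\mathcal{B}$, the first step is to read off the geometry it induces: for each $s\in\mathring{I}$ Theorem \ref{thm:GCM-spheres-J} produces the mass-centered GCM sphere ${\S}[\Psi(s),s,\underline{\Lambda}(s),\widetilde{J}(s)]$, its deformation map $\Phi^{\widetilde{J}}(s)$ and adapted frame, hence the hypersurface $\Sigma_{\widetilde{J}}$ together with the tangent-normal field $\nu^{\widetilde{J}}$ (of the form $e_3^{\widetilde{J}}+b\,e_4^{\widetilde{J}}$). The second step defines $\Theta(\widetilde{J})$: I would solve the linear transport problem
\begin{equation*}
  \nu^{\widetilde{J}}\big(\widehat{K}^{(p)}\big)=0 \quad\text{on }\Sigma_{\widetilde{J}}, \qquad \widehat{K}^{(p)}\big|_{{\S}_0}=\big((\Phi^{\widetilde{J}})^{-1}\big)^{\#}J_0^{(p)},
\end{equation*}
which is a well-posed ODE along the integral curves of $\nu^{\widetilde{J}}$ transverse to the foliating spheres, and then set $\Theta(\widetilde{J})^{(p)}\vcentcolon=(\Phi^{\widetilde{J}})^{\#}\widehat{K}^{(p)}$, i.e. pull the transported modes back to $\Sigma_{\#}$. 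By construction a fixed point of $\Theta$ is exactly a $\widetilde{J}$ verifying \eqref{eq:conditions-Jtilde}.

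Two estimates must then be established. First, $\Theta$ maps $\mathcal{B}$ into itself: since $\mathring{I}$ has length $\les\epg$ and the transversal quantities $e_3(J^{(p)})$, $r-s$, $e_3(r)-e_3(s)$ are small with respect to $\dg$ by hypothesis, integrating the transport ODE from ${\S}_0$ displaces the modes by at most $O(r\dg)$, so the bound \eqref{eq:condition-J-intermediate-widetilde-J-thm} is preserved. Second, and this is the crux, $\Theta$ is a contraction. This requires Lipschitz control, in the input $\widetilde{J}$, of every geometric object feeding the transport equation: the deformation maps $\Phi^{\widetilde{J}}$, the adapted frames, and hence $\nu^{\widetilde{J}}$, through the dependence of the mass-centered sphere on its $\ell=1$ basis. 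These are the quantitative dependencies packaged in Theorem \ref{thm:GCM-spheres-J} and in the differentiability statement of Proposition \ref{prop:diff-wrt-La-Lab}; in particular the sensitivity of the solved parameter $\Lambda=\Lambda(\underline{\Lambda},\widetilde{J})$ to the basis, through the mass-centered equation \eqref{eq:definition-Lambda}, enters here and carries the small factors $\epg,\dg$. Running the contraction in one fewer derivative if necessary, one obtains a Lipschitz constant $O(\epg+\dg)\ll1$, and the Banach fixed-point theorem yields the unique $\widetilde{J}\in\mathcal{B}$ solving \eqref{eq:conditions-Jtilde}.

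The main obstacle I anticipate is not the transport step itself, which is a linear ODE, but the contraction estimate: one must differentiate the \emph{entire} mass-centered GCM sphere construction with respect to the $\ell=1$ basis $\widetilde{J}$, not merely with respect to the parameters $(\Lambda,\underline{\Lambda})$ as in Proposition \ref{prop:diff-wrt-La-Lab}. Since the basis enters both the mass-centered projection $\bm{C}_{\ell=1,\widetilde{J}}$ used to fix $\Lambda$ and the elliptic operators in the GCM system \eqref{eq:KS-GCM}, showing that all these dependencies are Lipschitz with a uniformly small constant is the technical heart of the argument, and is precisely where the smallness of $\dg$, $\epg$ and the shortness of $\mathring{I}$ are essential.
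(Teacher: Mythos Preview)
Your proposal is correct and follows essentially the same Banach fixed-point strategy that the paper invokes (citing Theorem 4.10 of \cite{shenConstructionGCMHypersurfaces2023}): a contraction map on triplets on $\Sigma_{\#}$ built by transporting along $\nu^{\widetilde{J}}$ and pulling back. One clarification worth noting: the paper explicitly remarks that the argument is \emph{independent of the particular GCM choice on $\Sigma$}, so the Lipschitz dependence you need is only that of the generic sphere construction (deformation maps, adapted frames), and the mass-centered equation \eqref{eq:definition-Lambda} fixing $\Lambda$ is not the crux you anticipate---it enters only through the uniform size bounds on $(f,\fb,\ovla)$ already furnished by Theorem \ref{thm:GCM-spheres-J}.
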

\begin{proof}
  See the proof of Theorem 4.10 in \cite{shenConstructionGCMHypersurfaces2023}. The proof is a Banach fixed-point argument through a contraction map $T$ which sends a triplet of functions $\widetilde{J}$ on $\Sigma_{\#}$ to another triplet of functions $T(\widetilde{J})$ on $\Sigma_{\#}$. The fixed point of $T$ satisfies \eqref{eq:conditions-Jtilde}.
  Notice that the proof of this lemma is independent of the GCM choice on $\Sigma$.
\end{proof}

By Lemma \ref{lemma:construction-Jtilde}, we define a triplet of functions $\widetilde{J}^{(p)}$ and a hypersurface
\begin{align}\label{eq:definition-Sigma}
  \Sigma \vcentcolon= \bigcup_{s \in \mathring{I}} {\S}[\Psi(s),s, \underline{\Lambda}(s),\widetilde{J}^{(p)}]= \bigcup_{s \in \mathring{I}} {\S}[\Psi(s),s, \underline{\Lambda}(s)].
\end{align}
\begin{remark}
  Notice that the hypersurface $\Sigma$ defined above depends on $\Psi$ and $\underline{\Lambda}$ which have not been fixed yet. They will be chosen to satisfy a system of ODEs.
\end{remark}

\begin{lemma}[Proposition 4.13 in \cite{shenConstructionGCMHypersurfaces2023}] The hypersurface $\Sigma$ constructed above is a smooth hypersurface.
\end{lemma}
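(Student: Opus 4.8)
The plan is to realize $\Sigma$ as the image of a single smooth parametrization and to check that its differential has full rank. For each $s\in\mathring{I}$ the mass-centered GCM sphere ${\S}[\Psi(s),s,\underline{\Lambda}(s)]$ is produced in Theorem \ref{thm:GCM-spheres-J} as a deformation $\Phi^{\widetilde{J}}(s):S(\Psi(s),s)\to{\S}$, which in the adapted angular coordinates $(y^1,y^2)$ of Definition \ref{definition:Deformations} has the form $(y^1,y^2)\mapsto(\Psi(s)+U_s(y),\,s+S_s(y),\,y^1,y^2)$. I would therefore define, on each of the two coordinate charts $\RR_N,\RR_S$, the map
\[
\mathcal{F}(s,y^1,y^2)\vcentcolon=\bigl(\Psi(s)+U_s(y^1,y^2),\ s+S_s(y^1,y^2),\ y^1,y^2\bigr),
\]
so that $\Sigma=\mathcal{F}(\mathring{I}\times\SSS^2)$, and reduce the lemma to showing that $\mathcal{F}$ is a smooth immersion which is injective for $\dg$ small.

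Smoothness in the angular directions is immediate: for fixed $s$ the functions $(U_s,S_s,f_s,\fb_s,\ovla_s)$ solve the elliptic--transport GCM system \eqref{eq:KS-GCM} with smooth right-hand sides, so elliptic regularity upgrades the $\mathfrak{h}_{s_{\max}+1}$ bounds of Theorem \ref{thm:GCM-spheres-J} to $C^\infty$ smoothness on each deformed sphere. The crux is smoothness in $s$, that is, differentiability of $s\mapsto(U_s,S_s,f_s,\fb_s,\ovla_s)$. The $s$-dependence enters through three smooth channels: the base sphere $S(\Psi(s),s)$, which varies smoothly since $\Psi$ is smooth and the background foliation is smooth; the parameter $\underline{\Lambda}(s)$, smooth by hypothesis; and the basis $\widetilde{J}(s)$, smooth by Lemma \ref{lemma:construction-Jtilde}. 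I would then differentiate \eqref{eq:KS-GCM} in $s$ and invert the linearized GCM operator exactly as in the parameter-differentiation argument behind Proposition \ref{prop:diff-wrt-La-Lab}: the leading operators $\div^{\S}$, $\curl^{\S}$ and $\Delta^{\S}+2/(r^{\S})^2$ are uniformly invertible on the relevant mode projections, and the $s$-derivatives of the coefficients $h_i$ are controlled using the transversal regularity \eqref{eq:control-trch-trchb-mu} (phrased through $\widetilde{\dk}$), together with the assumed smallness of $e_3(J^{(p)})$, $r-s$ and $e_3(r)-e_3(s)$, which is precisely what makes $\partial_s$ along $\Sigma$ behave like a controlled $\widetilde{\dk}$-derivative. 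Iterating in the number of $s$-derivatives then yields full $C^\infty$ dependence.

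To see that $\mathcal{F}$ is an immersion I would compute $d\mathcal{F}$ in the $(u,s,y^1,y^2)$ coordinates. The angular columns $\partial_{y^1}\mathcal{F},\partial_{y^2}\mathcal{F}$ carry the identity block in the $(y^1,y^2)$ entries and span the tangent plane to each sphere, while the transverse column is
\[
\partial_s\mathcal{F}=\bigl(\Psi'(s)+\partial_s U_s,\ 1+\partial_s S_s,\ 0,\ 0\bigr),
\]
whose $s$-component equals $1+O(\dg)\neq0$. Since this column vanishes in the $(y^1,y^2)$ entries, it is independent of the angular columns and $d\mathcal{F}$ has rank $3$ everywhere. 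Because the whole construction is a uniform $O(\dg)$ deformation of the smooth background family $\{S(\Psi(s),s)\}_{s\in\mathring{I}}$, for $\dg$ sufficiently small the spheres ${\S}[\Psi(s),s]$ remain mutually disjoint along $\mathring{I}$, so $\mathcal{F}$ is injective and $\Sigma$ is a smooth embedded hypersurface.

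I expect the $s$-differentiability of the fixed-point solution of \eqref{eq:KS-GCM} to be the main obstacle: because each GCM sphere arises from an implicit construction rather than a closed formula, differentiable dependence on the moving base point requires uniform invertibility of the linearized GCM operator along the entire family and careful bookkeeping of the $s$-derivatives of the charge-dependent coefficients that distinguish the mass-centered system from the vacuum one. Structurally, however, these extra terms are lower order in the charge and enter smoothly, so the argument of Proposition 4.13 in \cite{shenConstructionGCMHypersurfaces2023} carries over with only the modifications already recorded in Theorem \ref{thm:GCM-spheres-J} and Proposition \ref{prop:diff-wrt-La-Lab}.
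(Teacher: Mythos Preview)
Your proposal is essentially correct and follows the standard route one would expect for this result; the paper itself gives no proof here, simply invoking Proposition~4.13 of \cite{shenConstructionGCMHypersurfaces2023}, and your sketch is a faithful outline of how that argument proceeds: parametrize $\Sigma$ via the deformation maps, establish regularity in $s$ by differentiating the GCM system and appealing to uniform invertibility of the linearized operator (as in Proposition~\ref{prop:diff-wrt-La-Lab}), and then check rank and injectivity. One minor overclaim: elliptic regularity cannot upgrade the $\mathfrak{h}_{s_{\max}+1}$ control to genuine $C^\infty$ smoothness, since the background data in {\bf A1}--{\bf A4} are only assumed regular to order $s_{\max}$; the hypersurface is smooth only up to the regularity allowed by the data, which is all that is needed and all that ``smooth'' means in this context.
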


According to the above construction, there is an ${\S}$-adapted null frame $(e_3^{\S}, e_4^{\S}, e_1^{\S}, e_2^{\S})$ on every mass-centered GCM sphere ${\S}[\Psi(s),s, \underline{\Lambda}(s)]$. We also assume the transversality conditions:
\begin{align*}
  \xi^{\S}=0, \qquad \om^{\S}=0, \qquad \etab^{\S}=-\ze^{\S}, \qquad e_4^{\S}(u^{\S})=0, \qquad e_4^{\S}(r^{\S})=1 \qquad \text{on $\Sigma$}
\end{align*}
where $u^{\S}$ is defined as $u^{\S}\vcentcolon= c_0 - r^{\S}$ on $\Sigma$ and $u^{\S}|_{{\S}_0}=\mathring{u}$. We also define 
\begin{align*}
  z^{\S}=e_3^{\S}(u^{\S}), \qquad \underline{\Omega}^{\S}=e_3^{\S}(r^{\S}),
\end{align*}
and the renormalized quantities 
\begin{align*}
  \widecheck{z}^{\S}\vcentcolon= z^{\S}-2, \qquad \widecheck{\underline{\Omega}}^{\S}\vcentcolon= \underline{\Omega}^{\S}+\Up^{\S}.
\end{align*}

We denote 
\begin{align*}
  J^{({\S}, p)}\vcentcolon= \big((\Phi^{\widetilde{J}})^{-1} \big)^{\#} \widetilde{J}^{(p)}
\end{align*}
the family of triplets on $\Sigma$. In particular, \eqref{eq:conditions-Jtilde} implies 
\begin{align}\label{eq:transport-J}
  \nu^{\S}( J^{({\S}, p)}) =0, \qquad J^{({\S}, p)}|_{{\S}_0}=J_0^{(p)},
\end{align}
where $\nu^{\S}$ is the unique vectorfield tangent to $\Sigma$ with $\g(\nu^{\S}, e_4^{\S})=-2$ and normal to ${\S}$.

\subsubsection{Step 3: Estimate the frame transformation coefficients}

Let $F=(f, \fb, \ovla)$ be the transition parameters of the frame transformation from the background frame $(e_3, e_4, e_1, e_2)$ to the adapted frame $(e_3^{\S}, e_4^{\S}, e_1^{\S}, e_2^{\S})$.

\begin{proposition}\label{prop:estimates-frame-coefficients}
  We have the following estimate:
  \begin{align}\label{eq:control-F}
    \| F\|_{\mathfrak{h}_{s_{\max}+1}({\S})} + \| \nab^{\S}_{\nu^{\S}} F\|_{\mathfrak{h}_{s_{\max}}({\S})} \les \mathring{\de}+ \Big|(\div^{\S}\xib^{\S})_{\ell=1, J^{({\S})}}\Big|+\Big| \overline{b^{\S}}+1+\frac{2M^{\S}}{r^{\S}}-\frac{(Q^{\S})^2}{(r^{\S})^2}\Big|.
  \end{align}
\end{proposition}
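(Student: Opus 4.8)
The plan is to follow the a priori control of the frame coefficients along a GCM hypersurface developed in \cite{shenConstructionGCMHypersurfaces2023}, with the essential modification that in the mass-centered setting the mode $(\div^\S f)_{\ell=1}$ is pinned sphere-wise by the vanishing center-of-mass condition rather than propagated by a transport equation. Concretely, I would split the desired estimate into a sphere-wise bound for $F=(f,\fb,\ovla)$ on each leaf ${\S}={\S}[\Psi(s),s,\underline{\Lambda}(s)]$ and a bound for its transported derivative $\nab^\S_{\nu^\S}F$ along $\Sigma$, and I would further decompose each of these according to the $\ell=0$, $\ell=1$ and $\ell\geq 2$ projections with respect to $J^{({\S})}$.

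For the sphere-wise part, on each leaf the coefficients $F$ solve the modified solvable GCM system \eqref{eq:KS-GCM} with source terms $(h_1,\underline h_1,h_2,\underline h_2,h_3,h_4)$ obeying \eqref{eq:hhhhh}. Proposition \ref{prop-a-priori-estimates-f-fb} then reduces $\|F\|_{\mathfrak{h}_{s_{\max}+1}({\S})}$ to controlling those sources together with $|\Lambda|+|\underline{\Lambda}|+|b_0|$. The sources are $O(\mathring{\de})$ by the background hypotheses \eqref{eq:control-trch-trchb-mu}; the parameter $\underline{\Lambda}=(\div^\S\fb)_{\ell=1,J^{({\S})}}$ is $O(\mathring{\de})$ by assumption, and $\Lambda=(\div^\S f)_{\ell=1,J^{({\S})}}=\Lambda(\underline{\Lambda})$ is the value forced by $\bm{C}^\S_{\ell=1}=0$, hence controlled by the fixed-point analysis of Theorem \ref{thm:GCM-spheres-J} (cf. \eqref{eq:estimate-Lambda}) and the smallness of $\bm{C}_{\ell=1,J}$ in the background.

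For the $\nu$-derivative I would differentiate the system \eqref{eq:KS-GCM} along $\nu^\S=e_3^\S+b^\S e_4^\S$, writing $\nab^\S_{\nu^\S}=\nab^\S_{e_3^\S}+b^\S\nab^\S_{e_4^\S}$ and using the $e_4^\S$-derivative assumptions \eqref{eq:assumption-Einstein} together with the transversality conditions \eqref{eq:transversality-condition} to absorb the $e_4^\S$ contributions. This yields a coupled elliptic system on each sphere for $\nab^\S_{\nu^\S}F$ whose sources are controlled by the $\widetilde{\dk}$-derivatives of the background GCM quantities appearing in \eqref{eq:control-trch-trchb-mu}, together with $\underline{\Lambda}'(s)$ and $\Psi'(s)$, all $O(\mathring{\de})$ (the differentiability in the parameters being quantified in Proposition \ref{prop:diff-wrt-La-Lab}). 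The two delicate modes are $\ell=1$ and $\ell=0$. For $\ell=1$, the transformation law for $\xib$ in Lemma \ref{lem:EM-transform}, namely $\xib^\S=\tfrac12\la\,\nab_3^\S\fb+\tfrac14\trchb\,\fb+\cdots$, ties $(\div^\S\nab^\S_{\nu^\S}\fb)_{\ell=1}$ to $(\div^\S\xib^\S)_{\ell=1,J^{({\S})}}$ up to controllable terms, which is precisely why that quantity enters the right-hand side. For $\ell=0$, the average transport speed $\nu^\S(r^\S)=\Omb^\S+b^\S$ governs the evolution of the area radius and of $\overline{\ovla}$; its deviation from the prescribed Reissner--Nordström normalization \eqref{eq:condition-overline-b} is measured by the defect $\overline{b^\S}+1+\tfrac{2M^\S}{r^\S}-\tfrac{(Q^\S)^2}{(r^\S)^2}$, producing the last term.

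The main obstacle will be closing the $\ell=1$ analysis. Unlike \cite{shenConstructionGCMHypersurfaces2023}, where imposing $(\div\eta)_{\ell=1}=0$ and $(\div\xib)_{\ell=1}=0$ simultaneously fixes both $(\div f)_{\ell=1}$ and $(\div\fb)_{\ell=1}$ through a two-dimensional ODE, here $(\div^\S f)_{\ell=1}$ is already determined sphere-wise by the center-of-mass condition, so one must check that the reduced elliptic--transport system for $F$ and $\nab^\S_{\nu^\S}F$ remains determined and that the $\ell=1$ coupling between the mass-centered constraint and the $\xib$ transport does not degenerate. Tracking the $\ell=1$ mode of $\nab^\S_{\nu^\S}\fb$ through the $\xib$ transformation, while simultaneously controlling the charge-dependent lower-order terms — which vanish as $Q\to 0$ and thereby recover the estimate of \cite{shenConstructionGCMHypersurfaces2023} — is the technical heart of the argument; the remaining $\ell\neq 1$ projections and the higher-order norms then follow by the elliptic and commutator estimates of \cite{shenConstructionGCMHypersurfaces2023} applied mode by mode.
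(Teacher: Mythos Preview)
Your plan is correct and essentially matches the paper's proof, which in turn follows Proposition 4.22 of \cite{shenConstructionGCMHypersurfaces2023}: sphere-wise control from Theorem \ref{thm:GCM-spheres-J}/Proposition \ref{prop-a-priori-estimates-f-fb}, commutation of the GCM system \eqref{eq:KS-GCM} with $\nu^\S$, the $\ell=1$ mode of $\nab^\S_{\nu^\S}\fb$ tied to $(\div^\S\xib^\S)_{\ell=1}$ via the $\xib$ transformation, and the $\overline{b^\S}$ defect entering through the commuted source terms. Two small points where the paper's execution differs from your outline: (i) rather than invoking Proposition \ref{prop:diff-wrt-La-Lab} and a chain rule through the parameter dependence, the paper derives \emph{direct} transport equations for $\nab^\S_{\nu^\S}(f,\fb,\la)$ from the transformation formulas for $\eta,\xib,\omb$ in Lemma \ref{lem:EM-transform} combined with the transversality conditions \eqref{eq:transversality-condition}, which is cleaner and avoids tracking the additional dependence on $\Psi(s)$ and $\widetilde J(s)$; (ii) the commutation of the $\ell=1$ projections with $\nu^\S$ relies essentially on $\nu^\S(J^{(\S,p)})=0$ from \eqref{eq:transport-J}, which you should make explicit.
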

\begin{proof}
  As a consequence of the transformation formulas for $\eta, \etab$, $\xi, \xib$, and $\om, \omb$ in Lemma \ref{lem:EM-transform} and the transversality condition \eqref{eq:transversality-condition}, we deduce that $f, \fb, \lambda$ satisfy:
  \begin{align}\label{eq:nabla-nu-F}
    \begin{split}
      \nab^{\S}_{\nu^{\S}} (f) &= 2 (\eta^{\S}- \eta) -\frac 1 2 \trch (\fb + b^{\S} f ) + 2 f \omb + F \c \Gab + \lot, \\
      \nab^{\S}_{\nu^{\S}} (\fb) &=2(\xib^{\S}-\xib) -\frac 1 2 \fb \trchb -2\omb( \fb -b^{\S} f) + b^{\S}\big(2\nab^{\S}\lambda -\frac 1 2 \fb \trch \big)+F \c \Gab +\lot,\\
      \nab^{\S}_{\nu^{\S}} (\lambda) &=2(\omb^{\S}-\omb)-2\omb \ovla +F \c \Gab + \lot,
    \end{split}
  \end{align}
  where here $\lot$ denotes terms which are linear in $\Gag$, $\Gab$ and linear and higher order in $F$.
  Using that (see Lemma 4.16 in \cite{shenConstructionGCMHypersurfaces2023}), 
  \begin{align*}
    e_a^{\S}(z^{\S})&=(\ze^{\S}_a-\eta^{\S}_a) z^{\S}, \\
    e_a^{\S}(\underline{\Omega}^{\S}) &=(\ze_a^{\S}-\eta_a^{\S})\underline{\Omega}^{\S} -\xib_a^{\S}, \\
    b^{\S}&=-z^{\S}-\underline{\Omega}^{\S},
  \end{align*}
  and combining with the identities in \eqref{eq:nabla-nu-F} we infer that $(z^{\S}-\overline{z^{\S}})-(z-\overline{z})$ and $(\underline{\Omega}^{\S}-\overline{\underline{\Omega}^{\S}})-(\underline{\Omega}-\overline{\underline{\Omega}})$ can be estimated by $\eta^{\S}- \eta$ and $\xib^{\S}-\xib$, and therefore by $\nab^{\S}_{\nu^{\S}} (f, \fb)$. Consequently, from $b^{\S}=-z^{\S}-\underline{\Omega}^{\S}$, the same bound holds true for $(b^{\S}-\overline{b^{\S}})-(b-\overline{b})$.

  Next we use use that, for any scalar function $h$ on $\RR$ and any $1\leq l \leq s$, see Lemma 4.19 in \cite{shenConstructionGCMHypersurfaces2023},
  \begin{align*}
    \| (\nu^{\S})^l h \|_{\mathfrak{h}_{s-l}({\S})} \les r \sup_{\RR}\Big(|\widetilde{\dk}^{\leq s} h|+\dg |\dk^{\leq s} h|\Big)+ \| (\nab^{\S}_{\nu^{\S}})^{\leq l-1}(F, b^{\S}-b)\|_{\mathfrak{h}_{s-l}({\S})}\sup_{\RR}|\dk^{\leq s} h|,
  \end{align*}
  where $\widetilde{\dk}\vcentcolon= (e_3-(z +\underline{\Omega})e_4, \slashed{\dk})$ and $b\vcentcolon=-z-\underline{\Omega}$. Applying the above to $\dot{\trch}$, $\dot{\trchb}$ and $\dot{\mu}$, using condition \eqref{eq:control-trch-trchb-mu} and their $\nab_4$ equations \eqref{eq:assumption-Einstein}, we infer for $0 \leq l \leq s_{\max}+1$
  \begin{align*}
    \| (\nu^{\S})^l (\dot{\trch}, \dot{\trchb}, r\dot{\mu}) \|_{\mathfrak{h}_{s_{\max}+1-l}({\S})}
    \les{}& r^{-1} \dg + r^{-1} \epg\Big| \overline{b^{\S}}+1+\frac{2M^{\S}}{r^{\S}}-\frac{(Q^{\S})^2}{(r^{\S})^2}\Big|\\
          &+ r^{-1} \mathring{\epsilon} \sum_{j=1}^l\|(\nab^{\S}_{\nu^{\S}} )^j F \|_{\mathfrak{h}_{s_{\max}+1-j}({\S})}.
  \end{align*}
  Finally, in view of \eqref{eq:ThmGCMS1}, for every ${\S}\subset \Sigma$ we have 
  \begin{align*}
    \| F\|_{\mathfrak{h}_{s_{\max}+1}({\S})}\les \dg.
  \end{align*}
  To derive the remaining tangential derivatives of $F$ along $\Sigma$, we commute the GCM system \eqref{eq:qequivalentGCMsystemwhicisnotsolvabledirectly:2:bis} where we impose the GCM conditions \eqref{def:GCMC:00}, and use \eqref{eq:transport-J}, i.e. $\nu^{\S}( J^{({\S}, p)}) =0$. Applying Proposition \ref{prop-a-priori-estimates-f-fb}, we infer 
  \begin{align*}
    \| \nab^{\S}_{\nu^{\S}} (f, \fb), \nu(\ovla) - \ov{\nu(\ovla)}^{\S}\|_{\mathfrak{h}_{s_{\max}}({\S})}
    \les{}& \dg + |(\div^{\S} \nab^{\S}_{\nu^{\S}} \fb)_{\ell=1, J^{(\S)}}|\\
          &+(\epg + r^{-1}) \big(\| \nab^{\S}_{\nu^{\S}}F\|_{\mathfrak{h}_{s_{\max}}({\S})}+r^{-1} \| \nu \mathring{b}\|_{\mathfrak{h}_{s_{\max}}({\S})}\big).
  \end{align*}
  To estimate the $\ell=1$ modes of $\nab^{\S}_{\nu^{\S}} \fb$, we make use of \eqref{eq:nabla-nu-F} to show that 
  \begin{align*}
    \int_{\S} \div^{\S} \nab^{\S}_{\nu^{\S}}( \fb ) J^{(\S, p)}= 2(\div^{\S} \xib^{\S})_{\ell=1, J^{(\S)}}+O(\dg).
  \end{align*}
  By combining the above and choosing $\epg$ small enough and $\mathring{r}$ large enough, we deduce \eqref{eq:control-F}.
  For more details see the proof of Proposition 4.22 in \cite{shenConstructionGCMHypersurfaces2023}.
\end{proof}

\subsubsection{Step 4: Derive and solve the ODE system for \texorpdfstring{$\Psi$}{}, \texorpdfstring{$\underline{\Lambda}$}{}}

Recall from \eqref{eq:definition-Sigma} that we have 
\begin{align}
  \Sigma = \bigcup_{s \in \mathring{I}} {\S}(s), \qquad {\S}(s)\vcentcolon={\S}[\Psi(s),s, \underline{\Lambda}(s)].
\end{align}
We denote
\begin{align*}
  r(s)\vcentcolon= r^{{\S}(s)}, \qquad \underline{B}(s)\vcentcolon= (\div^{{\S}(s)} \xib^{{\S}(s)})_{\ell=1, J^{({\S}(s))}}, \qquad D(s)\vcentcolon=\overline{b^{{\S}(s)}}+1+\frac{2M_{(0)}}{r^{{\S}(s)}}-\frac{(Q_{(0)})^2}{(r^{{\S}(s)})^2}.
\end{align*}

We prove the following proposition.

\begin{proposition}
  We have the following equations for the functions $D(s)$, $r(s)$ and the triplets $\underline{\Lambda}, \underline{B}$:
  \begin{align}
    \frac{1}{\psi'(s)-1}\underline{\Lambda}'(s) &= \underline{B}(s) + \underline{G}(\underline{\Lambda}, \psi)(s) + \underline{N}(\underline{B}, D, \underline{\Lambda}, \psi)(s), \label{eq:final-ODE-Lambda}\\
    \psi'(s)&= -\frac 1 2 D(s)+H(\underline{B}, \underline{\Lambda}, \psi)(s) + M(\underline{B}, D, \underline{\Lambda}, \psi)(s) \label{eq:final-ODE-psi},
  \end{align}
  where $\underline{G}$ is a $O(1)$-Lipschitz function of $(\underline{\Lambda}, \psi)$, $H$ is a $O(1)$-Lipschitz function of $(\underline{B}, \underline{\Lambda}, \psi)$ and $\underline{N}, M$ are $O(\epg^{\frac 1 2 })$-Lipschitz functions of $(\underline{B}, D, \underline{\Lambda}, \psi)$.
\end{proposition}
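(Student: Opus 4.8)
The plan is to convert the two defining features of the hypersurface --- the sphere-wise relation $u^\S + r^\S = c_0$ with the prescribed average \eqref{eq:condition-overline-b} for $b^\S$, and the identity $(\div^\S\fb)_{\ell=1,\widetilde{J}} = \underline{\Lambda}(s)$ supplied by Theorem \ref{thm:GCM-spheres-J} --- into evolution equations in $s$. The central device is to decompose the $s$-parametrizing vectorfield $T_\Sigma \vcentcolon= \pr_s + \Psi'(s)\pr_u$ tangent to $\Sigma$ in the adapted frame as $T_\Sigma = \alpha\,\nu^\S + X$, where $X$ is tangent to $\S(s)$ and $\alpha$ is a scalar lapse. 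Writing $\pr_s = e_4$ and $\pr_u = \vsi(\tfrac12 e_3 - \tfrac12\Omb e_4 - \undB^a\pr_{y^a})$ and matching the $e_3$--component against $\nu^\S = e_3^\S + b^\S e_4^\S$ gives $\alpha = \tfrac12\Psi'\vsi + O(\dg) = \tfrac12(\psi'-1)+\ldots$, which is exactly the source of the prefactor $\tfrac1{\psi'-1}$ in \eqref{eq:final-ODE-Lambda}. Throughout I use the transport identities \eqref{eq:nabla-nu-F} for $(f,\fb,\ovla)$ and the estimate \eqref{eq:control-F}, together with $\nu^\S(J^{(\S,p)})=0$ from \eqref{eq:transport-J}.

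For \eqref{eq:final-ODE-psi}, the point is that requiring $X = T_\Sigma - \alpha\nu^\S$ be tangent to $\S(s)$ forces the $e_4$--components to match as well, which the $\pr_{y^a}$ term of $\pr_u$ (being itself tangent to the sphere) does not obstruct. This yields the scalar constraint $1 = \tfrac12\Psi'\vsi(b^\S + \Omb)$; inserting $\Omb = -\Up + O(\dg)$ and $\ov{b^\S} = D(s) - 2 + \Up^\S$ obtained from the definition of $D(s)$, the leading balance reads $1 = \tfrac12(\psi'-1)(\ov{b^\S} - \Up^\S)$, whose linearization in the small quantity $D$ gives $\psi' = -\tfrac12 D + \ldots$. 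The discrepancies between $\vsi,\Omb,\undB$ and their Reissner--Nordström values, the frame corrections of size $|F|\les\dg$, and the fluctuation $b^\S - \ov{b^\S}$ are collected into $H$ and $M$: the genuinely background/geometric pieces evaluated at the moving sphere depend on $(\underline{B},\underline{\Lambda},\psi)$ in an $O(1)$-Lipschitz fashion and form $H$, whereas the quadratic-in-$F$ and $F\c\Ga$ contributions are $O(\epg^{1/2})$-Lipschitz and form $M$.

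For \eqref{eq:final-ODE-Lambda}, I differentiate $\underline{\Lambda}^{(p)}(s) = \int_{\S(s)} J^{(\S,p)}\,\div^\S\fb$ in $s$. Because the sphere moves along $\Sigma$ and $\nu^\S(J^{(\S,p)})=0$, the derivative equals $\alpha\int_{\S(s)} J^{(\S,p)}\,\nu^\S(\div^\S\fb)$ up to measure-variation terms (governed by the Schwarzschildian expansions of the GCM sphere) and tangential terms $\int_{\S(s)} X(\div^\S\fb)\,J^{(\S,p)}$ that integrate to controllable errors. Commuting $\nu^\S$ past $\div^\S$ and substituting the transport equation for $\fb$ in \eqref{eq:nabla-nu-F}, the term $2\,\div^\S\xib^\S$ produces the leading contribution, so that $\underline{\Lambda}'(s) = 2\alpha\,\underline{B}(s) + \ldots$ with $2\alpha = \psi'-1+\ldots$; the factor $2$ thus cancels and yields $\tfrac1{\psi'-1}\underline{\Lambda}' = \underline{B} + \ldots$. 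The background term $-2\,\div^\S\xib$ and the commutator terms are assembled into the $O(1)$-Lipschitz $\underline{G}(\underline{\Lambda},\psi)$, and the $F\c\Gab$-type terms into the $O(\epg^{1/2})$-Lipschitz $\underline{N}$.

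The main obstacle will be the bookkeeping of the Lipschitz dependence on $(\underline{B},D,\underline{\Lambda},\psi)$ rather than the leading-order identification, which is algebraic. Every object entering the two identities --- the deformation $(U,S)$, the transition coefficients $F$, the area radius $r^\S$, the propagated basis $\widetilde{J}$, and hence the modes $\underline{B}(s)=(\div^\S\xib^\S)_{\ell=1,\widetilde{J}}$ and the average $D(s)$ --- depends on the parameters through the fixed-point constructions of Theorem \ref{thm:GCM-spheres-J} and Lemma \ref{lemma:construction-Jtilde}. Establishing that these dependences are Lipschitz with the stated constants requires differentiating those constructions in the parameters, for which the differentiability statement of Proposition \ref{prop:diff-wrt-La-Lab} (and its analogue along $\Sigma$) is the key input; the separation into $O(1)$- and $O(\epg^{1/2})$-Lipschitz pieces then follows by tracking powers of $F$ and $\Ga$ exactly as in the corresponding estimates of \cite{shenConstructionGCMHypersurfaces2023}.
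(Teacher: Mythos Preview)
Your approach is essentially the one the paper takes, and the leading-order identifications you make for both equations are correct. There are, however, two points where your bookkeeping diverges from the paper's and would need adjustment.

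First, on the structure of $\underline{G}$: you assign the background term $-2\,\div^\S\xib$ to $\underline{G}$, but the paper treats $\int_{\S}(\div^\S\xib)J^{(\S,p)}$ as a ``Good'' term (i.e., $O(\epg)$ with $O(\epg^{1/2})$-Lipschitz dependence), hence part of $\underline{N}$. The actual content of $\underline{G}$ is the explicit expression $-\tfrac{1}{2}r^{-1}\underline{\Lambda} + O(r^{-2})\underline{\Lambda}$, which arises from combining (i) the measure-variation term $-\tfrac{4}{r}\underline{\Lambda}$ in the identity for $\nu^\S|_{\gamma}\underline{\Lambda}$ (the paper invokes Shen's Lemma 4.34 here), (ii) the Schwarzschildian values of $\trchb$, $\trch$, $b^\S$ in the $\fb$-transport equation \eqref{eq:nabla-nu-F}, which contribute $+\tfrac{4}{r}\underline{\Lambda}$ at leading order, and (iii) the term $-2\int_\S(\Delta^\S\ovla)J^{(\S,p)}$, which the paper evaluates separately via the $\trch$ transformation formula to get $\tfrac{1}{2r}\underline{\Lambda}$. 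You mention ``commutator terms'' but do not isolate this $\Delta^\S\ovla$ contribution, which requires its own computation and is precisely where the mass-centered condition (fixing $\Lambda$ as a function of $\underline{\Lambda}$ via \eqref{eq:estimate-Lambda}) is used.

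Second, the paper evaluates the $s$-derivative along the curve $\gamma(s)$ of South poles (where the deformation $(U,S)$ is normalized to vanish), using $X|_{\gamma(s)} = C(s)\,\nu^\S|_{\gamma(s)}$ with $C(s) = \tfrac{\lambda}{z}\Psi'|_{\gamma} + F\cdot\Gab + O(F^2)$; this is how the factor $\tfrac{C(s)}{\Psi'(s)} = \tfrac12 + O(\epg)$ arises cleanly. Your global decomposition $T_\Sigma = \alpha\,\nu^\S + X$ with $X$ tangent to $\S$ is equivalent in spirit, but the tangential piece $\int_\S X(\div^\S\fb)J^{(\S,p)}$ then needs to be handled, and the paper sidesteps this by working pointwise at the South pole and quoting Shen's Lemma 4.34 for the integral identity. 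For the $\psi'$ equation the paper simply refers to Proposition 4.30 of \cite{shenConstructionGCMHypersurfaces2023} rather than rederiving it; your $e_4$-matching argument is on the right track, but the constraint you write, $1 = \tfrac12\Psi'\vsi(b^\S+\Omb)$, does not yet account for the frame change $e_4 \to e_4^\S$ and the passage from $r$ to $r^\S$, so those details would need tightening.
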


\begin{proof}
  We start by computing the derivative with respect to $s$ of $\underline{\Lambda}$, which is given (see Lemma 4.33 in \cite{shenConstructionGCMHypersurfaces2023}) for a function $h$ defined on the curve of the South poles $\gamma(s)\vcentcolon=(\Psi(s), s, 0, 0)$ by 
  \begin{align*}
    X \big|_{\gamma(s)}(h)=C(s) \nu^{\S}\big|_{\gamma(s)}(h),
  \end{align*}
  where $C(s)=\frac{\lambda}{z}\Psi'(s)\big|_{\gamma(s)}+ F \c \Gab + O(F^2)$.

  We recall the following identity, see Lemma 4.34 in \cite{shenConstructionGCMHypersurfaces2023}, 
  \begin{align}\label{eq:nu-Lambda}
    \nu^{{\S}(s)}\big|_{\gamma(s)} \underline{\Lambda}(s) = \int_{{\S}(s)}\nu^{{\S}(s)}(\div^{{\S}(s)} \fb) J^{({\S}(s), p)}-\frac{4}{r(s)}\underline{\Lambda}(s)+\underline{E}(s),
  \end{align}
  where
  \begin{align*}
    \underline{E}(s)={}&\int_{{\S}(s)} \big( \frac{2}{r(s)}\widecheck{b}^{{\S}(s)}+\trchbc^{{\S}(s)} \big)(\div^{{\S}(s)} \fb) J^{({\S}(s), p)}\\
                    &+z^{{\S}(s)}\big|_{\gamma(s)} \int_{{\S}(s)}\Bigg[ \Big((z^{{\S}(s)})^{-1}- (z^{{\S}(s)})^{-1}|_{\gamma(s)}\Big) \times\\
                    &\Big(\nu^{{\S}(s)}(\div^{{\S}(s)} \fb)-\frac{4}{r(s)} \div^{{\S}(s)} \fb+\big( \frac{2}{r(s)}\widecheck{b}^{{\S}(s)}+\trchbc^{{\S}(s)} \big)(\div^{{\S}(s)} \fb)\Big)J^{({\S}(s), p)}\Bigg]\\
                    ={}& \mbox{Good},
  \end{align*}
  where the last relation is a consequence of Lemma 4.36 in \cite{shenConstructionGCMHypersurfaces2023}. Here, we denote $h=\mbox{Good}$ if $h=O(\epg)$ and $\partial_v h =O(\epg^{\frac 1 2})$ for any $h \in \{\underline{B}, D, \underline{\Lambda}, \psi \}$.

  To compute $\int_{{\S}(s)} \nu^{\S}(\div^{\S}\underline{f})J^{({\S}, p)}$ in the above,
  we start with the second identity in \eqref{eq:nabla-nu-F} and deduce
  \begin{align*}
    \int_{{\S}(s)} \nu^{\S}(\div^{\S}\underline{f})J^{({\S}, p)}
    ={}&2\underline{B}(s)-2\int_{{\S}(s)}(\div^{\S}\xib) J^{({\S}, p)}+\frac{4}{r}\underline{\Lambda}(s)\\
       &-2\int_{{\S}(s)} (\Delta^{\S}\lambda) J^{({\S}, p)} +O(r^{-2})(\Lambda(s) + \underline{\Lambda}(s)) +\mbox{Good}.
  \end{align*}
  Using that $\int_{{\S}(s)}(\div^{\S}\xib) J^{({\S}, p)}=\mbox{Good}$ and recalling that $\Lambda(s)=\int_{{\S}(s)}\div^{\S}f J^{({\S}, p)}$ is uniquely determined by Proposition \ref{prop:C-transform}
  for every $\underline{\Lambda}$ satisfying the assumptions, and satisfies \eqref{eq:estimate-Lambda}, we deduce 
  \begin{align*}
    \int_{{\S}(s)} \nu^{\S}(\div^{\S}\underline{f})J^{({\S}, p)}&=2\underline{B}(s)+\frac{4}{r}\underline{\Lambda}(s)-2\int_{{\S}(s)} \Delta^{\S}\ovla J^{({\S}, p)} +O(r^{-2})\underline{\Lambda}(s) +\mbox{Good}.
  \end{align*}
  Now, from the transformation formula of $\trch$ in Lemma \ref{lem:EM-transform}, we deduce
  \begin{align*}
    \int_{{\S}(s)}(\Delta^{\S}\ovla )J^{({\S}, p)}={}& \frac{1}{r^{\S}} \int_{{\S}(s)}\Delta^{\S}(r- r^{\S} )J^{({\S}, p)}-\frac{r^{\S}}{2} \int_{{\S}(s)}\Delta^{\S}\trchc J^{({\S}, p)}\\
                                                  &- \frac{r^{\S}}{2} \int_{{\S}(s)}\Delta^{\S}\div^{\S}(f) J^{({\S}, p)}+ \mbox{Good},
  \end{align*}
  from which we deduce, using again \eqref{eq:estimate-Lambda}, 
  \begin{align*}
    \int_{{\S}(s)} \Delta^{\S}\ovla J^{({\S}, p)}&=\frac{1}{2r}\underline{\Lambda}(s) +\mbox{Good}.
  \end{align*}
  We therefore obtain 
  \begin{align}\label{eq:nu-divifb}
    \int_{{\S}(s)} \nu^{\S}(\div^{\S}\underline{f})J^{({\S}, p)}&=2\underline{B}(s)+\frac{3}{r}\underline{\Lambda}(s) +O(r^{-2}) \underline{\Lambda}(s) +\mbox{Good}.
  \end{align}
  Plugging \eqref{eq:nu-divifb} in \eqref{eq:nu-Lambda}
  we obtain
  \begin{align*}
    \frac{1}{\Psi'(s)}\underline{\Lambda}'(s)=\frac{C(s)}{\Psi'(s)}\left(2\underline{B}(s)-r^{-1}\underline{\Lambda} +O(r^{-2})\underline{\Lambda}+\mbox{Good}\right).
  \end{align*}
  Using that $\frac{C(s)}{\Psi'(s)}=\frac 1 2 +O(\mathring{\epsilon})$ and that $\Psi(s)=-s+\psi(s)+c_0$, we obtain
  \begin{align*}
    \frac{1}{-1+\psi'(s)}\underline{\Lambda}'(s)=\underline{B}(s)+G(\underline{\Lambda}, \psi)(s)+ \underline{N}(\underline{B}, D, \underline{\Lambda}, \psi)(s),
  \end{align*}
  where 
  \begin{align*}
    G(\underline{\Lambda}, \psi)(s)=-\frac 1 2 r^{-1}\underline{\Lambda} +O(r^{-2})\underline{\Lambda}, \qquad \underline{N}=\mbox{Good}.
  \end{align*}

  Finally, the equation for $\psi$ is derived in the same way as in Proposition 4.30 in \cite{shenConstructionGCMHypersurfaces2023}.
\end{proof}

To obtain the GCM hypersurface $\Sigma_0$ and conclude the proof of Theorem \ref{thm:GCMH} by solving the closed system of equations \eqref{eq:final-ODE-Lambda}-\eqref{eq:final-ODE-psi} with initial conditions
\begin{align}
  \psi(\sg)=0, \qquad \underline{\Lambda}(\sg)=\underline{\Lambda}_0,
\end{align}
and the choice $\underline{B}=D=0$, i.e. 
\begin{align}
  \frac{1}{\breve{\psi}'(s)-1}\breve{\underline{\Lambda}}'(s) &= \underline{G}(\breve{\underline{\Lambda}}, \breve{\psi})(s) + \breve{\underline{N}}(\breve{\underline{\Lambda}}, \breve{\psi})(s), \label{eq:final-ODE-Lambda-hat}\\
  \breve{\psi}'(s)&=\breve{H}(\breve{\underline{\Lambda}}, \breve{\psi})(s) + \breve{M}(\breve{\underline{\Lambda}}, \breve{\psi})(s) \label{eq:final-ODE-psi-hat}, \\
  \breve{\psi}(\sg)&=0, \\
  \breve{\underline{\Lambda}}(\sg)&=\underline{\Lambda}_0, 
\end{align}
where $\breve{\underline{N}}(\breve{\underline{\Lambda}}, \breve{\psi})(s)\vcentcolon=\underline{N}(0, 0, \underline{\Lambda}, \psi)(s)$, $\breve{H}(\breve{\underline{\Lambda}}, \breve{\psi})(s)\vcentcolon=H(0, \underline{\Lambda}, \psi)(s)$, $\breve{M}(\breve{\underline{\Lambda}}, \breve{\psi})(s)\vcentcolon=M(0, 0, \underline{\Lambda}, \psi)(s)$.

Applying the Cauchy-Lipschitz theorem, we deduce that the system admits a unique solution $(\breve{\underline{\Lambda}}, \breve{\psi})$ defined in a small neighborhood $\mathring{I}$ of $\sg$ which satisfies $|\breve{\underline{\Lambda}}(s), \breve{\psi}(s)|\les \dg$.

The desired hypersurface $\Sigma_0$ is given from this solution $(\breve{\underline{\Lambda}}(s), \breve{\psi}(s))$ by 
\begin{align}
  \Sigma_0 \vcentcolon= \bigcup_{s \in \mathring{I}}{\S}[\breve{\Psi}(s),s, \breve{\underline{\Lambda}}(s)].
\end{align}
By subtracting the system \eqref{eq:final-ODE-Lambda-hat}-\eqref{eq:final-ODE-psi-hat} from the one \eqref{eq:final-ODE-Lambda}-\eqref{eq:final-ODE-psi} applied to $(\breve{\underline{\Lambda}}, \breve{\psi})$ we deduce that $\underline{B}=D=0$ on $\Sigma_0$.

Finally, applying Proposition \ref{prop:estimates-frame-coefficients} to $\Sigma_0$ we deduce 
\begin{align}
  \| F\|_{\mathfrak{h}_{s_{\max}+1}({\S})} + \| \nab^{\S}_{\nu^{\S}} F\|_{\mathfrak{h}_{s_{\max}}({\S})} \les \mathring{\de}.
\end{align}
By construction, the statements 1--5 of Theorem \ref{thm:GCMH} are satisfied by $\Sigma_0$.
Using the transversality conditions \eqref{eq:transversality-condition} and the transformation formulas for $\xi$, $\ze$, $\etab$, $\om$ in Lemma \ref{lem:EM-transform}, we finally deduce \eqref{eq:bounds-f-fb-la}. 

\printbibliography

\end{document}
